\newtheorem{Prop}{Proposition}
\newtheorem{lem}{Lemma}
\newtheorem{cor}{Corollary}
\newtheorem{thm}{Theorem}
\newtheorem{Definition}{Definition}
\newtheorem{Example}{Example}
\begin{document}

\title{Divergence and Sufficiency for Convex Optimization}
\author{Peter Harremo{\" e}s}
\maketitle

\abstract{Logarithmic score and information divergence appear in
information theory, statistics, statistical mechanics, and
portfolio theory. We
demonstrate that all these topics involve some kind of
optimization that leads directly to regret functions and such regret functions are often given by a Bregman
divergence. If the regret function
also fulfills a sufficiency condition it must be proportional to
information divergence. We will demonstrate that sufficiency is
equivalent to the
apparently weaker notion of locality and it is also equivalent
to the apparently stronger notion of monotonicity. These
sufficiency conditions have
quite different relevance in the different areas of application,
and often they are not fulfilled. Therefore sufficiency
conditions can be used to
explain when results from one area can be transferred directly
to another and when one will experience differences.}

\section{Introduction}

One of the main purposes of information theory is to compress
data so that data can be recovered exactly or approximately. One of
the most important quantities was called entropy because it is
calculated according to a formula that mimics the calculation of entropy in
statistical mechanics. Another key concept in information theory is
information divergence (KL-divergence) that is defined for probability vectors $P$ and $Q$ as
\[
D\left(P\Vert Q\right)=\sum_x P(x)\ln\frac{P(x)}{Q(x)}.
\]
It was introduced by Kullback and
Leibler in 1951 in a paper entitled information and sufficiency
\cite{Kullback1951}. The link
from information theory back to statistical physics was
developed by E.T. Jaynes via the maximum entropy principle \cite{Jaynes1957,Jaynes1989}. The link back
to statistics is now well established
\cite{Liese1987,Barron1998,Csiszar2004,Grunwald2004a,Grunwald2007}.

Related quantities appear in information theory, statistics,
statistical mechanics, and finance, and we are interested in a
theory that describes when these relations are exact and when they just work by
analogy. First we introduce some general results about optimization on
state spaces of finite dimensional C*-algebras. This part applies exactly to all the topics under
consideration and lead to Bregman divergences. Secondly, we introduce several notions of
sufficiency and show that this leads to information divergence.
This second step is not always applicable which explains when the
different topics are really different.

\section{Structure of the state space}

Our knowledge about a system will be represented by a state
space. I many cases the state space is given by a set of probability
distributions on a sample space. In such cases the state space is a simplex,
but it is well-known that the state space is not a simplex in
quantum physics. For applications in quantum physics the state space is
often represented by a set of density matrices, i.e. positive
semidefinite complex matrices with trace 1. In some cases the states are represented
as elements of a finite dimensional $C^{*}$-algebra, which is a direct sum of matrix
algebras. A finite dimensional $C^{*}$-algebra that is a sum of $1\times1$ matrices has a state space that is a simplex, so the state spaces of finite dimensional $C^{*}$-algebras contain the classical probability distributions as special
cases.

The extreme points in the set of states are the pure states. The
pure states of a $C^{*}$-algebra can be identified with projections
of rank 1. Two density matrices $s_{1}$ and $s_{2}$ are said to be
orthogonal if $s_{1}s_{2}=s_{2}s_{1}=0.$ Any state $s$ has a
decomposition
\[
s=\sum\lambda_{i}s_{i}
\]
where $s_{i}$ are orthogonal pure states. Such a decomposition
is not unique, but for a finite dimensional $C^{*}$-algebra the coefficients
$\lambda_{1},\lambda_{2},\dots,\lambda_{n}$
are unique and are called the spectrum of the state. 

Sometimes more general state spaces 
are of interest. In generalized probabilistic theories a state space is a convex set where
mixtures are defined by randomly choosing certain states with
certain probabilities \cite{Holevo1982,Krumm2016}. A convex set where all orthogonal
decompositions of a state have the same spectrum is called a spectral state
space. Much of the theory in this paper can be generalized to spectral sets.
The most important spectral sets are sets of positive elements with  trace 1
in Jordan algebras. For questions related to the foundation of
quantum theory the Jordan algebras and other spectral sets give new
insight \cite{Barnum2014,Harremoes2016d,Harremoes2017a},
but in this paper we will restrict our attention to states on
finite dimensional $C^{*}$-algebras. Nevertheless some of the
theorems and proofs are
stated in such a way that they hold for more general state
spaces.

\section{Optimization}

Let $\mathcal{S}$ denotes a state space of a finite dimensional $C^*$-algebra and let $\mathcal{A}$ denote a set of self-adjoint operators. Each $a\in\mathcal{A}$ is identified with a real valued measurement. The elements of $\mathcal{A}$ may represent feasible \emph{actions} 
(decisions) that lead to a payoff like the score of a
statistical decision, the energy extracted by a certain interaction with the
system, (minus) the length of a codeword of the next encoded input
letter using a specific code book, or the revenue of using a certain
portfolio. For each $s\in\mathcal{S}$ the mean value of the measurement $a\in\mathcal{A}$ is given by 
\[
\left\langle a,s\right\rangle =\rm{tr}(as).
\]
In this way the set of actions may be identified with a subset of the dual space of $\mathcal{S}$. Next we define 
\[
F\left(s\right)=\sup_{a\in\mathcal{A}}\left\langle
a,s\right\rangle .
\]
 We note that $F$ is convex, but $F$ need not be strictly convex. In principle $F(s)$ may be infinite but we will assume that $F(s)<\infty$ for all states $s$. We also note that $F$ is lower semi-continuous. In this paper we will assume that the function $F$ is continuous. The assumption that $F$ real valued continuous function is fulfilled for all the applications we consider. 

If $s$ is a state and $a\in\mathcal{A}$ is an action then we say that $a$ is {\em optimal} for $s$ if $\left\langle
a,s\right\rangle = F\left(s\right)$. A sequence of actions $a_n\in\mathcal{A}$  is said to be {\em asymptotically optimal} for the state $s$ if  $\left\langle a,s\right\rangle \to F\left(s\right)$ for $n\to \infty.$

If $a_{i}$ are actions and $\left(t_{i}\right)$ is a probability
vector then we we may define the mixed action $\sum t_{i}\cdot a_{i}$
as the action where we do the action $a_{i}$ with probability $t_{i}.$ We note
that $\left\langle \sum t_{i}\cdot a_{i},s\right\rangle =\sum t_{i}\cdot\left\langle a_{i},s\right\rangle .$
We will assume that all such mixtures of feasible actions are also
feasible. If $a_{1}\left(s\right)\geq a_{2}\left(s\right)$ almost
surely for all states we say that $a_{1}$ dominates $a_{2}$ and
if $a_{1}\left(s\right)>a_{2}\left(s\right)$ almost surely for all
states $s$ we say that $a_{1}$ strictly dominates $a_{2}.$ All
actions that are dominated may be removed from $\mathcal{A}$ without
changing the function $F.$  Let $\mathcal{A}_{F}$ denote the set of self-adjoint operators (observables) such that $\left\langle m,s\right\rangle \leq F\left(s\right).$
Then $F\left(s\right)=\sup_{a\in\mathcal{A}_{F}}\left\langle a,s\right\rangle .$
Therefore we may replace $\mathcal{A}$ by $\mathcal{A}_{F}$ without
changing the optimization problem.

In the definition of regret we follow Servage \cite{Servage1951}
but with different notation.
\begin{Definition}
Let $F$ denote a convex function on the state space
$\mathcal{S}$. If $F\left(s\right)$ is finite \emph{the regret}
of the action $a$ is defined by 
\begin{equation}
D_{F}\left(s,a\right)=F\left(s\right)-\left\langle
a,s\right\rangle .
\end{equation}
\end{Definition}
\begin{Prop}\label{prop:actionregret}
The regret $D_{F}$ of actions has the following properties:
\begin{itemize}
\item $D_{F}\left(s,a\right)\geq0$ with equality if $a$ is
optimal for $s$.
\item $s\to D_{F}\left(s,a\right)$ is a convex function.
\item If $\bar{a}$ is optimal for the state $\bar{s}=\sum
t_{i}\cdot s_{i}$
where $\left(t_{1},t_{2},\dots,t_{\ell}\right)$ is a probability vector then 
\[
\sum t_{i}\cdot D_{F}\left(s_{i},a\right)=\sum t_{i}\cdot
D_{F}\left(s_{i},\bar{a}\right)+D_{F}\left(\bar{s},a\right).
\]
 
\item $\sum t_{i}\cdot D_{F}\left(s_{i},a\right)$ is minimal if
$a$ is optimal for $\bar{s}=\sum t_{i}\cdot s_{i}$.
\end{itemize}
\end{Prop}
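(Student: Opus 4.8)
The plan is to check the four claims in order, observing that the third is an algebraic identity that immediately yields the fourth. For the first claim, note that every action $a$ under consideration satisfies $\langle a,s\rangle\le\sup_{a'\in\mathcal{A}}\langle a',s\rangle=F(s)$ (for $a\in\mathcal{A}_F$ this is the defining inequality), so $D_F(s,a)=F(s)-\langle a,s\rangle\ge0$; and if $a$ is optimal for $s$ then $\langle a,s\rangle=F(s)$ by definition, so $D_F(s,a)=0$. For the second claim, the map $s\mapsto\langle a,s\rangle=\mathrm{tr}(as)$ is affine in $s$, while $F$ is convex (being a supremum of affine functions), so $D_F(\cdot,a)=F-\langle a,\cdot\rangle$ is convex plus affine, hence convex.

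The heart of the matter is the third claim, and here the key is linearity of the pairing in its state argument, $\langle a,\bar{s}\rangle=\sum t_i\langle a,s_i\rangle$. Expanding the left-hand side gives $\sum t_i D_F(s_i,a)=\sum t_i F(s_i)-\langle a,\bar{s}\rangle$, while expanding the right-hand side gives $\sum t_i F(s_i)-\langle\bar{a},\bar{s}\rangle+F(\bar{s})-\langle a,\bar{s}\rangle$. Comparing, the two sides agree precisely when $\langle\bar{a},\bar{s}\rangle=F(\bar{s})$, which is exactly the hypothesis that $\bar{a}$ is optimal for $\bar{s}$. So the identity follows from this one-line computation together with the optimality assumption.

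For the fourth claim, I would apply the identity just proved, taking the optimal $\bar{a}$ to be the action $a$ that is optimal for $\bar{s}$: for any competing action $a'$,
\[
\sum t_i D_F(s_i,a')=\sum t_i D_F(s_i,a)+D_F(\bar{s},a')\ge\sum t_i D_F(s_i,a),
\]
since $D_F(\bar{s},a')\ge0$ by the first claim. Hence the minimum of $a\mapsto\sum t_i D_F(s_i,a)$ is attained at any action optimal for $\bar{s}$. The only point requiring a little care — and the closest this gets to an obstacle — is keeping track of which family of actions is in play ($\mathcal{A}$ versus $\mathcal{A}_F$) and whether an action optimal for $\bar{s}$ exists at all; but item four is stated conditionally, so existence is not needed, and the assumed continuity of $F$ ensures the relevant suprema are well behaved in the remaining arguments.
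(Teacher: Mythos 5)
Your proof is correct, and it is exactly the routine verification the paper has in mind: the paper states Proposition~\ref{prop:actionregret} without proof, treating all four items as immediate from the definition $D_F(s,a)=F(s)-\langle a,s\rangle$, the linearity of $s\mapsto\langle a,s\rangle$, and the convexity of $F$ as a supremum of affine functions. Your expansion showing that the third identity holds precisely because $\langle\bar a,\bar s\rangle=F(\bar s)$, and the deduction of the fourth item from the third plus nonnegativity, is the intended argument.
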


\begin{figure}
\begin{centering}

\begin{tikzpicture}[line width=1.2pt, line cap=round,line join=round,>=triangle 45,x=10.0cm,y=9.999999999999998cm, scale=0.8]
\draw[->,color=black] (0.,0.) -- (1.15,0.);
\foreach \x in {0,1}
\draw[shift={(\x,0)},color=black] (0pt,2pt) -- (0pt,-2pt) node[below] {$\x$};
\draw[->,color=black] (0.,0.) -- (0.,0.7);

\draw [color=green] (0.,0.5)-- (1.,0.375);

\draw [dash pattern=on 4pt off 4pt] (1.,0.)-- (1.,0.375);
\draw  (1.,0.375)-- (1.,0.625);

\draw[color=blue,smooth,samples=100,domain=0:1] plot(\x,{(\x)^2/4 - (\x)/8 + 1/2});

\draw (1.02,0.52) node[anchor=north west] {$D_F(s_1,s_0)$};
\draw  [color=blue] (0.04,0.7) node[anchor=north west] {$f(t)$};
\draw (1.0807407407407414,0.1) node[anchor=north west] {$t$};

\draw [line width=0.3pt, fill=blue] (0.,0.5) circle (2pt);
\draw [line width=0.3pt, fill=blue] (1.,0.625) circle (2pt);
\draw [fill=black] (1.,0.) circle (2pt);
\draw [line width=0.3pt, fill=green] (1.,0.375) circle (2pt);
\end{tikzpicture}
\par\end{centering}
\caption{The regret equals the vertical distance between curve
and tangent.\label{Fig:breg}}
\end{figure}
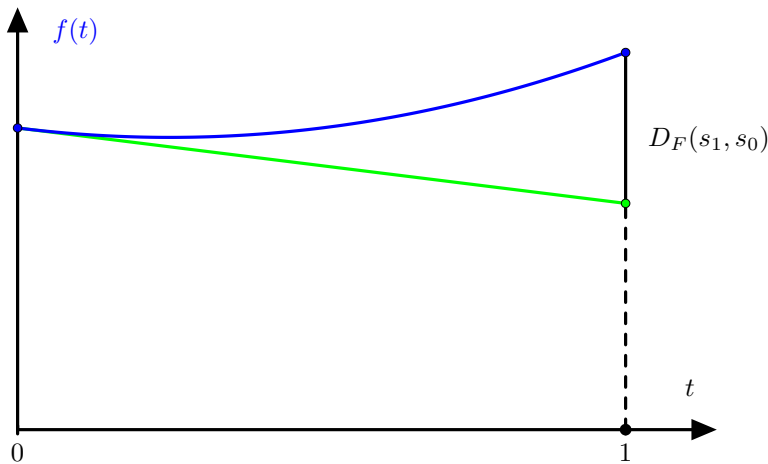

If the state is $s_{1}$ but one acts as if the state were
$s_{0}$ one may compare what one
achieves and what could have been achieved. If the state $s_0$ has a unique optimal action $a$ we may simply define the regret of $s_0$ by
\[
D_{F}\left(s_{1},s_{0}\right)=D_{F}\left(s_{1},a\right)
\]
The following definition leads to a regret function that is essentially equivalent to the so-called {\em generalized Bregman divergences} defined by Kiwiel \cite{Kiwiel1997,Kiwiel1997a}. 

\begin{Definition}\label{def:regret} 
Let $F$ denote a convex function on the
state space $\mathcal{S}$. If $F\left(s_{1}\right)$ is
finite then we define \emph{the regret of the state} $s_{0}$ as 
\[
D_{F}\left(s_{1},s_{0}\right)=\inf_{\left(a_n\right)}\lim_{n\to \infty}D_{F}\left(s_{1},a\right)
\]
where the infimum is taken over all sequences of actions $\left(a_n\right)$ that are asymptotically optimal for $s_{0}.$ 
\end{Definition}

With this definition the regret is always defined with values in $[0,\infty$. We note that with this definition the value of the regret $D_{F}\left(s_{1},s_{0}\right)$ only depends on the restriction of the function $F$ to the line segment from $s_0$ to $s_1$. Let $f$ denote the function $f(t)=F\left((1-t)s_0 +t s_1 \right)$ where $t\in[0,1]$. As illustrated in Figure \ref{Fig:breg} we have
\begin{equation}\label{eq:rightderiv}
D_F\left(s_1 ,s_0 \right)=f\left(1 \right) - \left( f\left(0\right)+ f'_+ (0) \right)
\end{equation}
where $f'_+ (0)$ denotes the right derivative of $f$ at $t=0$. Equation (\ref{eq:rightderiv}) is even valid when the regret is infinite if we allow the right derivative to take the value $\infty$.

If the state $s_{0}$ has the unique optimal action $a\in\mathcal{A}$ then
\begin{equation}\label{eq:rekonst}
F\left(s_{1}\right)=D_{F}\left(s_{1},s_{0}\right)+\left\langle
a,s_{1}\right\rangle
\end{equation}
so the function $F$ can be reconstructed from $D_{F}$ except for an affine function of $s_{1}.$ The closure of the convex hull of the set of functions $s\to\left\langle a,s\right\rangle $ is
uniquely determined by the convex function $F.$ The following proposition follows from Alexandrov's theorem. See
\cite[Theorem 25.5]{Rockafeller1970} for details.
\begin{Prop}
A convex function on a finite dimensional convex set is
differentiable almost everywhere with respect to the Lebesgue measure. 
\end{Prop}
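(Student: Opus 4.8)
The plan is to reduce to an open convex subset of $\mathbb{R}^{n}$ and then combine one‑dimensional convexity with Fubini's theorem to obtain partial derivatives almost everywhere, and finally to upgrade to full differentiability using convexity. Let $C$ denote the convex set and $H$ its affine hull, say $\dim H = n$; since Lebesgue measure in the ambient space assigns $C$ measure zero whenever $n$ is smaller than the ambient dimension, the statement is understood with respect to the $n$‑dimensional Lebesgue measure on $H$. The relative boundary of a convex set has $n$‑dimensional measure zero in $H$ (locally it is contained in graphs of convex functions of $n-1$ variables, which are Lipschitz, hence null), so it suffices to prove differentiability at almost every point of the relative interior $U=\operatorname{ri}(C)$. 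Choosing affine coordinates that identify $H$ with $\mathbb{R}^{n}$, we may assume $U\subseteq\mathbb{R}^{n}$ is open and convex and that $F$ is a finite convex function on $U$, automatically continuous, indeed locally Lipschitz, there.

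First I would establish that all partial derivatives of $F$ exist off a Lebesgue‑null set. Fix $i\in\{1,\dots,n\}$. For every $x$, the map $t\mapsto F(x+te_{i})$ is a convex function of one real variable on an open interval, hence differentiable except at countably many $t$; equivalently its left and right derivatives coincide outside a countable set. Thus the set $N_{i}\subseteq U$ where the $i$‑th partial derivative of $F$ fails to exist meets every line parallel to $e_{i}$ in a countable, hence null, set, so Fubini's theorem gives $|N_{i}|=0$. Put $N=N_{1}\cup\cdots\cup N_{n}$; then $|N|=0$ and at every $x_{0}\in U\setminus N$ all $n$ partial derivatives of $F$ exist.

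Next I would upgrade to differentiability at any such $x_{0}$, the one place where convexity is essential. Let $g=\nabla F(x_{0})$ be the vector of partial derivatives and set $h(x)=F(x)-F(x_{0})-\langle g,x-x_{0}\rangle$; then $h$ is convex on $U$, $h(x_{0})=0$, and for each $i$ the one‑variable convex function $s\mapsto h(x_{0}+se_{i})$ has value $0$ and two‑sided derivative $0$ at $s=0$, hence equals $o(s)$ as $s\to0$. For small $v=(v_{1},\dots,v_{n})$ we have $x_{0}+v=\tfrac1n\sum_{i}(x_{0}+nv_{i}e_{i})$ with all points in $U$, so Jensen's inequality gives the upper bound $h(x_{0}+v)\le\tfrac1n\sum_{i}h(x_{0}+nv_{i}e_{i})=o(|v|)$, while the midpoint inequality $0=h(x_{0})\le\tfrac12 h(x_{0}+v)+\tfrac12 h(x_{0}-v)$ gives the matching lower bound $h(x_{0}+v)\ge-h(x_{0}-v)\ge-\tfrac1n\sum_{i}h(x_{0}-nv_{i}e_{i})=-o(|v|)$. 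Hence $h(x_{0}+v)=o(|v|)$, i.e.\ $F$ is differentiable at $x_{0}$ with derivative $g$, which proves the claim.

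The main obstacle is really only this last step, and even there the difficulty is mild once one notices that convexity lets one sandwich the remainder $h$ between $\pm\tfrac1n\sum_{i}h(x_{0}\pm nv_{i}e_{i})$; everything else is bookkeeping. I would also mention, as an alternative to the elementary argument above, that local Lipschitz continuity of $F$ on $\operatorname{ri}(C)$ already yields differentiability a.e.\ via Rademacher's theorem, and that the stronger statement of twice differentiability a.e.\ is Alexandrov's theorem, for which see \cite[Theorem~25.5]{Rockafeller1970}.
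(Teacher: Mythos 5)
Your proof is correct, and it takes a genuinely different route from the paper: the paper offers no argument at all, simply invoking \cite[Theorem 25.5]{Rockafeller1970} (which it labels, slightly inaccurately, as Alexandrov's theorem --- Alexandrov's theorem is the stronger second-order statement, whereas the first-order differentiability a.e.\ is exactly Rockafellar's Theorem 25.5, as you note at the end). You instead give a complete, self-contained and elementary proof: reduce to the relative interior (correctly observing that the relative boundary is null), obtain existence of each partial derivative off a null set via one-dimensional convexity plus Fubini, and then upgrade to full differentiability by writing $x_{0}+v$ as the barycenter $\tfrac1n\sum_i(x_0+nv_ie_i)$ and sandwiching the remainder $h$ between $\pm\tfrac1n\sum_i h(x_0\pm nv_ie_i)=\pm\,o(|v|)$; this last step is precisely Rockafellar's Theorem 25.2 (coordinatewise differentiability of a convex function at a point implies differentiability there), and your use of $h\ge 0$ along the axes (convex function lying above its tangent line) makes the $o(|v|)$ bounds legitimate. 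The only point you gloss over is the measurability of the exceptional sets $N_i$ needed for Fubini, but this is standard: the one-sided partial derivatives of a convex function are monotone limits of continuous difference quotients, hence Borel, so $N_i$ is Borel. What your approach buys is a proof readable without outside references and a correct placement of the result relative to Rademacher's and Alexandrov's theorems; what the paper's citation buys is brevity and access to the stronger conclusions it does not actually need here.
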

A state $s_0$ where $F$ is differentiable has a unique optimal action. Therefore Equation (\ref{eq:rekonst}) holds for almost any state $s_0$. In particular the function $F$ can be reconstructed from $D_F$ except for an affine function.

\begin{Prop}\label{prop:regretproperties}
The regret $D_{F}$ of states has the following properties:
\begin{itemize}
\item $D_{F}\left(s_{1},s_{0}\right)\geq0$ with equality if
there exists an action $a$ that is optimal for both $s_1$ and
$s_0$.
\item $s_1 \to D_{F}\left(s_1 ,s_0 \right)$ is a convex
function.
\end{itemize}
Further the following two conditions are equivalent.
\begin{itemize}
\item $D_{F}\left(s_{1},s_{0}\right)= 0$ implies $s_1 =s_0$ .
\item The function $F$ is strictly convex.
\end{itemize}
\end{Prop}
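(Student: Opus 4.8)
The plan is to dispatch the two listed properties and the equivalence, with convexity in $s_1$ as the substantial point. Throughout I use the representation coming from (\ref{eq:rightderiv}): writing $f(t)=F\big((1-t)s_0+ts_1\big)$, one has $D_F(s_1,s_0)=F(s_1)-F(s_0)-f'_+(0)=F(s_1)-\ell_{s_0}(s_1)$, where $\ell_{s_0}(s):=F(s_0)+F'_+(s_0;\,s-s_0)$ is the first-order expansion of $F$ at $s_0$; since $F$ is convex and continuous, $\ell_{s_0}$ is equally the pointwise supremum of $\langle a,\cdot\rangle$ over actions $a\in\mathcal A_F$ optimal for $s_0$. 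Non-negativity then follows from Proposition \ref{prop:actionregret}: $D_F(s_1,a)\ge 0$ for every action $a$, so taking the limit along an asymptotically optimal sequence for $s_0$ and then the infimum in Definition \ref{def:regret} gives $D_F(s_1,s_0)\ge 0$; and if some $a$ is optimal for both $s_1$ and $s_0$, the constant sequence $a_n=a$ is asymptotically optimal for $s_0$ with $D_F(s_1,a_n)=F(s_1)-\langle a,s_1\rangle=0$, so the infimum is $0$.

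For convexity in $s_1$ it suffices to check that the restriction of $D_F(\cdot,s_0)$ to an arbitrary line segment is convex. If the segment lies on a line through $s_0$, (\ref{eq:rightderiv}) presents the restriction as $g(r)-g(0)-g'_+(0)$ for the one-dimensional convex function $g(r)=F(s_0+r\,v)$: on each side of $s_0$ this is $g$ minus an affine function, hence convex, and its one-sided derivatives at $s_0$ both vanish, so the two convex pieces join convexly at $s_0$. For a segment whose line misses $s_0$ I would first restrict $F$ to the two-dimensional affine span of the segment and $s_0$ — legitimate since $D_F(s_1,s_0)$ depends only on $F$ along $[s_0,s_1]$ — and then argue by differentiating along the line, comparing the monotone one-sided derivative of $F$ with that of $\ell_{s_0}$ via monotonicity of $\partial F$. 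I expect this last case to be the main obstacle: $D_F(\cdot,s_0)$ is an infimum of the convex functions $D_F(\cdot,a)$ over actions $a$ optimal for $s_0$, and an infimum of convex functions need not be convex, so the argument must genuinely use that this infimum ranges over the subgradients of $F$ at $s_0$, i.e. that $D_F(\cdot,s_0)=F-\ell_{s_0}$ is precisely the first-order remainder of $F$ at $s_0$.

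For the equivalence, suppose first $F$ is strictly convex and $D_F(s_1,s_0)=0$. Then there are actions $a_n$, asymptotically optimal for $s_0$, with $D_F(s_1,a_n)\to 0$, so $\langle a_n,s_1\rangle\to F(s_1)$ and $\langle a_n,s_0\rangle\to F(s_0)$; since $a_n\in\mathcal A_F$ we have $\langle a_n,\tfrac12(s_0+s_1)\rangle\le F\big(\tfrac12(s_0+s_1)\big)$, and the left side tends to $\tfrac12F(s_0)+\tfrac12F(s_1)$, whence $F\big(\tfrac12(s_0+s_1)\big)=\tfrac12F(s_0)+\tfrac12F(s_1)$ and therefore $s_1=s_0$. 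Conversely, if $F$ is not strictly convex it is affine on some segment $[p,q]$ with $p\neq q$; a subgradient of $F$ at the midpoint $m$ gives an affine $a\le F$ with $\langle a,m\rangle=F(m)$, and since $a$ is affine and $F(m)=\tfrac12F(p)+\tfrac12F(q)$ while $\langle a,p\rangle\le F(p)$ and $\langle a,q\rangle\le F(q)$, both of the latter are equalities; thus $a$ is optimal for both $p$ and $q$, so $D_F(q,p)=0$ although $q\neq p$.
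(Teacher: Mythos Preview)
The paper states this proposition without proof, so there is no argument of its own to compare against. Your treatment of non-negativity, of the equality case, of convexity along lines through $s_0$, and of the strict-convexity equivalence is correct; for the last part, the existence of a subgradient of $F$ at the midpoint $m$ is guaranteed by the standing continuity assumption on $F$, which you might mention explicitly.

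Your instinct about the remaining case is exactly right, and in fact the obstacle is insurmountable: the map $s_1\mapsto D_F(s_1,s_0)$ is \emph{not} convex in general when $F$ fails to be differentiable at $s_0$. Put affine coordinates $(x,y)$ on a two-dimensional slice of the state space with $s_0$ at the origin and take
\[
F(x,y)=\max\bigl(|x|,\,y-c\bigr)
\]
for a small constant $c>0$. Then $F$ is convex and continuous, $F(0,0)=0$, and $F'_+\bigl((0,0);(u,v)\bigr)=|u|$, so by (\ref{eq:rightderiv}) one has $D_F\bigl((x,y),(0,0)\bigr)=\max(|x|,\,y-c)-|x|$. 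For any $r>c$ this gives $D_F\bigl((\pm r,r),(0,0)\bigr)=0$ but $D_F\bigl((0,r),(0,0)\bigr)=r-c>0$ at the midpoint, so convexity fails along the segment from $(-r,r)$ to $(r,r)$. Here your $\ell_{s_0}$ is genuinely sublinear rather than affine, and $F-\ell_{s_0}$ is a difference of two convex functions; the ``infimum of convex functions need not be convex'' worry you raised is precisely the failure mode, not merely a technical hurdle.

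The convexity claim becomes true as soon as $F$ is differentiable at $s_0$: then $\ell_{s_0}$ is affine and $D_F(\cdot,s_0)=F-\ell_{s_0}$ is convex for free. The only place the paper later invokes convexity of $s_1\mapsto D_F(s_1,s_0)$ is to deduce continuity in $s_1$, and that continuity can be obtained directly from continuity of $F$ and finiteness of the directional derivative at interior points, so the gap does not propagate; but as stated the second bullet of the proposition needs the differentiability caveat.
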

We say that a regret function $D_F$ is {\em strict} if $F$ is strictly convex. The two last properties Proposition \ref{prop:actionregret} do not carry over to regret for states except if the regret is a {\em Bregman
divergence} as defined below.
The regret is called a \emph{Bregman divergence} if it can be written in the following form
\begin{align}
D_{F}\left(s_{1},s_{0}\right) &
=F\left(s_{1}\right)-\left(F\left(s_{0}\right)+\left\langle
s_{1}-s_{0},\nabla F\left(s_{0}\right)\right\rangle \right)
\end{align}
where $\left\langle \cdot,\cdot\right\rangle $ denotes the (Hilbert-Smidt) 
inner product. In the context of forecasting and
statistical scoring rules the use of Bregman divergences dates back to
\cite{Hendrickson1971}. A similar but less general definition of regret was given by Rao
and Nayak \cite{Rao1985} where the name \emph{cross entropy} was
proposed. Although Bregman divergences have been known for many years they did not gain popularity before the paper \cite{Banerjee2005} where a systematic study of Bregman divergences was presented.

We note that if $D_{F}$ is a Bregman divergence and $s_{0}$
minimizes $F$ then $\nabla F\left(s_{0}\right)=0$ so that the formula for
the Bregman divergence reduces to 
\[
D_{F}\left(s_{1},s_{0}\right)=F\left(s_{1}\right)-F\left(s_{0}\right).
\]
Bregman divergences satisfy the \emph{Bregman identity} 
\begin{equation}\label{eq:Bregmanid}
\sum t_{i}\cdot D_{F}\left(s_{i},s\right)=\sum t_{i}\cdot
D_{F}\left(s_{i},\bar{s}\right)+D_{F}\left(\bar{s},s\right),
\end{equation}
but if $F$ is not differentiable this identity can be violated.
\begin{Example}
Let the state space be the interval $\left[0,1\right]$ with two
actions $\left<a_{0},s\right> = 1-2s$ and $\left<a_{1},s\right> =2s-1.$ Let
$s_{0}=0$ and $s_{1}=1.$ Let further $t_{0}=\nicefrac{1}{3}$ and
$t_{1}=\nicefrac{2}{3}.$ Then $\bar{s}=\nicefrac{2}{3}.$ If $s=\nicefrac{1}{2}$ then 
\[
\sum t_{i}\cdot D_{F}\left(s_{i},s\right)=0,
\]
but 
\begin{eqnarray*}
\sum t_{i}\cdot D_{F}\left(s_{i},\bar{s}\right) & = &
\frac{1}{3}\cdot\left(\left<a_{0},0\right> -\left<a_{1},0\right>\right)+\frac{2}{3}\cdot\left(\left<a_{1},1\right> -\left<a_{1},1\right>\right)\\
 & = & \frac{1}{3}\cdot\left(1-\left(-1\right)\right)\\
 & = & \frac{2}{3}.
\end{eqnarray*}
Clearly the
Bregman identity (\ref{eq:Bregmanid}) is violated and $\sum t_{i}\cdot D_{F}\left(s_{i},s\right)$ will increase if $s$ is replaced by $\bar s $.
\end{Example}
The following proposition is easily proved.
\begin{Prop}
For a convex and continuous function $F$ the following conditions are equivalent.
\begin{itemize}
\item The function $F$ is differentiable.
\item The regret $D_{F}$ is a Bregman divergence.
\item The Bregman identity is always satisfied.
\item For any probability vectors $\left(t_1 , t_2 ,\dots , t_n\right)$ the sum $\sum t_{i}\cdot D_{F}\left(s_{i},s\right)$ is always minimal when $s=\sum t_{i}\cdot s_i$ .
\end{itemize}
\end{Prop}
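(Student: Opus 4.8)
The plan is to run the cycle of implications: (differentiable) $\Rightarrow$ ($D_F$ is a Bregman divergence) $\Rightarrow$ (the Bregman identity always holds) $\Rightarrow$ ($\bar s=\sum t_i s_i$ always minimises $\sum t_i D_F(s_i,\cdot)$) $\Rightarrow$ (differentiable). The first step is immediate from Equation (\ref{eq:rightderiv}): if $F$ is differentiable then the right derivative $f'_+(0)$ of $f(t)=F((1-t)s_0+ts_1)$ equals $\langle s_1-s_0,\nabla F(s_0)\rangle$, so $D_F(s_1,s_0)=f(1)-f(0)-f'_+(0)$ is exactly the Bregman expression. The second step is the routine expansion already indicated after Equation (\ref{eq:Bregmanid}): using $\sum t_i(s_i-s)=\bar s-s$, both $\sum t_i D_F(s_i,s)$ and $\sum t_i D_F(s_i,\bar s)+D_F(\bar s,s)$ collapse to $\sum t_i F(s_i)-F(s)-\langle\bar s-s,\nabla F(s)\rangle$. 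The third step is trivial: by the Bregman identity the left-hand side exceeds $\sum t_i D_F(s_i,\bar s)$ by the nonnegative term $D_F(\bar s,s)$, which vanishes at $s=\bar s$.

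The real content is the last implication, which I prove by contraposition. Suppose $F$ is not differentiable; then at some interior state $s_0$ there is a direction $v$, with $s_0\pm\tau v\in\mathcal S$ for small $\tau$, such that the one-dimensional restriction $f(t):=F(s_0+tv)$ has a genuine kink at $0$, say $J:=f'_+(0)-f'_-(0)>0$. Fix a small $\epsilon>0$ and, for a parameter $\delta>0$, take the two states $s_1=s_0+(\epsilon+2\delta)v$ and $s_2=s_0-\epsilon v$ with equal weights $\tfrac12,\tfrac12$, so that $\bar s=s_0+\delta v$ lies strictly between $s_0$ and $s_1$ while $s_2$ is on the opposite side of $s_0$; thus the segment joining $s_2$ and $s_1$ straddles the kink but the barycenter $\bar s$ avoids it. Reading the regrets off Equation (\ref{eq:rightderiv}) applied to the relevant sub-segments (the optimal score at $s_0$ contributes $f'_+(0)$ on the $s_1$ side and $f'_-(0)$ on the $s_2$ side, and similarly $f'_+(\delta)$ and $f'_-(\delta)$ at $\bar s$), one obtains after simplification, with $g(s):=\tfrac12 D_F(s_1,s)+\tfrac12 D_F(s_2,s)$,
\[
g(s_0)-g(\bar s)=\bigl(f(\delta)-f(0)-\delta f'_+(0)\bigr)+\tfrac{\epsilon+\delta}{2}\bigl(f'_+(\delta)-f'_-(\delta)\bigr)-\tfrac{\epsilon}{2}J .
\]
Both bracketed terms are nonnegative, and as $\delta\downarrow 0$ they tend to $0$ by right-continuity of the right derivative of a convex function of one variable (which also forces $f'_-(\delta)\to f'_+(0)$), whereas the third term stays equal to the fixed negative number $-\tfrac{\epsilon}{2}J$. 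Hence $g(s_0)<g(\bar s)$ once $\delta$ is small enough; since $s_0\in\mathcal S$, this shows $\bar s$ is not a minimiser of $g$, contradicting the fourth condition.

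The main obstacle is exactly this last step, and the subtlety is the choice of test configuration. A symmetric straddle (equal weights with $s_1,s_2$ symmetric about $s_0$) makes the barycenter equal to $s_0$, and there the extra negative ``kink'' contribution to $g(\bar s)$ actually makes $\bar s$ the minimiser, so nothing goes wrong; one must instead push the barycenter to a nearby point $\bar s=s_0+\delta v$ off the kink, which forces the asymmetric choice above, and one then has to dominate the curvature term $f(\delta)-f(0)-\delta f'_+(0)$ — hence the argument is asymptotic in $\delta$ rather than a single closed-form counterexample. The remaining points are minor: non-differentiability may be taken at a point of the relative interior of $\mathcal S$ with $v$ tangent to $\mathcal S$, and $s_1,s_2$ stay in $\mathcal S$ as long as $\epsilon$ and $\delta$ are small compared with the distance from $s_0$ to $\partial\mathcal S$ along $v$. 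Everything else uses only Equation (\ref{eq:rightderiv}), the nonnegativity of regret from Proposition \ref{prop:regretproperties}, and the standard one-variable convex-analysis facts the paper already invokes.
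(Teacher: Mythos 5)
Your proof is correct and complete. The paper itself offers no argument for this proposition (it is dismissed as ``easily proved''), so the real question is whether your cycle closes, and it does: the first three implications are the routine computations you indicate, and the only substantive direction, (barycenter minimality) $\Rightarrow$ (differentiability), is exactly where you put the work. I checked your displayed identity for $g(s_0)-g(\bar s)$ by expanding Equation~(\ref{eq:rightderiv}) on each of the four sub-segments (picking up $f'_+(0)$, $-f'_-(0)$, $f'_+(\delta)$, $-f'_-(\delta)$ as the relevant one-sided slopes), and it is exact; the limit argument is sound because for a one-variable convex function $f'_+$ is right-continuous and $f'_+(0)\leq f'_-(\delta)\leq f'_+(\delta)$, so both nonnegative brackets vanish as $\delta\downarrow 0$ while $-\tfrac{\epsilon}{2}J$ persists. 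Your construction is in effect a rigorous generalization of the paper's own counterexample on the interval with $F(s)=\left|2s-1\right|$, where the comparison point $s=\nicefrac{1}{2}$ sits on the kink and the barycenter $\bar s=\nicefrac{2}{3}$ sits off it; you correctly identify that the symmetric straddle fails (there the barycenter coincides with the kink and \emph{is} the minimizer), which is precisely why the asymmetric, asymptotic-in-$\delta$ argument is needed rather than a one-line example. Since condition 4 quantifies over all probability vectors, exhibiting a single two-point mixture for which $\bar s$ is beaten by $s_0$ is exactly what the contrapositive requires. The only remaining caveat is the standard one, which you flag: non-differentiability of a finite convex function on a finite-dimensional convex set is detected along some line through a relative-interior point, so the one-dimensional reduction loses nothing.
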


\section{Examples}

In this section we shall see how regret functions are defined in some applications.

\subsection{Information theory}

We recall that a code is uniquely decodable if any finite
sequence of input symbols give a unique sequence of output symbols. It is
well-known that a uniquely decodable code satisfies Kraft's inequality 
\begin{equation}
\sum_{a\in\mathbb{A}}\beta^{\textrm{-}\ell\left(a\right)}\leq1\label{eq:Kraft}
\end{equation}
where $\ell\left(a\right)$ denotes the length of the codeword
corresponding to the input symbol $a\in\mathbb{A}$ and $\beta$ denotes the
size of the output alphabet $\mathbb{B}$. Here the length of a codeword is an integer. If $P=\left(p_{a}\right)_{a\in\mathbb{A}}$ is a probability vector over the input alphabet, then the mean
code-length is 
\[
\sum_{a\in\mathbb{A}}\ell\left(a\right)\cdot p_{a}.
\]
Our goal is to minimize the expected code-length. Here the state
space consist of probability distributions over the input
alphabet and the actions are code-length functions.

Shannon established the inequality 
\[
-\sum_{a\in\mathbb{A}}\log_{b}\left(p_{a}\right)\cdot
p_{a}\leq\min\sum_{a\in\mathbb{A}}\ell\left(a\right)\cdot
p_{a}\leq-\sum_{a\in\mathbb{A}}\log_{b}\left(p_{a}\right)\cdot
p_{a}+1.
\]
It is a combinatoric problem to find the optimal code length
function. In the simplest case with a binary output alphabet the optimal
code-length function is determined by the Huffmann algorithm.

A code-length function dominates another code-length function if all letters have  it has shorter code-length. If a code-length function is not dominated by another
code-length function then for all $a\in\mathbb{A}$ the length is bounded by
$\ell\left(a\right)\leq\left|\mathbb{A}\right|-1.$
 For fixed alphabets $\mathbb{A}$ and $\mathbb{B}$
there exists only a finite number of code-length functions $\ell$ that satisfy Kraft's inequality and are not dominated by other
code-length functions that satisfying Kraft's inequality.

\subsection{Scoring rules}

The use of scoring rules has a long history in statistics. An
early contribution was the idea of minimizing the sum of square
deviations that dates back to Gauss and works perfectly for Gaussian
distributions. In the 1920s Ramsay and de Finetti proved versions of the Dutch
book theorem where determination of probability distributions were
considered as dual problems of maximizing a payoff function. Later it was
proved that any consistent inference procedure corresponds to
optimizing with respect to some payoff function. A more systematic study of
scoring rules was given by McCarthy \cite{McCarthy1956}.

Consider an experiment with $\mathcal{X}=\left\{
1,2,\dots,\ell\right\} $
as sample space. A \emph{scoring rule} $f$ is defined as a
function $\mathcal{X}\times
M_{1}^{+}\left(\mathcal{X}\right)\to\mathbb{R}$
such that the score is $f\left(x,Q\right)$ when a prediction has been given in terms of a probability distribution $Q$ and $x\in\mathcal{X}$ has been observed. A
scoring rule is \emph{proper} if for any probability measure $P\in
M_{1}^{+}\left(\mathcal{X}\right)$
the score $\sum_{x\in\mathcal{X}}P\left(x\right)\cdot
f\left(x,Q\right)$ is minimal when $Q=P.$ Here the state space consist of
probability distributions over $\mathcal{X}$ and the actions are
predictions over $\mathcal{X}$, which are also probability distributions over
$\mathcal{X}$.

There is a correspondence between proper scoring rules and
Bregman divergences as explained in \cite{Gneiting2007, Ovcharov2015}. 
If $D_F$ is a Bregman divergence and $g$ is a function with domain ${\mathcal X}$ then $f$ given by
$f\left(x,Q\right)=g\left(x\right) - D_{F}\left(\delta_{x},Q\right)$ defines a scoring rule. 

Assume that $f$ is a proper scoring function. Then a function $F$ can be defined as
\[
F(P)=\sum_{x\in\mathcal{X}}P\left(x\right)\cdot f\left(x,P\right)
\]
This lead to the regret function
\begin{equation}
D_F\left(P,Q\right)=F(P)-\sum_{x\in\mathcal{X}}P\left(x\right)\cdot
f\left(x,Q\right).
\end{equation}
Since $f$ is assumed to be proper $D_F \left(P,Q\right)\geq0$. The Bregman identity (\ref{eq:Bregmanid})
follows by straight forward calculations. With these two results
we see that the regret function $D_F$ is a Bregman divergence and that 
\begin{alignat}{1}
D_{F}\left(\delta_{y},Q\right) &
=\sum_{x\in\mathcal{X}}\delta_{y}\left(x\right)\cdot
f\left(x,\delta_{y}\right)\nonumber - \sum_{x\in\mathcal{X}}\delta_{y}\left(x\right)\cdot
f\left(x,Q\right) \\
 & =f\left(y,\delta_{y}\right) - f\left(y,Q\right).
\end{alignat}
Hence a proper scoring rule $f$ has the form 
$f\left(x,Q\right)= g(x) - D_{F}\left(\delta_{x},Q\right)$ where $g(x)=f\left(x,\delta_{x}\right)$. A {\em  strictly proper scoring rule} can be defined as a proper scoring rule where the corresponding Bregman divergence is strict.

\begin{Example}
The Brier score is given by 
\[
f\left(x,Q\right)=\frac{1}{n}\left(\sum_{y\in\mathcal{X}}\left(Q\left(y\right)-\delta_{x}\left(y\right)\right)^{2}\right).
\]
The Brier score is generated by the strictly convex function
$F\left(P\right)=\frac{1}{n}\sum_{x\in\mathcal{X}}P\left(x\right)^{2}$
\end{Example}

\subsection{Statistical mechanics}

Thermodynamics is the study of concepts like heat, temperature
and energy. A major objective is to extract as much energy from a
system as possible. The idea in statistical mechanics is to view the
macroscopic behavior of a thermodynamic system as a statistical consequence
of the interaction between a lot of microscopic components where
the interacting between the components are governed by very simple
laws. Here the central limit theorem and large deviation theory play a
major role. One of the main achievements is the formula for entropy as
a logarithm of a probability.

Here we shall restrict the discussion to the most simple kind of thermodynamic
system from which we want to extract energy. We may
think of a system of non-interacting spin particles in a magnetic
field. For such a system the Hamiltonian is given by 
\[
\hat{H}\left(\sigma\right)=-\mu\sum h_{j}\sigma_{j}
\]
where $\sigma$ is the spin configuration, $\mu$ is the magnetic
moment, $h_{j}$ is the strength of an external magnetic field, and
$\sigma_{j}=\pm1$
is the spin of the the $j$'th particle. If the system is in
thermodynamic equilibrium the configuration probability is
\[
P_{\beta}\left(\sigma\right)=\frac{\exp\left(-\beta
\hat{H}\left(\sigma\right)\right)}{Z_{\beta}}
\]
where $Z\left(\beta\right)$ is the partition function
\[
Z\left(\beta\right)=\sum_{\sigma}\exp\left(-\beta
\hat{H}\left(\sigma\right)\right).
\]
Here $\beta$ is the inverse temperature $\left(kT\right)^{-1}$
of the spin system and
$k=1.381\cdot10^{-23}\nicefrac{\unit{J}}{\unit{K}}$ is
Boltzmann's constant. 

The mean energy is given by 
\[
\sum_{\sigma}P_{\beta}\left(\sigma\right)\hat{H}\left(\sigma\right)
\]
which will be identified with the internal energy $U$ defined
in thermodynamics. The Shannon entropy can be calculated
as 
\begin{align*}
-\sum_{\sigma}P_{\beta}\left(\sigma\right)\ln
P_{\beta}\left(\sigma\right) &
=-\sum_{\sigma}P_{\beta}\left(\sigma\right)\ln\frac{\exp\left(-\beta
\hat{H}\left(\sigma\right)\right)}{Z_{\beta}}\\
& =-\sum_{\sigma}P_{\beta}\left(\sigma\right)\left(-\beta
\hat{H}\left(\sigma\right)-\ln Z\left(\beta\right)\right)\\
&
=\beta\cdot U +\ln
Z\left(\beta\right).
\end{align*}
The Shannon entropy times $k$  will be identified with the thermodynamic entropy $S$.

The amount of energy that can be extracted from the system if a
heat bath is available, is called the \emph{exergy}
\cite{Gundersen2011a}. We assume
that the heat bath has temperature $T_0$ and the internal energy
and entropy of the system are $U_0$ and $S_0$ if the system has
been brought in
equilibrium with the heat bath. The exergy can be calculated by
\begin{align*}
Ex & =U-U_{0}-T_{0}\left(S-S_{0}\right)\\
& =U-U_{0}-kT_{0}\left(\beta\cdot U+\ln
Z\left(\beta\right)-\beta_{0}U_{0}-\ln
Z\left(\beta_{0}\right)\right)\\
& =kT_{0}\left(\left(\beta_{0}-\beta\right)\cdot
U+\ln\frac{Z\left(\beta_{0}\right)}{Z\left(\beta\right)}\right).
\end{align*}
The information divergence between the actual state and the
corresponding state that is in equilibrium with the environment is 
\begin{align*}
D\left(\left.P_{\beta}\right\Vert P_{\beta_{0}}\right) &
=\sum_{\sigma}P_{\beta}\left(\sigma\right)\ln\frac{P_{\beta}\left(\sigma\right)}{P_{\beta_{0}}\left(\sigma\right)}\\
&
=\sum_{\sigma}P_{\beta}\left(\sigma\right)\ln\frac{\frac{\exp\left(-\beta
\hat{H}\left(\sigma\right)\right)}{Z\left(\beta\right)}}{\frac{\exp\left(-\beta_{0}\hat{H}\left(\sigma\right)\right)}{Z\left(\beta_{0}\right)}}\\
& =\sum_{\sigma}P_{\beta}\left(\sigma\right)\left(-\beta
\hat{H}\left(\sigma\right)+\beta_{0}\hat{H}\left(\sigma\right)+\ln\frac{Z\left(\beta_{0}\right)}{Z\left(\beta\right)}\right)\\
&
=\left(\beta_{0}-\beta\right)\cdot\sum_{\sigma}P_{\beta}\left(\sigma\right)\hat{H}\left(\sigma\right)+\ln\frac{Z\left(\beta_{0}\right)}{Z\left(\beta\right)}\\
& =\left(\beta_{0}-\beta\right)\cdot
U+\ln\frac{Z\left(\beta_{0}\right)}{Z\left(\beta\right)}.
\end{align*}
Hence 
\[
Ex=kT_{0}D\left(\left.P_{\beta}\right\Vert P_{\beta_{0}}\right).\]
This equation appeared already in \cite{Harremoes1993}.

\subsection{Portfolio theory}

The relation between information theory and gambling was
established by Kelly \cite{Kelly1956}. Logarithmic terms appear because we
are interested in the exponent in the exponential growth rate of our
wealth. Later Kelly's approach has been generalized to trading of stocks
although the relation to information theory is weaker \cite{Cover1991}.

Let $X_{1},X_{2},\dots,X_{k}$ denote \emph{price relatives} for
a list of $k$ assets. For instance $X_{5}=1.04$ means that asset
no. 5 increases its value by 4 \%. Such price relatives are mapped
into a price relative vector
$\vec{X}=\left(X_{1},X_{2},\dots,X_{k}\right).$
\begin{Example}
A special asset is the \emph{safe asset} where the price
relative is 1 for any possible price relative vector. Investing in this
asset corresponds to placing the money at a safe place with interest
rate equal to 0 \% .
\end{Example}
A \emph{portfolio} is a probability vector
$\vec{b}=\left(b_{1},b_{2},\dots,b_{k}\right)$
where for instance $b_{5}=0.3$ means that 30 \% of the money is
invested in asset no. 5. We note that a portfolio may be traded just like
the original assets. The price relative for the portfolio $\vec{b}$
is $X_{1}\cdot b_{1}+X_{2}\cdot b_{2}+\dots+X_{k}\cdot
b_{k}=\left\langle \vec{X},\vec{b}\right\rangle .$
The original assets may be considered as extreme points in the
set of portfolios. If an asset has the property that the price
relative is only positive for one of the possible price relative vectors,
then we may call it a \emph{gambling asset}. 

We now consider a situation where the assets are traded once
every day. For a sequence of price relative vectors
$\vec{X}_{1},\vec{X_{2}},\dots\vec{X}_{n}$
and \emph{a constant re-balancing portfolio} $\vec{b}$ the
wealth after $n$ days is
\begin{eqnarray}
S_{n} & = & \prod_{i=1}^{n}\left\langle
\vec{X}_{i},\vec{b}\right\rangle \\
& = & \exp\left(\sum_{i=1}^{n}\ln\left(\left\langle
\vec{X}_{i},\vec{b}\right\rangle \right)\right)\\
& = & \exp\left(n\cdot E\left[\ln\left\langle
\vec{X},\vec{b}\right\rangle \right]\right)
\end{eqnarray}
where the expectation is taken with respect to the empirical
distribution of the price relative vectors. Here $E\left[\ln\left\langle
\vec{X},\vec{b}\right\rangle \right]$
is proportional to the \emph{doubling rate} and is denoted
$W\left(\vec{b},P\right)$
where $P$ indicates the probability distribution of $\vec{X}$.
Our goal is to maximize $W\left(\vec{b},P\right)$ by choosing an
appropriate portfolio $\vec{b}.$

\begin{Definition}
Let $\vec{b}_{1}$ and $\vec{b}_{2}$ denote two portfolios. We
say
that $\vec{b}_{1}$ \emph{dominates} $\vec{b}_{2}$ if
$\left\langle \vec{X}_{j},\vec{b}_{1}\right\rangle
\geq\left\langle
\vec{X}_{j},\vec{b}_{2}\right\rangle $
for any possible price relative vector $\vec{X}_{j}$
$j=1,2,\dots,n.$
We say that $\vec{b}_{1}$ \emph{strictly dominates}
$\vec{b}_{2}$
if $\left\langle \vec{X}_{j},\vec{b}_{1}\right\rangle
>\left\langle \vec{X}_{j},\vec{b}_{2}\right\rangle $
for any possible price relative vector $\vec{X}_{j}$
$j=1,2,\dots,n.$ 
A set $A$ of assets is said to dominate the set of assets $B$ if any asset in $B$ is dominated by a portfolio of assets in $A.$
\end{Definition}

The maximal doubling rate does not change if dominated assets are removed. Sometimes assets that are dominated but not
strictly dominated may lead to non-uniqueness of the optimal portfolio.

Let $\vec{b}_{P}$ denote a portfolio that is optimal for $P$ and define 
\begin{equation}
G(P) =W\left(\vec{b}_{P},P\right).\label{eq:G}
\end{equation}
The regret of choosing a portfolio that is optimal for $Q$ when the
distribution is $P$ is given by the regret function
\begin{equation}
D_G (P,Q)=W\left(\vec{b}_{P},P\right)-W\left(\vec{b}_{Q},P\right).\label{eq:Bregman}
\end{equation}
If $\vec{b}_{Q}$ is not uniquely determined we take a minimum
over all $\vec{b}$ that are optimal for $Q.$

\begin{Example}\label{ex:port}
Assume that the price relative vector is $\left( 2,\nicefrac{1}{2}\right)$ with probability $1-t$ and $\left(\nicefrac{1}{2},2\right)$ with probability $t$. Then the portfolio concentrated on the first asset is optimal for $t\leq \nicefrac{1}{5}$ and the portfolio concentrated on the second asset is optimal for $t> \nicefrac{4}{5}$. For values of $t$ between $\nicefrac{1}{5}$ and $\nicefrac{4}{5}$ the optimal portfolio invests money on both assets as illustrated in Figure \ref{fig:port}.

\begin{figure}
\begin{center}
\begin{tikzpicture}[line width=1.2pt, line cap=round,line join=round,>=triangle 45,x=9.89423076923077cm,y=9.714879154078563cm,scale=0.7]

\draw[->,color=black] (0.,0.) -- (1.1177000323939097,0.);
\foreach \x in {,0.2,0.4,0.6,0.8,1,0}
\draw[shift={(\x,0)},color=black] (0pt,2pt) -- (0pt,-2pt);

\draw[->,color=black] (0.,0.) -- (0.,0.9124392614188542);
\foreach \y in {0.2,0.4,0.6,0.8}
\draw[shift={(0,\y)},color=black] (2pt,0pt) -- (-2pt,0pt);

\draw [red] plot [smooth] coordinates { (0.2,0.4182935787194957) (0.24350034788969488,0.36120950042115774) (0.2870003018694437,0.3168468268696912) (0.3305002558491924,0.2817583882332283) (0.36050022411108806,0.26258526897379575) (0.4010001812646472,0.242875609771641) (0.4475001320705855,0.2286661976310767) (0.4880000892241446,0.22343157468501623) (0.540500033682462,0.22642765345704483) (0.6079999622717273,0.24665640576588088) (0.6469999210121916,0.26700668167457386) (0.6844998813395612,0.29285980246773613) (0.7279998353193099,0.33105261588381335) (0.7519998099288264,0.35616328670391684) (0.7999997591478595,0.4158879744441841)};

\draw [dash pattern=on 4pt off 4pt] (0.2,0.4182935787194957)-- (0.2,0.);
\draw [dash pattern=on 4pt off 4pt] (0.8000000094532285,0.41588832144092486)-- (0.8,0.);

\draw (0.02926465824424991,0.8875607385811478) node[anchor=north west] {$G(1-t,t)$};
\draw (1.05,0.09) node[anchor=north west] {$t$};
\draw (-0.02,-0.01) node[anchor=north west] {$0$};
\draw (0.16,-0.01) node[anchor=north west] {$0.2$};
\draw (0.76,-0.01) node[anchor=north west] {$0.8$};
\draw (0.98,-0.01) node[anchor=north west] {$1$};

\draw [color=blue] (0.,0.6931471805599453)-- (0.19827171206890273,0.4182935787194957);
\draw [color=blue] (1.,0.6931471805599453)-- (0.8000000094532285,0.41588832144092486);

\draw [fill=black] (0.8000000094532285,0.41588832144092486) circle (2.0pt);
\draw [fill=black] (0.19827171206890273,0.4182935787194957) circle (2.0pt);
\draw [fill=black] (0.,0.6931471805599453) circle (2.0pt);
\draw [fill=black] (1.,0.6931471805599453) circle (2.0pt);

\end{tikzpicture}
\end{center}
\caption
{The function $G$ for the price relative vectors in Example \ref{ex:port}.\label{fig:port}}

\end{figure}
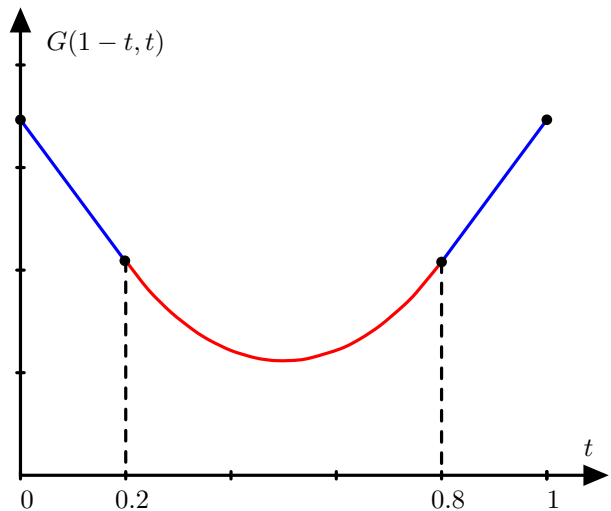
\end{Example}

\begin{lem}
If there are only two price relative vectors and the regret function is strict then either one of the assets dominates all other assets or two of the assets are orthogonal gambling assets that dominate all other assets.
\end{lem}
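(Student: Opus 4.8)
The plan is to translate the statement into plane geometry. With only two price relative vectors, asset $j$ is the point $\left(X_{1,j},X_{2,j}\right)$ in the closed quadrant $\mathbb{R}_{\ge 0}^{2}$, a portfolio is a convex combination of such points, and so the set $C$ of attainable pairs $\left(\langle\vec X_{1},\vec b\rangle,\langle\vec X_{2},\vec b\rangle\right)$ is the convex polygon they span. Writing the distribution as $P=\left(1-t,t\right)$, the doubling rate of an attainable pair $\left(p,q\right)$ is $f_{t}\!\left(p,q\right):=\left(1-t\right)\ln p+t\ln q$, so that $G\!\left(1-t,t\right)=\sup_{\left(p,q\right)\in C}f_{t}\!\left(p,q\right)$. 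I would start from two routine reductions: by the definition of a strict regret (together with Proposition~\ref{prop:regretproperties}), $D_{G}$ is strict exactly when $G$ is strictly convex on the state space, which here is the segment $t\in\left[0,1\right]$; and since $f_{t}$ is strictly concave for $t\in\left(0,1\right)$ and increasing in each coordinate with $f_{t}\to-\infty$ on the coordinate axes, the supremum is attained, uniquely for $t\in\left(0,1\right)$, on the north-east boundary of $C$ — a polygonal arc $V_{0}V_{1}\cdots V_{m}$ from the vertex $V_{0}$ of maximal second coordinate to the vertex $V_{m}$ of maximal first coordinate, each $V_{i}$ being one of the assets.

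The core of the argument is that strict convexity forbids interior vertices of this arc. For each fixed attainable pair $t\mapsto f_{t}\!\left(p,q\right)$ is affine, so $G$ is a supremum of affine functions of $t$ and is therefore affine on exactly the $t$-intervals over which a single pair stays optimal. By the first-order optimality condition, $V_{i}$ maximizes $f_{t}$ over $C$ iff the gradient direction $\left(\left(1-t\right)/p_{i},\,t/q_{i}\right)$ lies in the normal cone $N_{C}(V_{i})$. If $0<i<m$, then $N_{C}(V_{i})$ is the two-dimensional cone generated by the outward normals of the two incident arc edges, both in the first quadrant; as $t$ runs over $\left[0,1\right]$ the gradient direction sweeps through every first-quadrant direction, hence stays in $N_{C}(V_{i})$ for all $t$ in a non-degenerate subinterval, on which $V_{i}$ is optimal and $G$ is affine — contradicting strictness. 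The same normal-cone bookkeeping forces the face of $C$ of maximal first (resp.\ maximal second) coordinate to be a single point, for otherwise its inner endpoint is an interior arc vertex or coincides with $V_{0}$ (resp.\ $V_{m}$). Consequently either $m=0$, so $V_{0}=V_{m}$ dominates every asset — the first alternative, which in fact only meets the hypothesis vacuously, $G$ then being affine — or $m=1$ and the whole arc is the single edge $V_{0}V_{1}$.

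In the case $m=1$ I would show that $V_{1}$ (the vertex of maximal first coordinate) must be a gambling asset supported on the first price relative vector. Suppose instead $V_{1}=\left(a,c\right)$ with $c>0$. Near $t=0$ the value $f_{t}\!\left(V_{1}\right)=\left(1-t\right)\ln a+t\ln c$ is finite; moreover $N_{C}(V_{1})$ contains the ray spanned by $\left(1,0\right)$ (as $V_{1}$ has maximal first coordinate) together with the outward normal of the arc edge $V_{1}V_{0}$, which lies strictly in the open first quadrant because $V_{1}$ and $V_{0}$ differ strictly in both coordinates (the maximal-first-coordinate face being a single point, and $V_{1}$ not dominating everything). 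Since $N_{C}(V_{1})$ is convex it therefore contains the gradient direction $\left(\left(1-t\right)/a,\,t/c\right)$ for all small $t\ge 0$, so $V_{1}$ is optimal and $G$ is affine on $\left[0,\varepsilon\right]$ — impossible. Hence $c=0$ (and then $f_{t}\!\left(V_{1}\right)=-\infty$ for $t>0$, so no contradiction). By the symmetric argument at $t=1$, $V_{0}=\left(0,b\right)$ is a gambling asset supported on the second price relative vector, and $V_{0},V_{1}$ are orthogonal. Finally, since the north-east boundary of $C$ is the segment from $\left(a,0\right)$ to $\left(0,b\right)$, every asset $\left(q_{1},q_{2}\right)\in C$ satisfies $q_{1}/a+q_{2}/b\le 1$ and is thus dominated by the portfolio $\tfrac{q_{1}}{a}V_{1}+\bigl(1-\tfrac{q_{1}}{a}\bigr)V_{0}$ of the two gambling assets, which is the second alternative. (As a check, this case gives $G\!\left(1-t,t\right)=\left(1-t\right)\ln a+t\ln b+\left(1-t\right)\ln(1-t)+t\ln t$, which is strictly convex.)

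The main obstacle I expect is not the generic geometry but the careful disposal of degenerate configurations: vertical or horizontal arc edges, a gambling asset lying in the relative interior of the north-east boundary, a lower-dimensional polygon $C$, and checking that $G$ is finite on all of $\left[0,1\right]$ so that the regret is defined. Each of these must be traced back to one of the two listed alternatives; the arguments are elementary but fiddly, and it is worth stating plainly — as above — that the first alternative is compatible with strictness only in the vacuous sense.
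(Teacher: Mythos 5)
Your proof is correct, and it reaches the same pivotal fact as the paper --- namely that any non-dominated asset which is not ``extremal'' is the unique optimal portfolio for a whole interval of values of $t$ of positive length, which forces $G$ to be affine there and kills strictness --- but it gets there by a genuinely different route. The paper works in the coordinates of the portfolio simplex: it orders the assets by the ratios $X_i/Y_i$, invokes the Kuhn--Tucker characterization of log-optimal portfolios from Cover and Thomas, solves the two inequalities for the endpoints of the optimality interval of the pure portfolio $\delta_j$ (inequalities (\ref{eq:hoejreendepunkt}) and (\ref{eq:venstreendepunkt})), and verifies that the interval has positive length by showing a certain $2\times2$ determinant is positive because asset $j$ is not dominated by a mixture of its neighbours; the same endpoint formulas are then reused to see that a pure portfolio is optimal only at a single $t$ precisely when the assets are gambling assets. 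You instead pass to the plane of attainable payoff pairs, where your ordering of vertices along the north-east boundary is exactly the paper's ordering by $X_i/Y_i$, and you replace both the KKT citation and the determinant computation by the observation that the gradient direction of $f_t$ sweeps monotonically through the first quadrant while the normal cone at an interior Pareto vertex is two-dimensional. What your version buys is self-containedness (no appeal to the Cover--Thomas optimality conditions), a cleaner treatment of the endpoint case at $t\to0$ and $t\to1$ (which the paper disposes of in one rather terse sentence), and an explicit verification that the two gambling assets really do dominate everything else; what the paper's version buys is the explicit formulas for the optimality intervals, which it needs anyway to describe the optimal portfolio (cf.\ Example \ref{ex:port}). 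Your remark that the first disjunct of the lemma is compatible with strictness only vacuously (a single dominating asset makes $G$ affine, hence $D_G\equiv0$) is correct and worth making; the degenerate configurations you list at the end (horizontal or vertical Pareto edges, interior vertices on an axis) do all collapse as you expect, so there is no gap, only bookkeeping.
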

\begin{proof}
We will assume that no assets are dominated by other assets. Let
\begin{eqnarray*}
\vec{X} & = & \left(X_{1},X_{2},\dots,X_{k}\right)\\
\vec{Y} & = & \left(Y_{1},Y_{2},\dots,Y_{k}\right)
\end{eqnarray*}
denote the two price relative vectors. Without loss of
generality we may assume that 
\[
\frac{X_{1}}{Y_{1}}\geq\frac{X_{2}}{Y_{2}}\geq\dots\geq\frac{X_{k}}{Y_{k}}.
\]
If $\frac{X_{i}}{Y_{i}}=\frac{X_{i+1}}{Y_{i+1}}$ then
$\frac{X_{i}}{X_{i+1}}=\frac{Y_{i}}{Y_{i+1}}$
so that if $X_{i}\leq X_{i+1}$ then $Y_{i}\leq Y_{i+1}$ and the
asset $i$ is dominated by the asset $i+1.$ Since we have assumed that no assets are dominated we may assume that 
\[
\frac{X_{1}}{Y_{1}}>\frac{X_{2}}{Y_{2}}>\dots>\frac{X_{k}}{Y_{k}}.
\]
If $P=\left(1-t,t\right)$ is a probability vector over the two price relative vectors then according to \cite{Cover1991} the
portfolio $\vec{b}=\left(b_{1},b_{2},\dots,b_{n}\right)$ is optimal if and only if 
\[
(1-t)\frac{X_{i}}{b_{1}X_{1}+\dots+b_{k}X_{k}}+t\frac{Y_{i}}{b_{1}Y_{1}+\dots+b_{k}Y_{k}}\leq1
\]
for all $i\in\left\{ 1,2,\dots,k\right\} $ with equality if
$b_{i}>0.$
Assume that the portfolio $\vec{b}=\delta_{j}$ is optimal.
Now
\[
\left(1-t\right)\frac{X_{j+1}}{X_{j}}+t\frac{Y_{j+1}}{Y_{j}} \leq1
\]
is equivalent to
\begin{equation}\label{eq:hoejreendepunkt}
t \leq\frac{\frac{X_{j}}{Y_{j+1}}-\frac{X_{j+1}}{Y_{j+1}}}{\frac{X_{j}}{Y_{j}}-\frac{X_{j+1}}{Y_{j+1}}}.
\end{equation}
Similarly
\[
\left(1-t\right)\frac{X_{j-1}}{X_{j}}+t\frac{Y_{j-1}}{Y_{j}} \leq1
\]
is equivalent to 
\begin{equation}\label{eq:venstreendepunkt}
t \geq\frac{\frac{X_{j}}{Y_{j-1}}-\frac{X_{j-1}}{Y_{j-1}}}{\frac{X_{j}}{Y_{j}}-\frac{X_{j-1}}{Y_{j-1}}}.
\end{equation}
We have to check that 
\[
\frac{\frac{X_{j}}{Y_{j-1}}-\frac{X_{j-1}}{Y_{j-1}}}{\frac{X_{j}}{Y_{j}}-\frac{X_{j-1}}{Y_{j-1}}}<\frac{\frac{X_{j}}{Y_{j+1}}-\frac{X_{j+1}}{Y_{j+1}}}{\frac{X_{j}}{Y_{j}}-\frac{X_{j+1}}{Y_{j+1}}},
\]
which is equivalent with
\[
0 <X_{j}Y_{j-1}-Y_{j-1}X_{j+1}-Y_{j}X_{j-1}-\left(X_{j}Y_{j+1}-Y_{j+1}X_{j-1}-Y_{j}X_{j+1}\right).
\]
The right hand side equals the determinant
\[
\left|\begin{array}{cc}
X_{j+1}-X_{j-1} & X_{j}-X_{j-1}\\
Y_{j+1}-Y_{j-1} & Y_{j}-Y_{j-1}
\end{array}\right|,
\]
which is positive because asset $j$ is not dominated by a
portfolio
based on asset $j-1$ and asset $j+1.$ 

We see that the portfolio concentrated in asset $j$ is optimal
for
$t$ in an interval of positive length and the regret between distributions in such an interval will be zero. In particular the regret will not be strict.

Strictness of the regret function is only possible if there are only two assets and if a portfolio concentrated on one of these assets is only optimal for a singular probability measure. According to the formulas for the end points of intervals (\ref{eq:hoejreendepunkt}) and (\ref{eq:venstreendepunkt}) this is only possible if the assets are gambling assets.
\end{proof}

\begin{thm}\label{strict}
If the regret function is strict it equals information divergence, i.e.
\begin{equation}
D_G (P,Q)=D\left(P\Vert Q\right).\label{eq:perfekt}
\end{equation}
\end{thm}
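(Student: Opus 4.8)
The plan is to read off, from strict convexity, the exact structure of the portfolio problem and then compute directly. Write $\vec A_j=(A_{j,1},\dots,A_{j,n})$ for the vector of price relatives of asset $j$ in the $n$ possible price relative vectors (scenarios), so that the effective price relative vectors attainable by portfolios form the polytope $K=\mathrm{conv}\{\vec A_1,\dots,\vec A_k\}\subseteq\mathbb R_{\ge0}^{n}$ and $G(P)=\max_{\vec u\in K}\sum_{l}P_l\ln u_l$ for a distribution $P=(P_1,\dots,P_n)$ over the scenarios. By Proposition \ref{prop:regretproperties}, strictness of $D_G$ means $G$ is strictly convex on the simplex, and we may assume no asset is dominated by a portfolio of the others. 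The claim I would establish is that under these hypotheses $K$ is forced to be the ``axis simplex'' $\mathrm{conv}\{0,c_1e_1,\dots,c_ne_n\}$ for suitable $c_l>0$; equivalently, after removing dominated assets the assets are exactly $n$ orthogonal gambling assets, one per scenario.

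I would prove this by induction on $n$, the case $n=2$ being the preceding lemma (the first alternative there would make $G$ affine along the segment of distributions and so is incompatible with strictness, leaving $K$ a segment from $(c_1,0)$ to $(0,c_2)$). For the inductive step the core is a separation argument showing $K$ has no undominated vertex $\vec v$ with all coordinates strictly positive: if the full-dimensional normal cone $N_K(\vec v)$ missed the open orthant $\mathbb R_{>0}^{n}$, a separating functional $\vec h$ would satisfy $\vec h\ge0$, $\vec h\neq0$ and $\vec h\in (N_K(\vec v))^{\circ}=T_K(\vec v)$, whence $\vec v+t\vec h\in K$ dominates $\vec v$ for small $t>0$, a contradiction; so $N_K(\vec v)$ meets $\mathbb R_{>0}^{n}$, and choosing $\vec c$ in the interior of $N_K(\vec v)$ with $\vec c>0$ makes $\vec v$ the unique maximiser of $\sum_l P_l\ln u_l$ over $K$ for all $P$ near $P^{*}\propto(c_lv_l)_l$, so $G$ is affine near $P^{*}$ — contradicting strict convexity. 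For a vertex $\vec v$ whose support is a proper subset $I$ with $|I|\ge2$, I would restrict $P$ to the face supported on $I$: there $G$ is the value function of the $|I|$-scenario problem with assets $\pi_I(\vec A_j)$, it is again strictly convex, and the inductive hypothesis gives $\sum_{l\in I}v_l/c_l\le1$, so the corresponding portfolio of gambling assets weakly dominates $\vec v$ and equals $\vec v$ only if $\vec v$ is a nontrivial combination of them — either way $\vec v$ is dominated or not a vertex. Hence all undominated vertices lie on the axes; since $G$ is finite each axis carries one, and $K$ is the axis simplex.

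Once $K=\mathrm{conv}\{0,c_1e_1,\dots,c_ne_n\}$ the rest is a routine Lagrange-multiplier (equivalently Jensen) computation: the maximiser of $\sum_l P_l\ln u_l$ subject to $u\ge0$, $\sum_l u_l/c_l\le1$ is $u_l=c_lP_l$, so $G(P)=\sum_l P_l\ln(c_lP_l)$; the portfolio optimal for $Q$ places weight $Q_l$ on the gambling asset for scenario $l$, giving $W(\vec b_Q,P)=\sum_l P_l\ln(c_lQ_l)$, and therefore
\[
D_G(P,Q)=\sum_l P_l\ln(c_lP_l)-\sum_l P_l\ln(c_lQ_l)=\sum_l P_l\ln\frac{P_l}{Q_l}=D(P\Vert Q),
\]
the constants $c_l$ cancelling.

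The step I expect to be the real obstacle is this structural claim, i.e. upgrading the edgewise information from the preceding lemma to the global statement that $K$ is the axis simplex: the pairwise inequalities $A_{j,l}/c_l+A_{j,m}/c_m\le1$ do not by themselves imply $\sum_l A_{j,l}/c_l\le1$, so one genuinely needs both the polytope structure (finitely many assets) — through the separation/normal-cone argument excluding an interior-orthant vertex — and the inductive descent to lower-dimensional faces. Everything after that, including the identification of the optimal portfolios and the cancellation yielding information divergence, is straightforward.
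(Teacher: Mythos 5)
Your argument is correct and reaches the theorem by a genuinely more detailed route than the paper. The paper's proof is three sentences long: strictness is inherited by the restriction to any two price relative vectors, the preceding lemma then says that the (projected) undominated assets are two orthogonal gambling assets, and one concludes that the situation is Kelly gambling, for which $D_G(P,Q)=D\left(P\Vert Q\right)$ is the standard computation cited from Cover and Thomas --- essentially the same Lagrange computation you perform at the end. What you add is a global convex-geometric analysis of the polytope $K$ of achievable price relatives: the normal-cone/separation argument excluding a fully supported undominated vertex, and the induction over faces of the simplex. This is precisely the step the paper leaves implicit, and you are right that it is the crux: the pairwise restrictions only give $A_{j,l}/c_l+A_{j,m}/c_m\le 1$ for each pair, and (taking $c_l\equiv 1$) an extra asset $(1/2,1/2,\dots,1/2)$ passes every pairwise test and is globally undominated, yet must be excluded; it is excluded only because the portfolio concentrated on it would then be optimal, and $G$ therefore constant, on the whole region $\max_l P_l\le 1/2$, violating strictness --- which is exactly what your observation that the gradient $(P_l/v_l)_l$ stays in the interior of $N_K(\vec v)$ detects. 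So your proposal buys a complete reduction from the two-scenario lemma to the global gambling structure, at the cost of heavier convex-geometric machinery, whereas the paper's version is shorter but relies on the reader granting that the pairwise argument already pins down all assets.
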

\begin{proof}
If the regret function is strict then it is also strict when we restrict to two price relative vectors. Therefore any two price relative vectors are orthogonal gambling assets. If the assets are orthogonal gambling assets we get the type of
gambling described by Kelly \cite{Kelly1956}. For gambling equation can easily be derived \cite{Cover1991}.
\end{proof}
 
\section{Sufficiency Conditions}

In this section we will introduce some conditions on a
regret function. Under some mild conditions they turn out to be
equivalent.
\begin{thm}
Let $D_{F}$ denote a regret function based on a continuous and convex function $F$ defined on the state space
of a finite dimensional $C^{*}$-algebra. If the state space has at
least three orthogonal states then the following conditions are equivalent.
\begin{itemize}
\item The function $F$ equals entropy times a negative constant plus an affine function.
\item The regret $D_{F}$ is proportional to information divergence.
\item The regret is monotone.
\item The regret is satisfies sufficiency.
\item The regret is local.
\end{itemize}
\end{thm}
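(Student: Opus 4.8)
The plan is to prove the five conditions equivalent by closing the cycle $(1)\Rightarrow(2)\Rightarrow(3)\Rightarrow(4)\Rightarrow(5)\Rightarrow(1)$, where the items are numbered in the order listed. Three of the arrows are short. For $(1)\Rightarrow(2)$ I would compute directly: if $F=-\kappa H+L$ with $\kappa\ge 0$, $L$ affine and $H(s)=-\mathrm{tr}(s\ln s)$ the von Neumann entropy, then on the interior $F$ is differentiable with $\nabla F(s)=\kappa(\ln s+I)$, and substituting into the Bregman formula the affine term together with the $\mathrm{tr}$-contribution cancels, leaving $D_F(s_1,s_0)=\kappa\,\mathrm{tr}(s_1(\ln s_1-\ln s_0))=\kappa D(s_1\Vert s_0)$; boundary states are harmless because, by Definition~\ref{def:regret} and \eqref{eq:rightderiv}, the regret only sees $F$ along the segment $[s_0,s_1]$. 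For $(2)\Rightarrow(3)$ I would invoke the Lindblad--Uhlmann data-processing theorem for Umegaki relative entropy, which makes $D_F=\kappa D$ monotone. For $(3)\Rightarrow(4)$ I would note that sufficiency is a special case of monotonicity: if a channel $\Phi$ is sufficient for the pair $\{s_0,s_1\}$ there is a recovery channel $\Psi$ with $\Psi\Phi s_i=s_i$, and then $D_F(s_1,s_0)\ge D_F(\Phi s_1,\Phi s_0)\ge D_F(\Psi\Phi s_1,\Psi\Phi s_0)=D_F(s_1,s_0)$, so all inequalities are equalities, which is exactly the sufficiency condition.

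The first substantial arrow is $(4)\Rightarrow(5)$. Here I would pass to a classical face of $\mathcal S$, a simplex of distributions on $\{1,\dots,\ell\}$ with $\ell\ge 3$. Fix an outcome $x$ and two distributions $Q,Q'$ with $q_x=q'_x$, and let $T$ be the coarse-graining that keeps $x$ and merges all remaining outcomes into one symbol. Then $T$ is sufficient for $\{\delta_x,Q\}$: conditioned on the value $x$ the original outcome is deterministically $x$ under both states, while the complementary value is a null event under $\delta_x$, so the kernel sending $x$ to $\delta_x$ and the merged symbol to the normalised restriction of $Q$ recovers both states; likewise for $\{\delta_x,Q'\}$. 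Since $TQ=(q_x,1-q_x)=TQ'$, the sufficiency condition forces $D_F(\delta_x,Q)=D_F(\delta_x,(q_x,1-q_x))=D_F(\delta_x,Q')$, so $D_F(\delta_x,Q)$ depends on $Q$ only through $q_x$; this is locality.

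The decisive arrow is $(5)\Rightarrow(1)$. Working in a classical face and writing $D_F(\delta_x,Q)=\psi(q_x)$ with $\psi:(0,1]\to[0,\infty)$ and $\psi(1)=0$, the last bullet of Proposition~\ref{prop:actionregret} (applied to $s_i=\delta_i$ with weights $p_i$) says that $\sum_x p_x\psi(q_x)$ is minimised over the simplex at $Q=P$; the stationarity condition for the constraint $\sum q_x=1$ forces $p_x\psi'(p_x)$ to be independent of $x$, and with at least three outcomes two coordinates vary freely over an open set, so $t\psi'(t)$ is a constant $-\kappa$, whence $\psi(t)=-\kappa\ln t$ with $\kappa\ge 0$. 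Averaging \eqref{eq:rightderiv} over the segments from $Q$ to $\delta_x$ with weights $q_x$, the one-sided-derivative terms sum to $\langle\nabla F(Q),\sum_x q_x(\delta_x-Q)\rangle=0$ at every differentiability point $Q$ of the restricted $F$, giving $-\kappa\sum_x q_x\ln q_x=\sum_x q_x F(\delta_x)-F(Q)$, i.e.\ $F(Q)=\langle g,Q\rangle-\kappa H(Q)$ with $g(x)=F(\delta_x)$, first almost everywhere and then everywhere by continuity. Since every state of a finite-dimensional $C^*$-algebra lies in the classical face spanned by an eigenbasis, and all such faces contain the tracial state, the constant $\kappa$ is common to all of them and the values $g(e)=F(e)$ satisfy $\sum_i g(e_i)=\mathrm{const}$ over every orthonormal family; Gleason's theorem (which needs dimension at least three) then extends $g$ to a linear functional, so $F$ is $-\kappa H$ plus an affine function with $H$ the von Neumann entropy, which is $(1)$.

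The step I expect to be the real obstacle is $(5)\Rightarrow(1)$, on two counts: the separation-of-variables argument needs regularity of $\psi$ (hence of $F$) that must be extracted from continuity and the functional inequality rather than assumed, and the passage from the per-face identity to a global statement on the non-commutative state space is where genuinely $C^*$-algebraic input — the constant-trace condition feeding Gleason's theorem — is required. It is precisely at these two points that the hypothesis of three orthogonal states is indispensable: with only two, the stationarity relation is a single equation in two unknowns and does not single out the logarithm, and Gleason's theorem fails.
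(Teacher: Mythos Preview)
Your cycle $(1)\Rightarrow(2)\Rightarrow(3)\Rightarrow(4)\Rightarrow(5)\Rightarrow(1)$ matches the paper exactly, and the short arrows are handled the same way. Two points deserve comment.

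\textbf{The step $(4)\Rightarrow(5)$.} Your coarse-graining $T$ lands in a two-point space, but the paper's sufficiency condition is formulated only for affine maps $\Phi:\mathcal{S}\to\mathcal{S}$. The paper therefore builds the endomorphisms directly: with $p$ the support projection of $s_1$ it sets $\Phi(s)=\mathrm{tr}(ps)\,s_1+(1-\mathrm{tr}(ps))\,\rho$ and $\Psi(s)=\mathrm{tr}(ps)\,s_1+(1-\mathrm{tr}(ps))\,\sigma$, each a recovery for the other on the pair in question. This is your $T$ composed with an embedding of the two-point simplex back into $\mathcal{S}$; once you say that, your argument and the paper's coincide.

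\textbf{The step $(5)\Rightarrow(1)$.} You have correctly located the obstacle, but the paper's proof does the work you only flag. First, locality a priori gives a family $g_i(t)=D_F(s_i,t s_i+(1-t)s_{i+1})$ indexed by $i$, not a single $\psi$; you should not collapse this at the outset. Second, you cannot invoke the last bullet of Proposition~\ref{prop:actionregret} for \emph{states} until you know $F$ is differentiable on the simplex; Proposition~4 makes the minimisation property equivalent to differentiability, so using it is circular. The paper proceeds in three stages: (i) from locality and continuity it derives $D_F(P,Q)=\sum_i p_i\big(g_i(q_i)-g_i(p_i)\big)$ on the interior of each simplex and checks the Bregman identity there, which \emph{establishes} differentiability of $F$ on each simplex; (ii) symmetrising the non-negativity inequality in two coordinates shows $g_{ij}=\tfrac12(g_i+g_j)$ is midpoint convex, hence convex, hence one-sidedly differentiable, and a perturbation $(\,y\pm\epsilon\,)$ argument forces $g_{ij}$, and therefore each $g_i$, to be differentiable; (iii) only then does the Lagrange-multiplier computation yield $p_i g_i'(p_i)=c_K$, i.e.\ $g_i(t)=c_K\ln t+m_i$. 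Step (ii) is the substance you are missing.

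For the passage from classical faces to the full state space the paper does not use Gleason: it shows $c_K=c_L$ and $g_K=g_L$ whenever two maximal simplices meet in positive dimension and then glues the affine pieces directly. Your Gleason route is a genuine alternative, but as stated it is fragile for $C^{*}$-algebras containing $M_2$ summands (e.g.\ $M_2\oplus\mathbb{C}$ already has three orthogonal states, yet the frame-function condition on the $M_2$ block is below Gleason's dimension threshold); the paper's overlap argument sidesteps this.
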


In the rest of this section we will describe each of these
equivalent conditions and prove that they are actually
equivalent. The theorems and proofs
will be stated so that they hold even for more general state
spaces than the ones considered in this paper.

\subsection{Entropy and Information Divergence}

\begin{Definition}
Let $s$ denote an element in a state space. The \emph{entropy}
of $s$ is be defined as 
\[
H\left(s\right)=\inf\left(-\sum_{i=1}^{n}\lambda_{i}\ln\left(\lambda_{i}\right)\right)
\]
where the infimum is taken over all decompositions
$s=\sum_{i=1}^{n}\lambda_{i}s_{i}$ of $s$ into pure states
$s_i$.
\end{Definition}
 This definition of the entropy of a state was first given by
Uhlmann \cite{Uhlmann1970}. Using that entropy is decreasing under
majorization we see that the entropy of $s$
is attained at an orthogonal decomposition \cite{Harremoes2016d}
and we obtain the familiar equation
\[
H(s) = - \mathrm{tr}\left[s\ln(s)\right] .
\]

In general this definition of entropy does not provide a concave
function on a convex set. For instance the entropy of points in the
square has local maximum in the four different points. A
characterization
of the convex sets with concave entropy functions is lacking.
\begin{Definition}
If the entropy is a concave function then the Bregman divergence
$D_{-H}$
is called \emph{information divergence}.
\end{Definition}
The information divergence is also called \emph{Kullback-Leibler
divergence},
\emph{relative entropy} or \emph{quantum relative entropy}. In a
C*-algebra we get
\begin{align*}
D_{-H}\left(s_1 ,s_2 \right) &
=-H\left(s_1 \right)-\left(-H\left(s_2 \right)+\left\langle
s_1 -s_2 ,-\nabla H\left(s_2 \right)\right\rangle \right)\\
& =H\left(s_2 \right)-H\left(s_1 \right)+\left\langle s_1 -s_2 ,\nabla
H\left(s_2 \right)\right\rangle \\
&
=\mathrm{tr}\left[f\left(s_2 \right)\right]-\mathrm{tr}\left[f\left(s_1 \right)\right]+\mathrm{tr}\left[\left(s_1 -s_2 \right)f'\left(s_2 \right)\right]\\
&
=\mathrm{tr}\left[f\left(s_2 \right)-f\left(s_1 \right)+\left(s_1 -s_2 \right)
f'\left(s_2 \right)\right]
\end{align*}
where $f\left(x\right)=-x\ln\left(x\right).$ Now
$f'\left(x\right)=-\ln\left(x\right)-1$
so that
\begin{align*}
f\left(s_2 \right)-f\left(s_1 \right)+\left(s_1 - s_2 \right)f'\left(s_2 \right)
&
=-s_2 \ln\left(s_2 \right)+s_1 \ln\left(s_1 \right)+\left(s_1 -s_2 \right)\left(-\ln\left(s_2 \right)-1\right)\\
 & =s_1 \left(\ln\left(s_1 \right)-\ln\left(s_2 \right)\right)+ s_2 -s_1.
\end{align*}
Hence 
\[
D_{-H}\left(s_1 ,s_2 \right)=\mathrm{tr}\left[s_1 \left(\ln\left(s_1 \right)-\ln\left(s_2 \right)\right)+s_2 - s_1 \right].
\]
For states $s_1, s_2$ it reduces to the well-known formula
\[
D_{-H}\left(s_1 ,s_2 \right)=\mathrm{tr}\left[s_1 \ln\left(s_1 \right)-s_1 \ln\left(s_2 \right)\right].
\]

\subsection{Monotonicity }

We consider a set $\mathcal{T}$ of maps
of the state space into itself. The set $\mathcal{T}$ will be used to
represent those transformations
that we are able to perform on the state space before we choose
a feasible action $a\in\mathcal{A}$. Let
$\Phi:\mathcal{S}\curvearrowright\mathcal{S}$
denote a map. Then the dual map $\Phi^*$ maps actions into actions and is given by
\[
\left\langle a,\Phi\left(s\right)\right\rangle  = 
\left\langle \Phi^* (a),s\right\rangle .
\]
\begin{Prop}[The principle of lost opportunities]\label{prop:lostoppotunities}
If $\Phi^*$ maps the set of feasible actions $\mathcal{A}$  into itself then 
\begin{equation}
F\left(\Phi\left(s\right)\right)\leq
F\left(s\right).\label{eq:aftagende}
\end{equation}
 
\end{Prop}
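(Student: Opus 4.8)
The plan is to unfold the variational definition of $F$ and then use the defining property of the dual map $\Phi^*$ together with the invariance hypothesis $\Phi^*(\mathcal{A})\subseteq\mathcal{A}$.

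First I would fix an arbitrary state $s$ and write
\[
F\left(\Phi\left(s\right)\right)=\sup_{a\in\mathcal{A}}\left\langle a,\Phi\left(s\right)\right\rangle =\sup_{a\in\mathcal{A}}\left\langle \Phi^{*}\left(a\right),s\right\rangle ,
\]
where the first equality is the definition of $F$ and the second is obtained by applying the identity $\left\langle a,\Phi\left(s\right)\right\rangle =\left\langle \Phi^{*}\left(a\right),s\right\rangle $ inside the supremum, term by term.

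Next I would observe that by hypothesis every operator $\Phi^{*}\left(a\right)$ with $a\in\mathcal{A}$ is again a feasible action, so the set $\left\{ \Phi^{*}\left(a\right):a\in\mathcal{A}\right\} $ over which the supremum is taken is a subset of $\mathcal{A}$. Since passing to a supremum over a smaller set cannot increase its value,
\[
\sup_{a\in\mathcal{A}}\left\langle \Phi^{*}\left(a\right),s\right\rangle \leq\sup_{b\in\mathcal{A}}\left\langle b,s\right\rangle =F\left(s\right),
\]
and combining the two displays yields $F\left(\Phi\left(s\right)\right)\leq F\left(s\right)$, which is exactly \eqref{eq:aftagende}.

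There is essentially no obstacle here: the statement is a one-line consequence of $F$ being defined as a pointwise supremum of linear functionals and of $\Phi^{*}$ mapping $\mathcal{A}$ into itself. The only points worth a remark are that neither surjectivity of $\Phi^{*}$ onto $\mathcal{A}$ nor compatibility with mixtures is needed, and that the inequality is strict exactly when no (asymptotically) optimal action for $s$ lies in the image $\Phi^{*}\left(\mathcal{A}\right)$, which is the "lost opportunity" the name refers to.
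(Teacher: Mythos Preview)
Your proof is correct and essentially identical to the paper's: both use the duality relation $\langle a,\Phi(s)\rangle=\langle\Phi^{*}(a),s\rangle$ together with $\Phi^{*}(a)\in\mathcal{A}$ to bound $\langle a,\Phi(s)\rangle\leq F(s)$ for every $a\in\mathcal{A}$, and then take the supremum. The paper bounds each term first and then passes to the supremum, while you write the supremum first and compare it to a supremum over the larger index set, but this is only a cosmetic difference.
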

\begin{proof}
If $a\in\mathcal{A}$  then 
\begin{eqnarray*}
\left\langle a,\Phi\left(s\right)\right\rangle & = &
\left\langle \Phi^*(a),s\right\rangle \\
 & \leq & F\left(s\right)
\end{eqnarray*}
because $\Phi^*(a)\in\mathcal{A}$. Inequality
(\ref{eq:aftagende}) follows because
$F\left(\Phi\left(s\right)\right)=\sup_{a}\left\langle
a,\Phi\left(s\right)\right\rangle .$
\end{proof}

\begin{cor}[Semi-monotonicity]
Let $\Phi$ denote a map of the state space into itself such that $\Phi^*$ maps the set of feasible actions $\mathcal{A}$ into itself and let $s_{2}$
denote
a state that minimizes the function $F$. If $D_{F}$ is a Bregman divergence then 
\begin{equation}
D_{F}\left(\Phi\left(s_{1}\right),\Phi\left(s_{2}\right)\right)\leq
D_{F}\left(s_{1},s_{2}\right).\label{eq:aftagende-1}
\end{equation}
\end{cor}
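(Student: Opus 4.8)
The plan is to reduce both sides of the claimed inequality to plain differences of values of $F$, using that $D_F$ is a Bregman divergence together with the observation that both $s_2$ and $\Phi(s_2)$ are minimizers of $F$. The key preliminary point is that the hypothesis on $\Phi^{*}$ propagates ``being a minimizer'' through $\Phi$.

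First I would invoke Proposition \ref{prop:lostoppotunities}: since $\Phi^{*}$ maps $\mathcal{A}$ into itself, the principle of lost opportunities gives $F(\Phi(s)) \le F(s)$ for every state $s$. Applying this to $s=s_2$ and using that $s_2$ minimizes $F$, we get $F(\Phi(s_2)) \le F(s_2) = \min_{s} F(s)$, hence $F(\Phi(s_2)) = F(s_2)$ and $\Phi(s_2)$ is itself a minimizer of $F$. Now I use the reduction noted right after the definition of Bregman divergence in the excerpt: if $D_F$ is a Bregman divergence and $s_0$ minimizes $F$, then $\nabla F(s_0)=0$ and $D_F(s_1,s_0) = F(s_1) - F(s_0)$. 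Applying this at the minimizer $s_2$ gives $D_F(s_1,s_2) = F(s_1) - F(s_2)$, and applying it at the minimizer $\Phi(s_2)$ gives $D_F(\Phi(s_1),\Phi(s_2)) = F(\Phi(s_1)) - F(\Phi(s_2))$.

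Combining these identities with $F(\Phi(s_1)) \le F(s_1)$ (principle of lost opportunities once more) and $F(\Phi(s_2)) = F(s_2)$ yields
\[
D_F\bigl(\Phi(s_1),\Phi(s_2)\bigr) = F(\Phi(s_1)) - F(s_2) \le F(s_1) - F(s_2) = D_F(s_1,s_2),
\]
which is exactly \eqref{eq:aftagende-1}. The only step that requires any care is the first one — recognizing that the assumption $\Phi^{*}(\mathcal{A})\subseteq\mathcal{A}$ forces $\Phi(s_2)$ back into the set of minimizers of $F$, so that the Bregman-at-a-minimizer identity is available on both sides. After that observation the argument is immediate and uses nothing about $\nabla F$ beyond the fact that it vanishes at a minimizer; in particular no extra differentiability or regularity of $\Phi$ is needed. (This is genuinely weaker than full monotonicity, which would have to handle the case where $s_2$ is not a minimizer, and that is precisely what the later sufficiency/locality machinery is for.)
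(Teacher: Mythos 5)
Your proposal is correct and follows essentially the same route as the paper's proof: establish via the principle of lost opportunities that $\Phi(s_2)$ is again a minimizer of $F$, so that the Bregman divergence from a minimizer collapses to a difference of $F$-values on both sides, and then conclude from $F(\Phi(s_1))\leq F(s_1)$ and $F(\Phi(s_2))=F(s_2)$. The only cosmetic difference is that the paper writes out the full Bregman formula and notes $\nabla F(\Phi(s_2))=0$ explicitly, whereas you invoke the already-stated reduction $D_F(s_1,s_0)=F(s_1)-F(s_0)$ at a minimizer; these are the same observation.
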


\begin{proof}
Since $s_{2}$ minimizes $F$ and $F$ is differentiable we have
$\nabla F\left(s_{2}\right)=0$.
Since $s_{2}$ minimizes $F$ and
$F\left(\Phi\left(s_{2}\right)\right)\leq F\left(s_{2}\right)$
we also have that $\Phi\left(s_{2}\right)$ minimizes $F$ and
that
$\nabla F\left(\Phi\left(s_{2}\right)\right)=0$. Therefore 
\begin{align*}
D_{F}\left(\Phi\left(s_{1}\right),\Phi\left(s_{2}\right)\right)
&
=F\left(\Phi\left(s_{1}\right)\right)-\left(F\left(\Phi\left(s_{2}\right)\right)+\left\langle
\Phi\left(s_{1}\right)-\Phi\left(s_{2}\right),\nabla
F\left(\Phi\left(s_{2}\right)\right)\right\rangle \right)\\
&
=F\left(\Phi\left(s_{1}\right)\right)-F\left(\Phi\left(s_{2}\right)\right)\\
 & \leq F\left(s_{1}\right)-F\left(s_{2}\right)\\
 & =D_{F}\left(s_{1},s_{2}\right),
\end{align*}
which proves the inequality.
\end{proof}

Next we introduce the stronger notion of monotonicity. 
\begin{Definition}
Let $D_{F}$ denote a regret function on the state space $\mathcal{S}$ of a finite dimensional C*-algebra. Then $D_{F}$ is said to be \emph{monotone} if 
\[
D_{F}\left(\Phi\left(s_{1}\right),\Phi\left(s_{2}\right)\right)\leq
D_{F}\left(s_{1},s_{2}\right)
\]
for any affine map $\Phi:S\to S.$ 
\end{Definition}
\begin{Prop}\label{prop:monoBreg}
If a regret function $D_F$ based on a convex and continuous function $F$ is monotone then it is a Bregman divergence.
\end{Prop}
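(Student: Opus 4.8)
The plan is to show that if $D_F$ is not a Bregman divergence — equivalently, by the characterization proposition stated just after the Bregman identity, if $F$ fails to be differentiable at some state $s_0$ — then $D_F$ cannot be monotone. So I would argue by contraposition. Suppose $F$ is not differentiable at $s_0$. Then the subdifferential $\partial F(s_0)$ contains at least two distinct affine functionals; geometrically, along some line segment through $s_0$ the one-dimensional restriction $f(t)=F((1-t)s_0+t s_1)$ has a genuine corner, i.e. $f'_-(0) < f'_+(0)$ in the parametrization where $t$ increases toward $s_1$. Using the formula \eqref{eq:rightderiv}, $D_F(s_1,s_0)$ is the gap between $f(1)$ and the right-tangent line at $0$, and there is an analogous quantity on the other side.

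The key step is to construct an affine map $\Phi$ that exposes the corner. The natural candidate is to pick two states $s_1,s_2$ on opposite sides of $s_0$ along the line of nondifferentiability, say $s_0 = \tfrac12 s_1 + \tfrac12 s_2$, and let $\Phi$ be the affine contraction toward $s_0$ that sends $s_1 \mapsto s_0$ and fixes the direction, i.e. $\Phi(s) = s_0 + \lambda(s - s_0)$ for a suitable $\lambda\in(0,1)$ — or, more simply, compare $D_F(s_1,s_0)$ with $D_F(\Phi(s_1),\Phi(s_0))$ for an affine $\Phi$ collapsing everything to the segment. Actually the cleanest route: consider the affine map $\Phi$ that fixes $s_0$ and $s_1$ but reflects or compresses the segment on the $s_2$ side, so that $\Phi(s_0)$ still sees the corner from one side. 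I would use the Bregman-type identity of Proposition \ref{prop:actionregret} (which holds for action-regret even without differentiability) together with \eqref{eq:rightderiv} to write both $D_F(s_1,s_0)$ and $D_F(s_2,s_0)$ in terms of $f(1)$, $f(0)$, $f'_+(0)$ and the value at $s_2$, and show that monotonicity applied to a well-chosen affine map forces $f'_-(0) = f'_+(0)$, hence differentiability of $f$ at $0$ along every line, hence differentiability of $F$ at $s_0$.

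Concretely I would do it in one dimension first: reduce to the segment $[s_2,s_1]$ through $s_0$, where $F$ restricts to the convex function $f$ on $[0,1]$ with $s_0$ at an interior point $t_0$. Monotonicity under affine self-maps of $\mathcal S$ includes monotonicity under affine self-maps of this segment (compose with the affine retraction of $\mathcal S$ onto the segment, which is affine and whose existence on a finite-dimensional $C^*$-algebra state space — or at least the relevant line — is elementary). An affine self-map of $[0,1]$ is $t\mapsto \alpha + \beta t$; choosing one that sends a neighborhood of $t_0$ across the corner and comparing the regret before and after yields the inequality $f'_+(t_0) \le f'_-(t_0)$, which combined with convexity ($f'_-\le f'_+$ always) gives differentiability. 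The main obstacle I anticipate is making the affine map do exactly what is needed while staying inside the state space $\mathcal S$ (affine maps of $\mathcal S$ must preserve positivity and trace), and handling the possibly infinite right-derivative case from \eqref{eq:rightderiv}; I expect the boundary states and the infinite-regret case to require a separate short argument, most likely by an approximation from interior states where $F$ is finite and the segment-retraction is unproblematic.
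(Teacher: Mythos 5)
Your proposal is essentially the paper's own proof: restrict $F$ to a line segment, apply monotonicity to affine maps of the state space that act on the segment as contractions $t\mapsto\alpha+\beta t$, and use Equation~(\ref{eq:rightderiv}) to turn the regret into $f(t_2)-f(t_1)-(t_2-t_1)f'_+(t_1)$ and force $f'_+\leq f'_-$ at any putative corner, which together with convexity gives differentiability. One correction to your justification: the ``affine retraction of $\mathcal{S}$ onto the segment'' need not exist (e.g.\ for a non-diametral chord of the Bloch ball), and reflections $\beta<0$ are not realizable either, but this is harmless because the only maps the argument requires are the contractions $s\mapsto\beta s+(1-\beta)d$ toward a fixed state $d$ on the segment, which are convex combinations and hence affine self-maps of $\mathcal{S}$ automatically --- these are precisely the dilations used in the paper's proof.
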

\begin{proof}
Assume that $D_F$ is monotone. We have to prove that $F$ is differentiable. Since $F$ is convex it is sufficient to prove that any restriction of $F$  to a line segment is differentiable. Let $s_0$ and $s_1$ denote states that are the end points of a line segment. The restriction of $F$ to the line segment is given by the convex and continuous function $f(t)= F((1-t)s_0 +ts_1 )$ so we have to prove that $f$ is differentiable.

If $0<t_1 <t_2 <1$ then according to Equation (\ref{eq:rightderiv}) we have
\[
D_{F}\left((1-t_2 )s_{0} +t_2 s_{1} ,(1-t_1 )s_{0} +t_1 s_{1} \right)=f\left(t_2 \right)-\left(f\left(t_1 \right)+\left(t_2 - t_1\right)\cdot f'_+ \left( t_1 \right) \right)
\]
where $f_+$ denotes the denote the derivative from the right. A dilation by a factor $r\leq 1$ around $s_0$ decreases the regret so that 
\begin{equation}
r\to f\left(r\cdot t_2 \right)-\left(f\left(r\cdot t_1 \right)+r\cdot\left(t_2 - t_1\right)\cdot f'_+ \left(r\cdot t_1 \right) \right) \label{eq:hoejreafledt}
\end{equation}
is increasing. Since $f$ is convex the function $r\to f'_+ \left(r\cdot t_1 \right)$ is increasing. Assume that $f$ is not differentiable so that $r\to f'_+ \left(r\cdot t_1 \right)$ has a positive jump as illustrated on Figure \ref{Fig:monobreg}. This contradicts that the function (\ref{eq:hoejreafledt}) is increasing. Therefore $f'_+$ is continuous and $f$ is differentiable.
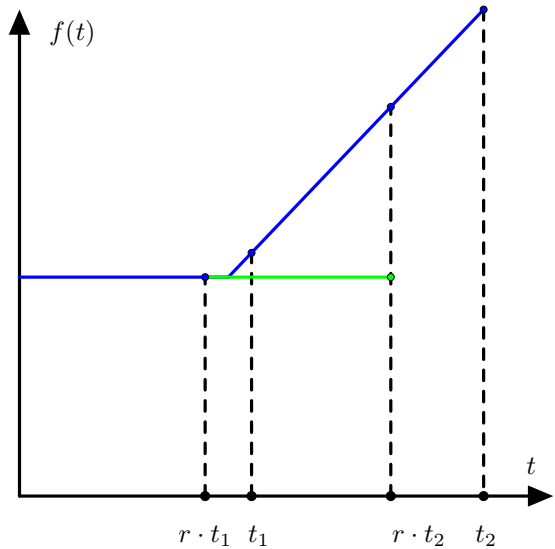
\begin{figure}
\begin{centering}
\begin{tikzpicture}[line width=1.2pt,line cap=round,line join=round,>=triangle 45,x=9.395604395604412cm,y=9.924107142857167cm,scale=0.65]
\draw[->,color=black] (0.,0.) -- (1.15,0.);
\draw[->,color=black] (0.,0.) -- (0.,1);

\draw[color=blue] (0,0.45) -- (0.4499987575911538,0.45);
\draw[color=blue] (0.4500007324771293,0.4500007324771293) -- (0.998624824324231,0.998624824324231);

\draw [dash pattern=on 4pt off 4pt] (1.,1.)-- (1.,0.);
\draw [dash pattern=on 4pt off 4pt] (0.50,0.)-- (0.50,0.50);
\draw [dash pattern=on 4pt off 4pt] (0.8,0.)-- (0.80,0.80);
\draw [dash pattern=on 4pt off 4pt] (0.4,0.)-- (0.4,0.45);

\draw [color=black](0.78,-0.04) node[anchor=north west] {$r\cdot t_2$};
\draw [color=black](0.96,-0.04) node[anchor=north west] {$t_2$};
\draw [color=black](0.32,-0.04) node[anchor=north west] {$r\cdot t_1$};
\draw [color=black](0.47,-0.04) node[anchor=north west] {$t_1$};
\draw (1.07,0.1) node[anchor=north west] {$t$};
\draw (0.04164903189233505,1) node[anchor=north west] {$f(t)$};
\draw [green](0.4,0.45)-- (0.8,0.45);

\draw [fill=black] (1.,0.) circle (2pt);
\draw [fill=black] (0.50,0.) circle (2pt);
\draw [fill=black] (0.8,0.) circle (2pt);
\draw [line width=0.3pt,fill=blue] (0.4,0.45) circle (2pt);
\draw [fill=black] (0.4,0.) circle (2pt);
\draw [line width=0.3pt,fill=blue] (0.50,0.50) circle (2pt);
\draw [line width=0.3pt,fill=blue] (0.80,0.80) circle (2pt);
\draw [line width=0.3pt,fill=blue] (1.,1.) circle (2pt);
\draw [line width=0.3pt,fill=green] (0.8,0.45) circle (2pt);

\end{tikzpicture}
\par\end{centering}
\caption{Example of a dilation that increases regret.}\label{Fig:monobreg}
\end{figure}
\end{proof}

Recently it has been proved that information divergence on a
complex Hilbert space is decreasing under positive trace preserving maps
\cite{Mueller-Hermes2016,Christandl2016}.
Previously this was only known to hold if some extra condition
like complete positivity or 2-positivity was assumed \cite{Petz2003}.
\begin{thm}
Information divergence is monotone under any positive trace
preserving map on the states of a finite dimensional $C^{*}$-algebra.
\end{thm}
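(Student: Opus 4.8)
The plan is to reduce the statement about positive trace-preserving maps to the already-established fact that information divergence is decreasing under positive trace-preserving maps on states of a complex Hilbert space \cite{Mueller-Hermes2016,Christandl2016}, and then to handle the $C^{*}$-algebra case by embedding. Recall that a finite-dimensional $C^{*}$-algebra $\mathcal{M}$ is a direct sum of matrix algebras $\bigoplus_{k} M_{n_k}(\mathbb{C})$, and its state space $\mathcal{S}$ embeds as a subset of the state space of the single matrix algebra $M_{N}(\mathbb{C})$ with $N=\sum_k n_k$, namely the block-diagonal density matrices. So first I would set up this embedding $\iota:\mathcal{S}\hookrightarrow \mathcal{S}(M_N(\mathbb{C}))$ and note that $D_{-H}$ computed in $\mathcal{M}$ agrees with $D_{-H}$ computed in $M_N(\mathbb{C})$ for block-diagonal states, since the entropy formula $H(s)=-\operatorname{tr}[s\ln s]$ and hence the trace formula for $D_{-H}(s_1,s_2)=\operatorname{tr}[s_1(\ln s_1-\ln s_2)]$ derived in the previous subsection are insensitive to which ambient algebra we use.

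Next I would take an arbitrary positive trace-preserving map $\Phi:\mathcal{S}\to\mathcal{S}$ and show it is enough to treat the matrix-algebra case. The map $\Phi$ extends (via the conditional expectation onto the block-diagonal subalgebra, composed with $\Phi$ and the inclusion) to a positive trace-preserving map $\tilde\Phi:\mathcal{S}(M_N(\mathbb{C}))\to\mathcal{S}(M_N(\mathbb{C}))$ whose restriction to block-diagonal states reproduces $\iota\circ\Phi$; positivity is preserved because the conditional expectation onto a $C^{*}$-subalgebra is positive and trace-preserving. Then for states $s_1,s_2\in\mathcal{S}$,
\[
D_{-H}\bigl(\Phi(s_1),\Phi(s_2)\bigr)=D_{-H}\bigl(\tilde\Phi(\iota s_1),\tilde\Phi(\iota s_2)\bigr)\leq D_{-H}(\iota s_1,\iota s_2)=D_{-H}(s_1,s_2),
\]
where the middle inequality is exactly the result of \cite{Mueller-Hermes2016,Christandl2016} applied in $M_N(\mathbb{C})$. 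Finally, since monotonicity of the regret function $D_{-H}$ was defined for affine maps $\Phi:\mathcal{S}\to\mathcal{S}$, and every positive trace-preserving map on the state space is affine, this establishes the claim; one should also note that the conclusion is only asserted where the entropy $H$ is concave, so that $D_{-H}$ is a genuine Bregman divergence in the sense of the paper.

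The main obstacle is the extension step: one must be careful that a positive trace-preserving map defined only on the state space of the $C^{*}$-algebra (equivalently on its self-adjoint part, or on the whole algebra by linear extension) really does extend to a positive trace-preserving map on all of $M_N(\mathbb{C})$ with the stated compatibility. The clean way is to use the trace-preserving conditional expectation $\mathcal{E}:M_N(\mathbb{C})\to\mathcal{M}$ (projection onto block-diagonal matrices), which is positive and trace-preserving, and set $\tilde\Phi=\iota\circ\Phi\circ\mathcal{E}$; then $\tilde\Phi$ is a composition of positive trace-preserving maps hence positive trace-preserving, and $\tilde\Phi|_{\iota(\mathcal{S})}=\iota\circ\Phi$ because $\mathcal{E}$ fixes block-diagonal states. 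Everything else is bookkeeping with the already-derived trace formula for $D_{-H}$.
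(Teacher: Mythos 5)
Your proposal is correct and follows essentially the same route as the paper's own proof: embed the finite dimensional $C^{*}$-algebra in a full matrix algebra, compose $\Phi$ with the trace-preserving conditional expectation onto the (block-diagonal) subalgebra to obtain a positive trace-preserving map on the full matrix algebra, and invoke the M{\"u}ller-Hermes--Reeb monotonicity result, using that the conditional expectation fixes states of the subalgebra. Your write-up is somewhat more explicit about the inclusion map and about the compatibility of the divergence formulas, but the argument is the same.
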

\begin{proof}
Any finite dimensional $C^{*}$-algebra $\mathcal{B}$ can be
embedded in $\mathbb{B}\left(\mathbb{H}\right)$
and there exist a conditional expectation
$\mathbb{E}:\mathbb{B}\left(\mathbb{H}\right)\to\mathcal{B}.$
If $\Phi$ is a positive trace preserving map of the density
matrices of $\mathcal{B}$ into it self then $\Phi\circ\mathbb{E}$ is
positive and trace preserving on $\mathbb{B}\left(\mathbb{H}\right).$
According to M{\"u}ller-Hermes and Reeb \cite{Mueller-Hermes2016} we have\[
D\left(\left.\Phi\circ\mathbb{E}\left(s_{1}\right)\right\Vert
\Phi\circ\mathbb{E}\left(s_{2}\right)\right)\leq
D\left(\left.s_{1}\right\Vert
s_{2}\right)
\]
for density matrices in $\mathbb{B}\left(\mathbb{H}\right).$ In
particular this inequality holds for density matrices in $\mathcal{B}$ and for such
matrices we have $\mathbb{E}\left(s_{i}\right)=s_{i}$.
\end{proof}

\subsection{Sufficiency}

The notion of sufficiency plays an important role in statistics and related fields. We shall present a definition of sufficiency that is based on \cite{Petz1988}, but there are a number of other equivalent ways of defining this
concept. We refer to \cite{Jencova2006} where the notion of sufficiency
is discussed in great detail.
\begin{Definition}
Let $\left(s_{\theta}\right)_{\theta}$ denote a family of states and let $\Phi$ denote an affine map
$\mathcal{S}\to\mathcal{T}$
where $\mathcal{S}$ and $\mathcal{T}$ denote state spaces. 
A \emph{recovery map} is an affine map
$\Psi:\mathcal{T}\to\mathcal{S}$
such that $\Psi\left(\Phi\left(s_{\theta}\right)\right)=s_{\theta}.$
The map $\Phi$ is said to be \emph{sufficient} for
$\left(s_{\theta}\right)_{\theta}$
if $\Phi$ has a recovery map. 
\end{Definition}
\begin{Prop}
Assume  $D_{F}$ is a regret function based on a convex and continuous function $F$ and assume that $\Phi$ is sufficient for
$s_{1}$
and $s_{2}$ with recovery map $\Psi$. Assume that both $\Phi^*$ and $\Psi^*$ map the set of feasible actions $\mathcal{A}$ into itself. Then
\[
D_{F}\left(\Phi\left(s_{1}\right),\Phi\left(s_{2}\right)\right)=D_{F}\left(s_{1},s_{2}\right).
\]
\end{Prop}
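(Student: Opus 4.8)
The plan is to exploit the principle of lost opportunities (Proposition~\ref{prop:lostoppotunities}) twice: once for $\Phi$ and once for $\Psi$. Since $\Phi^*$ maps $\mathcal{A}$ into itself, we get $F(\Phi(s))\leq F(s)$ for every state $s$; similarly $F(\Psi(t))\leq F(t)$ for every state $t$ in $\mathcal{T}$. Because $\Psi\circ\Phi$ fixes $s_1$ and $s_2$, chaining these inequalities gives $F(\Phi(s_i))\leq F(s_i)=F(\Psi(\Phi(s_i)))\leq F(\Phi(s_i))$, hence $F(\Phi(s_1))=F(s_1)$ and $F(\Phi(s_2))=F(s_2)$. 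So $\Phi$ preserves the value of $F$ at the two states of interest; the same argument applied along the line segment will be what we actually need.

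The key observation is that by the remark following Definition~\ref{def:regret}, the regret $D_F(s_1,s_0)$ depends only on the restriction of $F$ to the segment $[s_0,s_1]$, and by Equation~(\ref{eq:rightderiv}) it equals $f(1)-(f(0)+f'_+(0))$ where $f(t)=F((1-t)s_0+ts_1)$. So first I would show that $\Phi$, being affine, maps the segment $[s_2,s_1]$ onto the segment $[\Phi(s_2),\Phi(s_1)]$, and that the corresponding one-variable function $\tilde f(t)=F((1-t)\Phi(s_2)+t\Phi(s_1))=F(\Phi((1-t)s_2+ts_1))$ satisfies $\tilde f(t)\leq f(t)$ for all $t\in[0,1]$ by lost opportunities. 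Next I would run the recovery argument not just at the endpoints but at every point of the segment: since $\Psi$ is affine and recovers $s_1,s_2$, it maps $\Phi((1-t)s_2+ts_1)=(1-t)\Phi(s_2)+t\Phi(s_1)$ back to $(1-t)s_2+ts_1$, so applying lost opportunities for $\Psi$ gives $f(t)=F((1-t)s_2+ts_1)=F(\Psi(\cdots))\leq F(\cdots)=\tilde f(t)$. Combining the two, $\tilde f\equiv f$ on $[0,1]$.

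Once $\tilde f=f$ identically, the formula~(\ref{eq:rightderiv}) gives immediately
\[
D_F(\Phi(s_1),\Phi(s_2))=\tilde f(1)-\bigl(\tilde f(0)+\tilde f'_+(0)\bigr)=f(1)-\bigl(f(0)+f'_+(0)\bigr)=D_F(s_1,s_2),
\]
which is the desired identity. (If the regret is infinite the same computation goes through with $f'_+(0)=\infty$, as noted after Equation~(\ref{eq:rightderiv}).)

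The main obstacle is making sure the argument is valid at the \emph{interior} points of the segment and not merely at the two endpoints $s_1,s_2$: the definition of sufficiency only asserts $\Psi(\Phi(s_\theta))=s_\theta$ for the states in the family, but affinity of both $\Phi$ and $\Psi$ is exactly what propagates the recovery property to every convex combination $(1-t)s_2+ts_1$, so this is really just a matter of writing the affine bookkeeping carefully. A secondary point to check is that the hypothesis that $\Phi^*$ and $\Psi^*$ preserve $\mathcal{A}$ is genuinely needed for both directions of the inequality; since it is assumed for both maps, Proposition~\ref{prop:lostoppotunities} applies verbatim in each direction and no further regularity of $F$ (such as differentiability) is required.
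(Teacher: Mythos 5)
Your proof is correct, but it takes a genuinely different route from the paper's. The paper argues at the level of actions: it first shows $F(\Phi(s_2))=F(s_2)$ by sandwiching with the principle of lost opportunities, then shows that $\Psi^*$ carries actions optimal for $s_2$ to actions optimal for $\Phi(s_2)$, and finally compares the two infima defining the regrets, obtaining one inequality and appealing to symmetry (swapping the roles of $\Phi$ and $\Psi$) for the other. You instead work at the level of the function $F$ itself: you use affinity of $\Phi$ and $\Psi$ to show that the segment $[s_2,s_1]$ and its image $[\Phi(s_2),\Phi(s_1)]$ are carried onto each other with $F$ taking identical values at corresponding points, and then read off the equality of regrets from the one-dimensional formula (\ref{eq:rightderiv}). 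Your observation that the recovery property propagates from the two endpoints to every convex combination is exactly the right point to flag, and it is what makes the argument work. What your approach buys: it sidesteps any discussion of which actions are optimal for which states (in particular it is insensitive to whether $s_2$ admits an exactly optimal action or only asymptotically optimal sequences, a point the paper's proof glosses over by writing the infimum over optimal actions), and it actually yields the stronger conclusion that the regret between \emph{any} two states on the segment equals the regret between their images. What it costs: you lean on Equation (\ref{eq:rightderiv}), which the paper asserts by reference to Figure \ref{Fig:breg} rather than proving in detail, and whose validity implicitly requires that the feasible actions are rich enough (effectively $\mathcal{A}=\mathcal{A}_F$) for the regret to depend only on the restriction of $F$ to the segment; the paper's action-level proof uses only the stated hypothesis that $\Phi^*$ and $\Psi^*$ preserve $\mathcal{A}$. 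Since the paper treats (\ref{eq:rightderiv}) as established, this is a legitimate dependency rather than a gap.
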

\begin{proof}
According to the principle of lest opportunities (Proposition \ref{prop:lostoppotunities}) we have
\begin{align*}
F\left(s_{2}\right) &
=F\left(\Psi\left(\Phi\left(s_{2}\right)\right)\right)\\
 & \leq F\left(\Phi\left(s_{2}\right)\right)\\
 & \leq F\left(s_{2}\right).
\end{align*}
Therefore
$F\left(\Phi\left(s_{2}\right)\right)=F\left(s_{2}\right).$
Let $a$ denote an action that is optimal for $s_{2}.$ Then
\begin{align*}
F\left(\Phi\left(s_{2}\right)\right) & =F\left(s_{2}\right)\\
 & =\left\langle a,s_{2}\right\rangle \\
& =\left\langle
a,\Psi\left(\Phi\left(s_{2}\right)\right)\right\rangle \\
& =\left\langle \Psi^*(a),\Phi\left(s_{2}\right)\right\rangle\end{align*}
and we see that $\Psi^*(a)$ is optimal for
$\Phi\left(s_{2}\right).$
Now
\begin{align*}
D_{F}\left(s_{1},s_{2}\right) &
=\inf_{a}\left(F\left(s_{1}\right)-\left\langle
a,s_{1}\right\rangle \right)\\
& =\inf_{a}\left(F\left(s_{1}\right)-\left\langle
\Psi^*(a),\Phi\left(s_{1}\right)\right\rangle \right)
\end{align*}
where the infimum is taken over actions $a$ that are optimal for $s_{2}.$ Then 
\begin{align*}
\inf_{a}\left(F\left(s_{1}\right)-\left\langle
\Psi^*(a),\Phi\left(s_{1}\right)\right\rangle \right) &
\geq\inf_{\tilde{a}}\left(F\left(\Phi\left(s_{1}\right)\right)-\left\langle
\tilde{a},\Phi\left(s_{1}\right)\right\rangle \right)\\
&
=D_{F}\left(\Phi\left(s_{1}\right),\Phi\left(s_{2}\right)\right)
\end{align*}
so we have $D_{F}\left(s_{1},s_{2}\right)\geq
D_{F}\left(\Phi\left(s_{1}\right),\Phi\left(s_{2}\right)\right).$
The reverse inequality is proved in the same way.
\end{proof}
The notion of sufficiency as a property of divergences was
introduced in \cite{Harremoes2007a}. The crucial idea of restricting the
attention to maps of the state space into itself was introduced
in \cite{Jiao2014}. It was shown in \cite{Jiao2014} that a Bregman
divergence on the simplex of distributions on an alphabet that
is not binary and satisfies sufficiency equals information divergence up a multiplicative
factor. Here we extend the notion of sufficiency from Bregman
divergences to regret functions.
\begin{Definition}
Let $D_F$ denote a regret function based on a convex and continuous function $F$ on a state space ${\mathcal S}$. We say $D_{F}$ satisfies \emph{sufficiency} if 
\[
D_{F}\left(\Phi\left(s_{1}\right),\Phi\left(s_{2}\right)\right)=D_{F}\left(s_{1},s_{2}\right)
\]
for any affine map $\mathcal{S}\to\mathcal{S}$ that
is sufficient for $\left(s_{1},s_{2}\right).$ 
\end{Definition}

\begin{Prop}
\label{prop:sufficiency}Let $D_F$ denote a regret function based on a convex and continuous function $F$ on a state space ${\mathcal S}$. If the regret function $D_{F}$ is monotone then it satisfies sufficiency.
\end{Prop}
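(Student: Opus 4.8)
The plan is to apply the monotonicity inequality twice, once for the map $\Phi$ and once for its recovery map $\Psi$, and to observe that the two resulting inequalities point in opposite directions.

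First I would fix states $s_{1},s_{2}\in\mathcal{S}$ and an affine map $\Phi:\mathcal{S}\to\mathcal{S}$ that is sufficient for $\left(s_{1},s_{2}\right)$. By the definition of sufficiency there is an affine recovery map $\Psi:\mathcal{S}\to\mathcal{S}$ with $\Psi\left(\Phi\left(s_{i}\right)\right)=s_{i}$ for $i=1,2$. Since $\Phi$ is an affine map of $\mathcal{S}$ into itself, monotonicity of $D_{F}$ gives directly
\[
D_{F}\left(\Phi\left(s_{1}\right),\Phi\left(s_{2}\right)\right)\leq D_{F}\left(s_{1},s_{2}\right).
\]

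Next I would use that $\Psi$ is also an affine map of $\mathcal{S}$ into itself, so monotonicity applies with $\Psi$ in place of $\Phi$ and with the states $\Phi\left(s_{1}\right),\Phi\left(s_{2}\right)$ in place of $s_{1},s_{2}$:
\[
D_{F}\left(\Psi\left(\Phi\left(s_{1}\right)\right),\Psi\left(\Phi\left(s_{2}\right)\right)\right)\leq D_{F}\left(\Phi\left(s_{1}\right),\Phi\left(s_{2}\right)\right).
\]
The recovery property $\Psi\left(\Phi\left(s_{i}\right)\right)=s_{i}$ turns the left-hand side into $D_{F}\left(s_{1},s_{2}\right)$, so combining the two displayed inequalities yields $D_{F}\left(\Phi\left(s_{1}\right),\Phi\left(s_{2}\right)\right)=D_{F}\left(s_{1},s_{2}\right)$, which is precisely the sufficiency identity.

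There is essentially no obstacle here; the only point worth noting is that the recovery map, although introduced in the definition of sufficiency as a map between possibly distinct state spaces, is in this setting a genuine affine self-map of $\mathcal{S}$, so it is covered by the hypothesis of monotonicity, which quantifies over all affine maps $\mathcal{S}\to\mathcal{S}$. The same two-sided argument is the one that underlies the earlier proposition showing that regret is preserved under sufficient maps when $\Phi^{*}$ and $\Psi^{*}$ preserve the action set; here monotonicity replaces that action-set hypothesis.
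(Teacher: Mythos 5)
Your proof is correct and is essentially the paper's own argument: apply monotonicity once to $\Phi$ and once to the recovery map $\Psi$, then use $\Psi\circ\Phi=\mathrm{id}$ on $\{s_1,s_2\}$ to close the chain of inequalities into an equality. The only difference is cosmetic (the paper writes the two inequalities as a single chain and swaps the roles of the letters $\Phi$ and $\Psi$), and your remark that the recovery map is itself an affine self-map of $\mathcal{S}$, hence covered by the monotonicity hypothesis, is exactly the point that makes the argument go through.
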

\begin{proof}
Assume that the regret function $D_{F}$ is monotone. Let $s_{1}$ and $s_{2}$ denote two states and let $\Phi$ and $\Psi$ denote maps on the state space such that
$\Phi\left(\Psi\left(s_{i}\right)\right)=s_{i},\,i=1,2$~.
Then 
\begin{align*}
D_{F}\left(s_{1},s_{2}\right) &
=D_{F}\left(\Phi\left(\Psi\left(s_{1}\right)\right),\Phi\left(\Psi\left(s_{2}\right)\right)\right)\\
& \leq
D_{F}\left(\Psi\left(s_{1}\right),\Psi\left(s_{2}\right)\right)\\
 & \leq D_{F}\left(s_{1},s_{2}\right).
\end{align*}
Hence
$D_{F}\left(\Psi\left(s_{1}\right),\Psi\left(s_{2}\right)\right)=D_{F}\left(s_{1},s_{2}\right).$
\end{proof}
Combining the previous results we get that information
divergence satisfies sufficiency. Under some conditions
there exists an inverse version of Proposition
\ref{prop:sufficiency}
stating that if monotonicity holds with equality then the
map is sufficient. In statistics where the state space is a simplex,
this result is well established. For density matrices over the
complex numbers it has been proved for completely positive maps in
\cite{Jencova2006}. Some new results on this topic can be found in
\cite{Jencova2017}.

\subsection{Locallity}

Often it is relevant to use the following weak version of the
sufficiency
property.
\begin{Definition}
Let $D_F$ denote a regret function based on a convex and continuous function $F$ on a state space ${\mathcal S}$. The regret function $D_{F}$ is said to be local if 
\[
D_{F}\left(s_{1},t\cdot s_{1}+\left(1-t\right)\cdot \sigma\right)=D_{F}\left(s_{1},t\cdot s_{1}+\left(1-t\right)\cdot \rho\right)
\]
when the states $\sigma$ and $\rho$ are orthogonal to
$s_{1}$
and $t\in\left]0,1\right[.$
\end{Definition}
\begin{Example}
On a 1-dimensional simplex (an interval) or on the Block sphere
any regret function $D_{F}$ is local.
The reason is that if $\sigma$ and $\rho$ are states that are
orthogonal
to $s_{1}$ then $\sigma=\rho.$
\end{Example}
\begin{Prop}Let $D_F$ denote a regret function based on a convex and continuous function $F$ on a state space ${\mathcal S}$. If the regret function $D_{F}$ satisfies sufficiency then $D_{F}$ is local. 
\end{Prop}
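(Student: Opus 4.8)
The plan is to realize the two ``garbled'' states $t\cdot s_1+(1-t)\cdot\sigma$ and $t\cdot s_1+(1-t)\cdot\rho$ as images of one another under a pair of mutually recovering affine self-maps of the state space. Once such a pair is in hand, the map pushing the first state to the second is sufficient for the pair $\left(s_1,\,t\cdot s_1+(1-t)\cdot\sigma\right)$, and the sufficiency property of $D_F$ equates the two regrets, which is precisely the locality identity.

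Concretely, let $p$ be the support projection of $s_1$, so $\mathrm{tr}(s_1 p)=1$. Since $\sigma$ and $\rho$ are orthogonal to $s_1$ their supports lie under $\mathbf 1-p$, hence $\mathrm{tr}(\sigma p)=\mathrm{tr}(\rho p)=0$. Define affine maps $\Phi,\Psi:\mathcal S\to\mathcal S$ by
\[
\Phi(s)=\mathrm{tr}(sp)\cdot s_1+\mathrm{tr}\!\left(s(\mathbf 1-p)\right)\cdot\rho,
\qquad
\Psi(s)=\mathrm{tr}(sp)\cdot s_1+\mathrm{tr}\!\left(s(\mathbf 1-p)\right)\cdot\sigma .
\]
Each maps $\mathcal S$ into $\mathcal S$: the coefficients $\mathrm{tr}(sp)$ and $\mathrm{tr}\!\left(s(\mathbf 1-p)\right)$ are nonnegative on states and sum to $1$, so $\Phi(s)$ and $\Psi(s)$ are convex combinations of density matrices, and both maps are affine because $s\mapsto\mathrm{tr}(sp)$ is linear.

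It remains only to read off the relevant evaluations. From $\mathrm{tr}(s_1p)=1$ we get $\Phi(s_1)=s_1$ and $\Psi(s_1)=s_1$; from $\mathrm{tr}(\sigma p)=\mathrm{tr}(\rho p)=0$ we get $\Phi(\sigma)=\rho$ and $\Psi(\rho)=\sigma$. Writing $s_2=t\cdot s_1+(1-t)\cdot\sigma$ and using affinity, $\Phi(s_2)=t\cdot s_1+(1-t)\cdot\rho$, while $\Psi\!\left(\Phi(s_1)\right)=s_1$ and $\Psi\!\left(\Phi(s_2)\right)=\Psi\!\left(t\cdot s_1+(1-t)\cdot\rho\right)=t\cdot s_1+(1-t)\cdot\sigma=s_2$. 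Thus $\Psi$ is a recovery map for $\Phi$ on $\left(s_1,s_2\right)$, i.e. $\Phi$ is sufficient for $\left(s_1,s_2\right)$, and the sufficiency property of $D_F$ gives $D_F\!\left(\Phi(s_1),\Phi(s_2)\right)=D_F\!\left(s_1,s_2\right)$, that is $D_F\!\left(s_1,\,t\cdot s_1+(1-t)\cdot\rho\right)=D_F\!\left(s_1,\,t\cdot s_1+(1-t)\cdot\sigma\right)$.

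There is no genuinely hard step: the only things to check are that $\Phi$ and $\Psi$ really take values in $\mathcal S$ (immediate from their convex-combination form) and that the sufficiency hypothesis applies to an affine self-map equipped with an affine recovery map, which holds by definition. For a general spectral state space the same construction goes through verbatim once $p$ and $\mathbf 1-p$ are replaced by a complementary pair of effects (measurements) $e$ and $\mathbf 1-e$ with $e(s_1)=1$ annihilating every state orthogonal to $s_1$, so that $\Phi(s)=e(s)\cdot s_1+\left(1-e(s)\right)\cdot\rho$ and $\Psi(s)=e(s)\cdot s_1+\left(1-e(s)\right)\cdot\sigma$.
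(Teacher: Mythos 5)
Your proof is correct and follows essentially the same route as the paper: you construct the same pair of affine maps $\Phi(s)={\rm tr}(sp)\cdot s_1+(1-{\rm tr}(sp))\cdot\rho$ and $\Psi(s)={\rm tr}(sp)\cdot s_1+(1-{\rm tr}(sp))\cdot\sigma$, check that they fix $s_1$ and interchange the two mixtures, and invoke the sufficiency hypothesis. Your version is in fact slightly more careful than the paper's, since you explicitly verify that the maps land in $\mathcal S$ and that $\Psi$ really is a recovery map for $\Phi$ on the pair $\left(s_1,\,t\cdot s_1+(1-t)\cdot\sigma\right)$.
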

\begin{proof}
Let $\sigma$ and $\rho$ be states that are orthogonal to
$s_{1}.$ Let $p$ denote the projection supporting the state $s_0$.
 Let the maps $\Phi$ and $\Psi$ be defined by
\begin{align*}
\Phi\left(s\right) &
={\rm tr}(ps)\cdot s_{1}+(1-{\rm tr}(ps))\cdot \rho ,\\
\Psi\left(s\right) &
={\rm tr}(ps)\cdot s_{1}+(1-{\rm tr}(ps))\cdot \sigma.
\end{align*}
Then $\Phi\left(s_{1}\right)=\Psi\left(s_{1}\right)=s_{1}$ and
$\Phi\left(\sigma\right)=\rho$
and $\Psi\left(\rho\right)=\sigma.$ Therefore 
\begin{align*}
\Phi\left(t\cdot s_{1}+\left(1-t\right)\cdot \sigma\right) &
=t\cdot s_{1}+\left(1-t\right)\cdot \rho\\
\Psi\left(t\cdot s_{1}+\left(1-t\right)\cdot \rho\right) &
=t\cdot s_{1}+\left(1-t\right)\cdot \sigma
\end{align*}
 and 
\[
D_{F}\left(s_{1},t\cdot s_{1}+\left(1-t\right)\cdot \sigma\right)=D_{F}\left(s_{1},t\cdot s_{1}+\left(1-t\right)\cdot \rho\right).
\]
\end{proof}

\begin{thm}
Let $\mathcal{S}$ be the state space of a $C^*$-algebra with at least three orthogonal
states, and let $D_F$ denote a regret function based on a convex and continuous function $F$ on the state space ${\mathcal S}$.
If the regret function $D_{F}$ is local then it is the Bregman divergence generated by the entropy times a negative constant. 
\end{thm}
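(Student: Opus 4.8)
The plan is to reduce the statement to a functional-equation problem on the spectra of states and then invoke the classical characterization of Shannon entropy. First I would observe that, since $D_F$ is local and in particular a regret function on a $C^*$-algebra state space, the locality condition says that $D_F(s_1, t s_1 + (1-t)\sigma)$ depends only on $s_1$ and $t$, not on the particular state $\sigma$ orthogonal to $s_1$. Fixing a rank-one projection $s_1 = p$ and a complementary state $\sigma$ supported on the orthocomplement, the restriction of $F$ to the segment $[s_1,\sigma]$ is a convex function $f(t) = F((1-t)\sigma + t s_1)$, and by Equation~(\ref{eq:rightderiv}) the regret along such segments is controlled by $f$ and its right derivative. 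The first real step is to show that $F$ is differentiable, hence (by the Proposition characterizing Bregman divergences) that $D_F$ is a genuine Bregman divergence: I would use locality together with a dilation/averaging argument in the spirit of the proof of Proposition~\ref{prop:monoBreg} — if $F$ had a kink along some segment from $s_1$ to an orthogonal state, one could compare two different orthogonal endpoints and violate the locality identity. (Alternatively, one first proves the representation below on the dense set of differentiability points and extends by continuity of $F$.)

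Next I would extract a function of one real variable. Using that $F$ is (now) differentiable and that entropy $H(s) = -\mathrm{tr}[s\ln s]$ is the benchmark, define $\phi$ on states by $\phi(s) = F(s) - (\text{affine part determined by }\nabla F \text{ at some fixed reference state})$, so that $\phi \geq 0$ and $\phi$ vanishes appropriately. Locality, applied with $s_1$ ranging over pure states and $\sigma,\rho$ over orthogonal states, forces $F$ to respect orthogonal decompositions in a strong way: the value $D_F(p, tp + (1-t)\sigma)$ is a function $g(t)$ of $t$ alone, independent of $p$ and $\sigma$. Writing this out via Equation~(\ref{eq:rightderiv}) gives a relation between $f(1)$, $f(0)$, $f'_+(0)$ on \emph{every} such segment, which (because the endpoints $\sigma$ can themselves be decomposed) yields an additivity/recursion property for $F$ on mixtures of mutually orthogonal pure states. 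Concretely, if $s = \sum_i \lambda_i s_i$ is an orthogonal decomposition, one should be able to derive that $F(s) = \sum_i \lambda_i F(s_i) + c\bigl(-\sum_i \lambda_i \ln \lambda_i\bigr) + (\text{affine in }s)$ for some constant $c$, i.e. the ``grouping'' or branching identity that characterizes the Shannon entropy among all such functions. Here the hypothesis of \emph{at least three orthogonal states} is exactly what is needed to run the Faddeev/Kolmogorov-style uniqueness argument (the branching relation is vacuous with only two orthogonal states — cf. the Example about the Bloch sphere and $1$-dimensional simplex where every regret is local).

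The main obstacle, I expect, is the passage from the one-parameter locality identity to the full branching identity on the lattice of orthogonal decompositions: locality only directly constrains $F$ along segments joining a pure state to an orthogonal state, whereas to invoke the entropy-characterization theorem one needs control of $F$ on arbitrary orthogonal mixtures and needs to know the ``measurement of $\lambda_i$'' behaves consistently when a block is refined. The way through is to iterate locality: given $s = \sum \lambda_i s_i$, peel off one pure component $s_1$ at a time, at each stage using that the remainder $\tfrac{1}{1-\lambda_1}\sum_{i\geq 2}\lambda_i s_i$ is orthogonal to $s_1$ and that locality makes the regret insensitive to \emph{which} orthogonal remainder appears. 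Combined with differentiability (so that the Bregman/affine reconstruction Equation~(\ref{eq:rekonst}) applies) and the convexity of $F$, this pins down the one-variable function appearing in the recursion to be $\lambda \mapsto -c\,\lambda\ln\lambda$ with $c \geq 0$ (sign forced by convexity of $F$, equivalently concavity of the resulting entropy), and hence $F = -cH + (\text{affine})$, so $D_F = c\,D_{-H}$. Finally one records that $c$ must be nonnegative for $D_F$ to be a legitimate regret function, which finishes the identification with information divergence up to a nonnegative multiplicative constant.
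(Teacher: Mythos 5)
Your reduction to the simplex $K$ spanned by orthogonal states and the idea of iterating locality to ``peel off'' one orthogonal component at a time is essentially the first half of the paper's argument: it yields the coordinate-wise representation $D_F(P,Q)=\sum_i p_i\bigl(g_i(q_i)-g_i(p_i)\bigr)$ with $g_i(x)=D_F\bigl(s_i,x s_i+(1-x)s_{i+1}\bigr)$. One correction already at this stage: locality makes this quantity independent of the orthogonal state $\sigma$, but \emph{not} independent of the pure state $s_i$; the functions $g_i$ genuinely depend on $i$ (they end up as $c_K\ln x+m_i$ with $i$-dependent constants $m_i$), so your claim that $D_F(p,tp+(1-t)\sigma)$ is a function of $t$ alone is too strong, though not fatally so.

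The genuine gap is in the second half. You propose to conclude by deriving a ``branching identity'' $F(s)=\sum_i\lambda_iF(s_i)+c\bigl(-\sum_i\lambda_i\ln\lambda_i\bigr)+(\text{affine})$ and invoking a Faddeev/Kolmogorov-type characterization of entropy, but that identity \emph{is} the conclusion; what locality actually gives is only the sum representation above, with unknown one-variable functions $g_i$, and no symmetry, normalization, or recursion of the form required by those characterization theorems has been established. The engine that pins down the $g_i$ in the paper is different and your proposal never touches it: the fact that $Q\mapsto D_F(P,Q)$ is nonnegative and minimized at $Q=P$. From this the paper extracts, by symmetrizing over pairs of coordinates, that $g_{ij}=(g_i+g_j)/2$ is midpoint convex, hence convex, then (by one-sided differentiation of the optimality inequality) differentiable, and then that each $g_i=g_{ij}+g_{ik}-g_{jk}$ is differentiable --- this identity is exactly where ``at least three orthogonal states'' is used, not only in a uniqueness argument as you suggest. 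A Lagrange-multiplier computation at the minimum $Q=P$ then gives $p_ig_i'(p_i)=c_K$, i.e.\ $g_i(x)=c_K\ln x+m_i$, which is where the logarithm appears. Relatedly, your opening claim that locality rules out kinks of $F$ by ``comparing two orthogonal endpoints'' is unsubstantiated: a function with the same kink structure along every such segment satisfies locality, and the paper has to work considerably harder (continuity of $P\mapsto D_F(P,Q)$, monotone approximating sequences, and the differentiability of the $g_i$ just described) to establish that $D_F$ is a Bregman divergence.
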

\begin{proof}
In the following proof we will assume that the regret function
is based on the convex function $F:\mathcal{S}\to\mathbb{R}.$  First we will prove that the regret function is a Bregman divergence.

Let $K$ denote the convex hull of a set $s_{0},s_{1},\dots
s_{n}$ of orthogonal states. For $x\in[0,1[$ let $g_{i}$ denote the function
$g_{i}\left(x\right)=D_{F}\left(s_{i},xs_{i}+\left(1-x\right)s_{i+1}\right)$.
Note that $g_{i}$ is decreasing and continuous from the left.
Let $P=\sum p_{i}s_{i}$ and $Q=\sum q_{i}s_{i}$ where $p_{i} , q_{i} \in\left] 0,1 \right[$ for all $i=0,1,2,\dots n$. If $F$ is
differentiable in $P$ then locality implies that 
\begin{eqnarray*}
D_{F}\left(P,Q\right) & = & \sum
p_{i}D_{F}\left(s_{i},Q\right)-\sum
p_{i}D_{F}\left(s_{i},P\right)\\
& = & \sum p_{i}g_{i}\left(q_{i}\right)-\sum
p_{i}g_{i}\left(p_{i}\right)\\
& = & \sum
p_{i}\left(g_{i}\left(q_{i}\right)-g_{i}\left(p_{i}\right)\right).
\end{eqnarray*}
Note that $P\to D_{F}\left(P,Q\right)$ is a convex function and
thereby it is continuous. Assume that $P_{0}$ is an arbitrary element in $K$ and let $\left(P_{n}\right)_{n\in\mathbb{N}}$ denote a sequence such that $P_{n}\to P_{0}$ for $n\to\infty.$ The sequence
$\left(P_{n}\right)_{n\in\mathbb{N}}$
can be choosen so that regret is differentiable in $P_{n}$ for
all $n\in\mathbb{N}.$ Further the sequence $P_{n}$ can be chosen
such that $p_{n,i}$ is increasing for all $i\neq j.$ Then 
\[
D_{F}\left(P_{0},Q\right)=\sum
p_{0,i}\left(g_{i}\left(q_{i}\right)-g_{i}\left(p_{0,i}\right)\right)+p_{0,j}g_{j}\left(p_{0,j}\right)-p_{0,j}\lim_{n\to\infty}g_{j}\left(p_{n,j}\right).
\]
Similarly, if the sequence $P_{n}$ can be chosen such that
$p_{n,i}$ is increasing for all $i\neq j,j+1$ then 
\begin{multline*}
D_{F}\left(P_{0},Q\right)=
\sum
p_{0,i}\left(g_{i}\left(q_{i}\right)-g_{i}\left(p_{0,i}\right)\right)+p_{0,j}g_{j}\left(p_{0,j}\right)-p_{0,j}\lim_{n\to\infty}g_{j}\left(p_{n,j}\right)\\
+p_{0,j+1}g_{j+1}\left(p_{0,j+1}\right)-p_{0,j+1}\lim_{n\to\infty}g_{j+1}\left(p_{n,j+1}\right),
\end{multline*}
which implies that
$p_{0,j+1}g_{j+1}\left(p_{0,j+1}\right)-p_{0,j+1}\lim_{n\to\infty}g_{j+1}\left(p_{n,j+1}\right)=0$
and that 
\[
\lim_{n\to\infty}g_{j+1}\left(p_{n,j+1}\right)=g_{j+1}\left(p_{0,j+1}\right)
\]
for all $j$. Therefore
\begin{equation}\label{eq:local}
D_{F}\left(P_{0},Q\right)=\sum
p_{0,i}\left(g_{i}\left(q_{i}\right)-g_{i}\left(p_{0,i}\right)\right)
\end{equation}
for all $P_{0},Q$ in the interior of $K$. In the following calculations we will assume that the distributions lie in the interior of $K$. The validity of the Bregman identity (\ref{eq:Bregmanid}) follows directly from Equation \ref{eq:local} implying that $D_F$ is a Bregman divergence.

As a function of $Q$ the regret is minimal when $Q=P.$ In the
following calculations we write $x=p_i$, $z=p_j$, $y=q_i$, and
$w=q_j$. If
$p_{\ell}=q_{\ell}$ for $\ell \neq i,j$ then non-negativity of
regret can be written as
\[
x\left(g_{i}\left(y\right)-g_{i}\left(x\right)\right)+z\left(g_{j}\left(w\right)-g_{j}\left(z\right)\right)\geq0
\]
and we note that this inequality should hold as long as
$x+z=y+w\leq 1.$ Permutation of $i$ and $j$ leads to the
inequality
\[
x\left(g_{j}\left(y\right)-g_{j}\left(x\right)\right)+z\left(g_{i}\left(w\right)-g_{i}\left(z\right)\right)\geq0
\]
that implies
\begin{equation}
x\left(g_{ij}\left(y\right)-g_{ij}\left(x\right)\right)+z\left(g_{ij}\left(w\right)-g_{ij}\left(z\right)\right)\geq0 \label{eq:sym}
\end{equation}
where $g_{ij}=\frac{g_{i}+g_{j}}{2}.$ 

Assume that $x=z=\frac{y+w}{2}$ in Inequality (\ref{eq:sym}). Then
\begin{eqnarray*}
x\left(g_{ij}\left(y\right)-g_{ij}\left(x\right)\right)+x\left(g_{ij}\left(w\right)-g_{ij}\left(x\right)\right)
& \geq & 0\\
g_{ij}\left(y\right)-g_{ij}\left(x\right)+g_{ij}\left(w\right)-g_{ij}\left(x\right)
& \geq & 0\\
\frac{g_{ij}\left(y\right)+g_{ij}\left(w\right)}{2} & \geq &
g_{ij}\left(x\right)
\end{eqnarray*}
so that $g_{ij}$ is mid-point convex, which for a measurable
function implies convexity. Therefore $g_{ij}$ is differentiable from
left and right. 

If $y=w$ and $x=y+\epsilon$ and $z=y-\epsilon$ then
we have
\[
\left(y+\epsilon\right)\left(g_{ij}\left(y\right)-g_{ij}\left(y+\epsilon\right)\right)+\left(y-\epsilon\right)\left(g_{ij}\left(y\right)-g_{ij}\left(y-\epsilon\right)\right)\geq0
\]
with equality when $\epsilon=0.$ We differentiate with respect
to
$\epsilon$ from right. 
\[
\left(g_{ij}\left(y\right)-g_{ij}\left(y+\epsilon\right)\right)+\left(y+\epsilon\right)\left(-g_{ij+}'\left(y+\epsilon\right)\right)-\left(g_{ij}\left(y\right)-g_{ij}\left(y-\epsilon\right)\right)+\left(y-\epsilon\right)\left(g_{ij-}'\left(y-\epsilon\right)\right)
\]
 which is positive for $\epsilon=0$ so that 
\begin{eqnarray}
-y\cdot g_{ij+}'\left(y\right)+y\cdot g_{ij-}'\left(y\right) &
\geq & 0\\
y\cdot g_{ij-}'\left(y\right) & \geq & y\cdot
g_{ij+}'\left(y\right).\label{eq:f_ij}
\end{eqnarray}
Since $g_{ij}$ is convex we have $g_{ij-}'\left(y\right)\leq
g_{ij+}'\left(y\right)$
which in combination Inequality (\ref{eq:f_ij}) implies that
$g_{ij-}'\left(y\right)=g_{ij+}'\left(y\right)$
so that $g_{ij}$ is differentiable. Since
$g_{i}=g_{ij}+g_{ik}-g_{jk}$
the function $g_{i}$ is also differentiable. 

As a function of $Q$ the Bregman divergence $D_{F}(P,Q)$ has a
minimum at $Q=P$ under the condition $\sum q_i =1$. Since the
functions $g_i$ are
differentiable we can characterize this minimum using Lagrange
multipliers. We have
\[
\frac{\partial}{\partial
q_{i}}D_{F}\left(P,Q\right)=p_{i}g_{i}'\left(q_{i}\right)
\]
 and 
\[
\frac{\partial}{\partial q_{i}}D_{F}\left(P,Q\right)_{\mid
Q=P}=p_{i}\cdot g_{i}'\left(p_{i}\right).
\]
Further $\frac{\partial}{\partial q_i}\sum q_i =1$ so there
exist a constant $c_{K}$ such that $p_{i}\cdot
g_{i}'\left(p_{i}\right)=c_{K}.$
Hence $g_{i}'\left(p_{i}\right)=\frac{c_{K}}{p_{i}}$ so that
$g_{i}\left(p_{i}\right)=c_{K}\cdot\ln\left(p_{i}\right)+m_{i}$
for some constant $m_{i}.$ 

Now we get 
\begin{eqnarray*}
D_{F}\left(P,Q\right) & = & \sum
p_{i}\left(g_{i}\left(q_{i}\right)-g_{i}\left(p_{i}\right)\right)\\
& = & \sum
p_{i}\left(\left(c_{K}\cdot\ln\left(q_{i}\right)+m_{i}\right)-\left(c_{K}\cdot\ln\left(p_{i}\right)+m_{i}\right)\right)\\
&=&-c_{K}\cdot\sum
p_{i}\ln\frac{p_{i}}{q_{i}}\\
 & = &D_{ -c_{K}\cdot H}\left(P,Q\right).
\end{eqnarray*}
Therefore there exists an affine function defined on $K$ such that
\begin{equation}\label{eq:affinentropi}
F_{\mid K}(P)=-c_{K}\cdot H_{\mid K}(P)+g_{K}
\end{equation}
for all $P$ in the interior of $K$. Since $H_K$ is continuous on $K$ Equation (\ref{eq:affinentropi}) holds for any $P\in K$. If each of the sets $K$ and $L$ is a 
simplex and $x\in K\cap L$ then  
\[
-c_{K}\cdot H_{\mid
K}\left(x\right)+g_{K}\left(x\right)=-c_{L}\cdot H_{\mid
L}\left(x\right)+g_{L}\left(x\right)
\]
 so that 
\[
\left(c_{L}-c_{K}\right)\cdot H_{\mid
K}\left(x\right)=g_{L}\left(x\right)-g_{K}\left(x\right).
\]
If $K\cap L$ has dimension greater than zero then the right hand side is affine so the left hand side is affine, which is only
possible when $c_{K}=c_{L}.$ Therefore we also have
$g_{L}\left(x\right)=g_{K}\left(x\right)$
for all $x\in K\cap L.$ Therefore the functions $g_{K}$ can be
extended to a single affine function on the whole of $\mathcal{S}.$ 
\end{proof}

\section{Applications}

\subsection{Information theory}

If only integer values of a code-length function $\ell$ are
allowed then there are only finitely many actions that are not
dominated. Therefore the function $F$ given by 
\[
F\left(P\right)=-\min_{\ell}\sum\ell\left(a\right)\cdot p_{a}
\]
is piece-wise linear. In particular $F$ is not differentiable so that the regret is not a Bregman divergence and cannot be monotone according to Proposition \ref{prop:monoBreg}. In information theory monotonicity of a divergence
function is closely related to the \emph{data processing inequality} and since
the data processing inequality is one of the most important tools for
deriving inequalities in information theory we need to modify our notion
of code-length function in order to achieve a data processing
inequality.

We now formulate a version of Kraft's inequality that allow the
code length function to be non-integer valued. 
\begin{thm}
\label{Theorem:Kraft}Let $\ell:\mathbb{A}\rightarrow\mathbb{R}$
be a function. Then the function $\ell$ satisfies Kraft's inequality
(\ref{eq:Kraft})
if and only if for all $\varepsilon>0$ there exists an integer
$n$ and a uniquely decodable fixed-to-variable length block code
$\kappa:\mathbb{A}^{n}\rightarrow\mathbb{B}^{\ast}$
such that 
\[
\left\vert
\bar{\ell}_{\kappa}\left(a^{n}\right)-\frac{1}{n}\sum_{i=1}^{n}\ell\left(a_{i}\right)\right\vert
\leq\varepsilon
\]
where $\bar{\ell}_{\kappa}\left(a^{n}\right)$ denotes the length $\ell_{\kappa}\left(a^{n}\right)$ divided by $n.$ The uniquely decodable block code can be chosen to be prefix free. 
\end{thm}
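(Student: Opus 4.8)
The plan is to prove both directions of the equivalence, treating the "only if" direction (Kraft implies existence of good block codes) as the substantive half.

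For the "if" direction, suppose that for every $\varepsilon > 0$ there is an $n$ and a uniquely decodable code $\kappa:\mathbb{A}^n \to \mathbb{B}^*$ with $|\bar\ell_\kappa(a^n) - \tfrac1n\sum_i \ell(a_i)| \le \varepsilon$ for all $a^n$. A uniquely decodable code on $\mathbb{A}^n$ satisfies Kraft's inequality: $\sum_{a^n} \beta^{-\ell_\kappa(a^n)} \le 1$. Using $\ell_\kappa(a^n) \le \sum_i \ell(a_i) + n\varepsilon$ we get
\[
\sum_{a^n \in \mathbb{A}^n} \beta^{-\sum_i \ell(a_i)} \le \beta^{n\varepsilon}\sum_{a^n}\beta^{-\ell_\kappa(a^n)} \le \beta^{n\varepsilon}.
\]
The left-hand side factors as $\bigl(\sum_{a\in\mathbb{A}}\beta^{-\ell(a)}\bigr)^n$, so $\sum_a \beta^{-\ell(a)} \le \beta^{\varepsilon}$. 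Letting $\varepsilon\to 0$ gives Kraft's inequality (\ref{eq:Kraft}).

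For the "only if" direction, assume $\ell$ satisfies $\sum_a \beta^{-\ell(a)} \le 1$. Fix $\varepsilon > 0$ and choose $n$ large. For a block $a^n = (a_1,\dots,a_n)$ assign the tentative length $L(a^n) = \lceil \sum_{i=1}^n \ell(a_i)\rceil$, an integer. Then $\sum_{a^n}\beta^{-L(a^n)} \le \sum_{a^n}\beta^{-\sum_i\ell(a_i)} = \bigl(\sum_a\beta^{-\ell(a)}\bigr)^n \le 1$, so by the Kraft–McMillan theorem there is a prefix-free block code $\kappa:\mathbb{A}^n\to\mathbb{B}^*$ with $\ell_\kappa(a^n) = L(a^n)$. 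This code is uniquely decodable and prefix free. Finally $0 \le L(a^n) - \sum_i \ell(a_i) < 1$, so dividing by $n$ gives $|\bar\ell_\kappa(a^n) - \tfrac1n\sum_i\ell(a_i)| < 1/n$, which is $\le \varepsilon$ once $n > 1/\varepsilon$.

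The main obstacle — really the only nontrivial point — is making sure one is allowed to round \emph{blockwise} rather than per-symbol: rounding each $\ell(a_i)$ up individually would only yield $\sum_a \beta^{-\lceil\ell(a)\rceil} \le 1$ at the cost of a per-symbol overhead of up to $1$ bit, which does not vanish. Rounding the \emph{sum} $\sum_i\ell(a_i)$ once per block of length $n$ spreads a single bit of overhead across $n$ symbols, which is exactly what forces the $\varepsilon$-gap; this is the standard block-coding trick and the reason the statement is phrased with a limit over $n$. One should also note the care needed if some $\ell(a)$ can be negative or non-integer: the argument only uses $\sum_a\beta^{-\ell(a)}\le 1$ and the fact that $L(a^n)\ge 0$ when $n$ is large enough (or one restricts to $\ell\ge 0$, which is the relevant case for code lengths), and the Kraft–McMillan construction of a prefix code from any integer length function satisfying Kraft's inequality.
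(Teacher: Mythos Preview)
Your proof is correct and follows essentially the same route as the paper's: for the forward direction you round the block sum $\sum_i \ell(a_i)$ up to the nearest integer, verify Kraft on $\mathbb{A}^n$ via the product formula, invoke the Kraft--McMillan construction of a prefix code, and bound the per-symbol discrepancy by $1/n$; for the converse you raise Kraft on $\mathbb{A}^n$ for the block code to the $1/n$ power and let $\varepsilon\to 0$. Your closing remark about negative $\ell(a)$ is harmless but unnecessary, since $\sum_a \beta^{-\ell(a)}\le 1$ already forces $\ell(a)\ge 0$ for every $a$.
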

\begin{proof}
Assume that $\ell$ satisfies Kraft's inequality. Then 
\[
\sum_{a_{1}a_{2}...a_{n}\in\mathbb{A}^{n}}\beta^{\textrm{-}\sum_{i=1}^{n}\ell\left(a_{i}\right)}=\left(\sum_{a\in\mathbb{A}}\beta^{\textrm{-}\ell\left(a\right)}\right)^{n}\leq1^{n}=1.
\]
Therefore the function
$\tilde{\ell}:\mathbb{A}^{n}\rightarrow\mathbb{N}$
given by 
\[
\tilde{\ell}\left(a_{1}a_{2}...a_{n}\right)=\left\lceil
\sum_{i=1}^{n}\ell\left(a_{i}\right)\right\rceil
\]
is integer valued and satisfies Kraft's inequality (\ref{eq:Kraft}) and there
exists a prefix-free code $\kappa:\mathbb{A}^{n}\rightarrow\left\{
0,1\right\} ^{\ast}$ such that $\ell_{\kappa}\left(a_{1}a_{2}...a_{n}\right)=\tilde{\ell}\left(a_{1}a_{2}...a_{n}\right).$
Therefore 
\[
\left\vert
\bar{\ell}_{\kappa}\left(a_{1}a_{2}...a_{n}\right)-\frac{1}{n}\sum_{i=1}^{n}\ell\left(a_{i}\right)\right\vert
=\frac{1}{n}\left\vert
\left\lceil \sum_{i=1}^{n}\ell\left(a_{i}\right)\right\rceil
-\sum_{i=1}^{n}\ell\left(a_{i}\right)\right\vert \leq\frac{1}{n}
\]
so for any $\varepsilon>0$ choose $n$ such that
$\nicefrac{1}{n}\leq\varepsilon.$

Assume that for all $\varepsilon>0$ there exists a uniquely
decodable fixed-to-variable length code
$\kappa:\mathbb{A}^{n}\rightarrow\left\{ 0,1\right\} ^{\ast}$
such that 
\[
\left\vert
\bar{\ell}_{\kappa}\left(a_{1}a_{2}...a_{n}\right)-\frac{1}{n}\sum_{i=1}^{n}\ell\left(a_{i}\right)\right\vert
\leq\varepsilon
\]
for all strings $a_{1}a_{2}...a_{n}\in\mathbb{A}^{n}.$ Then
$n\bar{\ell}_{\kappa}\left(a_{1}a_{2}...a_{n}\right)$
satisfies Kraft's Inequality(\ref{eq:Kraft}) and 
\begin{align*}
\left(\sum_{a\in\mathbb{A}}\beta^{\textrm{-}\ell\left(a\right)}\right)^{n}
&
=\sum_{a_{1}a_{2}...a_{n}\in\mathbb{A}^{n}}\beta^{\textrm{-}\sum_{i=1}^{n}\ell\left(a_{i}\right)}\\
&
\leq\sum_{a_{1}a_{2}...a_{n}\in\mathbb{A}^{n}}\beta^{\textrm{-}n\left(\bar{\ell}_{\kappa}\left(a_{1}a_{2}...a_{n}\right)-\varepsilon\right)}\\
&
=\beta^{n\varepsilon}\sum_{a_{1}a_{2}...a_{n}\in\mathbb{A}^{n}}\beta^{\textrm{-}n\bar{\ell}_{\kappa}\left(a_{1}a_{2}...a_{n}\right)}\\
 & \leq\beta^{n\varepsilon}.
\end{align*}
Therefore
$\sum_{a\in\mathbb{A}}\beta^{\textrm{-}\ell\left(a\right)}\leq\beta^{\varepsilon}$
for all $\varepsilon>0$ and the result is obtained.
\end{proof}
Like in Bayesian statistics we focus on finite sequences.
Contrary to Bayesian statistics we should always consider a finite
sequence as a prefix of \emph{longer finite} sequences. Contrary to
frequential statistics we do not have to consider a finite sequence as a
prefix of an \emph{infinite} sequence.

If we minimize the mean code-length over functions that satisfy
Kraft's inequality (\ref{eq:Kraft}), but without an integer constraint the code-length
should be $\ell\left(a\right)=-\log_{\beta}\left(p_{a}\right)$ and the
function $F$ is given by
\[
F\left(P\right)=\sum_{a}p_{a}\cdot\log_{\beta}\left(p_{a}\right).
\]
The function $F$ is proportional to the Shannon entropy and the (negative)
proportionality factor is determined by the size of the output alphabet.

In lossy source coding and rate distortion theory it is important to choose a distortion function with tractable properties. The notion of sufficiency for divergence functions was introduced in \cite{Harremoes2007a} in order to characterize such tractable distortions functions. In this paper the main result was that sufficiency together with properties related to Bregman divergence lead directly to the information bottleneck method introduced by N. Tishby \cite{Tishby1999}. Logarithmic loss has also been studied for lossy compression in \cite{No2015}.

\subsection{Statistics}

In statistics one is often interested in scoring rules that are
local, which means a scoring rule where the payoff only depends on the
probability of the observed value and not on the predicted distribution over
unobserved values. The notion of locality has recently been extended by
Dawid, Lauritzen and Parry \cite{Dawid2012}, but here we shall focus on the original definition. The basic result is that the only
local strictly proper scoring rule is logarithmic score that was proved by Bernardo under the assumption that scoring rule is given by a smooth function \cite{Bernardo1978}.
\begin{Definition}
A \emph{local strictly proper scoring rule} is a scoring rule of
the form $f\left(x,Q\right)=g\left(Q\left(x\right)\right).$
\end{Definition}
\begin{thm}
On a finite space a local strictly proper scoring rule is given by a local regret function.
\end{thm}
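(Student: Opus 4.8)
The plan is to read the result off from the correspondence between proper scoring rules and Bregman divergences already established above, and then to verify the defining identity of a local regret function by a one‑line computation exploiting the fact that on a classical simplex orthogonality of two states means disjoint supports.

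First I would recall the setup. Since $f$ is strictly proper, the function $F(P)=\sum_{x\in\mathcal{X}}P(x)f(x,P)$ is convex, and under the standing continuity assumption it generates the regret function
\[
D_F(P,Q)=F(P)-\sum_{x\in\mathcal{X}}P(x)f(x,Q),
\]
which is a Bregman divergence, and $f(x,Q)=\gamma(x)-D_F(\delta_x,Q)$ with $\gamma(x)=f(x,\delta_x)$. Specialising to a local scoring rule, $f(x,Q)=\psi\bigl(Q(x)\bigr)$ for some function $\psi$ on $[0,1]$, so $\gamma(x)=f(x,\delta_x)=\psi(1)$ is constant and $f(x,Q)=\psi(1)-D_F(\delta_x,Q)$. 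Thus $f$ is indeed given by the regret function $D_F$ together with the constant function $\psi(1)$, and it remains only to prove that this $D_F$ is local.

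To see locality, let $s_1$ be a state, let $\sigma$ and $\rho$ be states orthogonal to $s_1$, and let $t\in\left]0,1\right[$. Orthogonality in the simplex means $\sigma$ and $\rho$ vanish on the support of $s_1$, so $Q:=t\,s_1+(1-t)\sigma$ satisfies $Q(x)=t\,s_1(x)$ for every $x$ in the support of $s_1$, while the factor $s_1(x)$ annihilates all other terms. Hence
\[
\sum_{x\in\mathcal{X}}s_1(x)f(x,Q)=\sum_{x\in\mathcal{X}}s_1(x)\,\psi\bigl(t\,s_1(x)\bigr),
\]
an expression depending only on $s_1$ and $t$, so $D_F\bigl(s_1,t\,s_1+(1-t)\sigma\bigr)=F(s_1)-\sum_{x}s_1(x)\psi\bigl(t\,s_1(x)\bigr)$ does not change when $\sigma$ is replaced by $\rho$; this is exactly the condition defining a local regret function.

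I do not expect a genuine obstacle here: the substantive work is the scoring‑rule/Bregman correspondence proved earlier, after which locality of $f$ transfers verbatim to locality of $D_F$. The only points to handle carefully are formal: the condition is vacuous for states $s_1$ of full support, and in the sums above $\psi$ is evaluated only at the strictly positive numbers $t\,s_1(x)$, so the possible singularity of $\psi$ at $0$ (as for logarithmic score) is irrelevant. Finally, combining this with the characterisation theorem for local regret functions shows that on a space with at least three points a local strictly proper scoring rule must be an affine transform of logarithmic score, recovering Bernardo's theorem without his smoothness hypothesis.
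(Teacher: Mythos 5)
Your proof is correct and follows essentially the same route as the paper: both arguments reduce to the observation that for $Q=t\,s_1+(1-t)\sigma$ with $\sigma$ orthogonal to (i.e.\ disjointly supported from) $s_1$, the score $\sum_x s_1(x)\,\psi(Q(x))=\sum_x s_1(x)\,\psi(t\,s_1(x))$ is independent of $\sigma$, so the regret is local. Your additional remarks (that $f(x,\delta_x)=\psi(1)$ is constant, and that $\psi$ is only evaluated away from $0$) are minor refinements of the paper's argument, not a different approach.
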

\begin{proof}
The regret function of a local strictly proper scoring rule is
given by 
\[
D\left(P,Q\right)=\sum_{x}P\left(x\right)\left(g\left(P\left(x\right)\right)-g\left(Q\left(x\right)\right)\right).
\]
If $Q=\left(1-t\right)P+tQ_{i}$ and $P$ and $Q$ are mutually
singular then
\begin{align*}
D\left(P,Q\right) &
=\sum_{x}P\left(x\right)\left(g\left(P\left(x\right)\right)-g\left(\left(1-t\right)P\left(x\right)+tQ_{i}\left(x\right)\right)\right)\\
&
=\sum_{x}P\left(x\right)\left(g\left(P\left(x\right)\right)-g\left(\left(1-t\right)P\left(x\right)+0\right)\right)
\end{align*}
and we see that the regret does not depend on $Q_{i}$ because
$Q_{i}$
vanish on the support of $P.$ Therefore the regret function is
local.
\end{proof}
\begin{cor}
On a finite space with at least three elements a  local strictly proper
scoring rule is given by a function $g$ of the form
$g\left(x\right)=a\cdot\ln\left(x\right)+b$
for some constants $a$ and $b.$
\end{cor}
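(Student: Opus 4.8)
The plan is to chain the two theorems of this section. By the theorem just proved, the regret function of a local strictly proper scoring rule $f(x,Q)=g(Q(x))$ is the Bregman divergence $D_F$ of the strictly convex function $F(P)=\sum_{x\in\mathcal X}P(x)\,g(P(x))$, and it is local; moreover, as recorded in the subsection on scoring rules, $D_F(\delta_y,Q)=f(y,\delta_y)-f(y,Q)=g(1)-g(Q(y))$. Since $\mathcal X$ has at least three elements, the simplex $M_1^+(\mathcal X)$ contains at least three orthogonal states, namely the point masses $\delta_x$, so $D_F$ satisfies the hypotheses (convex continuous $F$ on a state space with three orthogonal states, $D_F$ local) of the theorem characterizing local regret functions.

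First I would invoke that theorem to conclude that $D_F$ is the Bregman divergence generated by entropy times a negative constant; equivalently there is a constant $c>0$ with $D_F(P,Q)=D_{-cH}(P,Q)=c\sum_{x\in\mathcal X}P(x)\ln\frac{P(x)}{Q(x)}$ for all distributions $P,Q$. Then I would evaluate this at $P=\delta_y$, where all terms with $x\neq y$ vanish, obtaining $D_F(\delta_y,Q)=c\ln\frac{1}{Q(y)}=-c\ln Q(y)$.

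Finally I would compare the two expressions for $D_F(\delta_y,Q)$, getting $g(1)-g(Q(y))=-c\ln Q(y)$; since $\mathcal X$ has at least two elements the value $Q(y)$ runs over all of $(0,1]$ as $Q$ ranges over the simplex, so $g(t)=g(1)+c\ln t$ for every $t\in(0,1]$. This is the asserted form with $a=c$ and $b=g(1)$. I do not anticipate a real obstacle: the logarithm is already produced by the theorem on local regret functions, and the only points needing care are the sign bookkeeping of the constant and the elementary observation that $Q\mapsto Q(y)$ is surjective onto $(0,1]$ — the substantive content lives entirely in the two theorems being quoted.
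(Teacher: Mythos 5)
Your proof is correct and is exactly the chaining the paper intends (the corollary is stated without proof, as an immediate consequence of the two preceding theorems): locality of the induced regret, the characterization $F=-cH+\text{affine}$ on a simplex with at least three vertices, and evaluation at $P=\delta_y$ giving $g(1)-g\left(Q(y)\right)=-c\ln Q(y)$. The only points worth recording explicitly are the ones you already flag — strictness forces $c\neq 0$ (and nonnegativity of regret forces $c\geq 0$), and surjectivity of $Q\mapsto Q(y)$ onto $\left]0,1\right]$ — so there is no gap.
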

Also the notion of sufficiency plays an important role in
statistics. Here we will restrict the discussion to 1-dimensional
exponential families. A natural exponential family is a family of
probability distributions of the form
\[
\frac{\mathrm{d}P_{\beta}}{\mathrm{d}Q}=\frac{\exp\left(\beta
x\right)}{Z\left(\beta\right)}
\]
where $Q$ is a reference measure on the real numbers and $Z$ is
the moment generating function given by
$Z\left(\beta\right)=\int\exp\left(\beta
x\right)\,\mathrm{d}Qx$.
Then $x_{1}^{n}\to x_{1}+x_{2}+\dots+x_{n}$ is a sufficient
statistic for the family $\left(P_{\beta}^{\otimes n}\right)_{\beta}.$ 
\begin{Example}
In a Bernoulli model a sequence $x_{1}^{n}\in\left\{ 0,1\right\}
^{n}$ is predicted with probability 
\[
\prod_{i=1}^{n}p^{x_{i}}\left(1-p\right)^{1-x_{i}}=\exp\left(\left(\sum_{i=1}^{n}x_{1}\right)\ln\left(p\right)+\left(n-\sum_{i=1}^{n}x_{1}\right)\ln\left(1-p\right)\right).
\]
The function $x_{1}^{n}\to x_{1}+x_{2}+\dots+x_{n}$ induces a
sufficient map $\Phi$ from probability distributions on $\left\{
0,1\right\} ^{n}$
to probability distributions on $\left\{ 0,1,2,\dots,n\right\}
.$
The reverse map maps a measure concentrated in
$j\in\left\{ 0,1,2,\dots,n\right\} $
into a uniform distributions over sequences $x_{1}^{n}\in\left\{
0,1\right\} ^{n}$
that satisfy $\sum_{i=1}^{n}x_{1}=j.$
\end{Example}
The mean value of $P_{\beta}$ is 
\[
\int x\cdot\frac{\exp\left(\beta
x\right)}{Z\left(\beta\right)}\,\mathrm{d}Qx\,.
\]
The set of possible mean values is called the mean value range
and is an interval. Let $P^{\mu}$ denote the element in the
exponential family with mean value $\mu.$ Then a Bregman divergence on the
mean value range is defined by
$D\left(\lambda,\mu\right)=D\left(P^{\lambda}\left\Vert
P^{\mu}\right.\right).$
Note that the mapping $\mu\to P^{\mu}$ is not affine so the
Bregman divergence $D\left(\lambda,\mu\right)$ will in general not be
given by the formula for information divergence with the family of
binomial distributions as the only exception. Nevertheless the Bregman
divergence
$D\left(\lambda,\mu\right)$ encode important information about
the exponential family. In statistics it is common to use squared
Euclidean distance as distortion measure, but often it is better to use
the Bregman divergence $D\left(\lambda,\mu\right)$ as distortion
measure. Note that $D\left(\lambda,\mu\right)$ is only proportional to
squared Euclidean distance for the Gaussian location family.
\begin{Example}
An exponential distribution has density
\[
f_{\lambda}\left(x\right)=\begin{cases}
\frac{1}{\lambda}\exp\left(-\frac{x}{\lambda}\right)\,, &
\mathrm{for}\,x\geq0;\\
0\,, & else.
\end{cases}
\]
This leads to a Bregman divergence on the interval
$\left[0;\infty\right[$
given by
\begin{align*}
\int_{0}^{\infty}f_{\lambda}\left(x\right)\ln\left(\frac{f_{\lambda}\left(x\right)}{f_{\mu}\left(x\right)}\right)\,\mathrm{d}x
&
=\frac{\lambda}{\mu}-1-\ln\left(\frac{\lambda}{\mu}\right)\\
 & =D_{-\ln}\left(\lambda,\mu\right)
\end{align*}
This Bregman divergence is called the \emph{Isakura-Saito
distance}. The Isakura-Saito distance is defined on an unbounded set so our
previous results cannot be applied. Affine bijections on
$\left[0;\infty\right[$ have the form $\Phi\left(x\right)=k\cdot x$ for some constant
$k>0$.
The Isakura-Saito distance obviously satisfy sufficiency for
such maps and it is a simple exercise to check that the Isakura-Saito distance is
the only Bregman divergence on $[0,\infty ]$ that satisfies sufficiency. Any affine map
$\left[0;\infty\right[\to\left[0;\infty\right[$ is composed of a map $x\to k\cdot x$ where $k\geq0$ and a right
translation $x\to x+t$ where $t\geq0.$ The Itakura-Saito distance decreases
under right translations because
\begin{align*}
\frac{\partial}{\partial t}D_{-\ln}\left(\lambda+t,\mu+t\right)
& =\frac{\partial}{\partial
t}\left(\frac{\lambda+t}{\mu+t}-1-\ln\left(\frac{\lambda+t}{\mu+t}\right)\right)\\
&
=\frac{\left(\mu+t\right)-\left(\lambda+t\right)}{\left(\mu+t\right)^{2}}-\frac{1}{\lambda+t}+\frac{1}{\mu+t}\\
&
=-\frac{\left(\lambda-\mu\right)^{2}}{\left(\lambda+t\right)\left(\mu+t\right)^{2}}\leq0.
\end{align*}
Thus the Isakura-Saito distance is monotone.
\end{Example}

Both sufficiency and the Bregman identity are closely related to inference rules. In \cite{Csiszar1991} I. Csisz{\'a}r explained why information divergence is the only divergence function on the cone of positive measures that lead to tractable inference rules. One should observe that his inference rules are closely related to sufficiency and the Bregman identity, and the present paper may be view as a generalization of these results of I. Csisz{\'a}r.

\subsection{Statistical mechanics}

Statistical mechanics can be stated based on classical mechanics
or quantum mechanics. For our purpose this makes no difference
because our theorems are valid for both classical systems and
quantum systems. 

As we have seen before 
\begin{equation}
Ex=kT_{0}\cdot D\left(s\left\Vert
s_{0}\right.\right).\label{eq:Kelvin}
\end{equation}
Our general results for Bregman divergences imply that the
Bregman divergence based on this exergy satisfies
\[
D_{Ex}\left(s_{1},s_{2}\right)=kT_{0}\cdot
D\left(s_{1}\left\Vert s_{2}\right.\right).
\]
Therefore 
\[
D_{Ex}\left(\Phi\left(s_{1}\right),\Phi\left(s_{2}\right)\right)=D_{Ex}\left(s_{1},s_{2}\right)
\]
for any map that is sufficient for $\left\{
s_{1},s_{2}\right\} .$
The equality holds for any regret function that is reversible
and conserves the state that is in equilibrium with the environment.
Since a different temperature of the environment leads to a different
state that is in equilibrium the equality holds for any reversible
map that leave some equilibrium state invariant. We see that
$D_{Ex}\left(s_{1},s_{2}\right)$ is uniquely determined as long as there exists a sufficiently
large set of maps that are reversible.

In this exposition we have made some short-cuts. First of all we
did not derive equation \ref{eq:Kelvin}. In particular the notion of temperature
was used without discussion. Secondly we identified the internal
energy with the mean value of the Hamiltonian and identified the
thermodynamic entropy with $k$ times the Shannon entropy. Finally, in the
argument above we need to verify in all details that the set of
reversible maps is sufficiently large to determine the regret
function. For classical thermodynamics the most comprehensive exposition
was done by Lieb and Yngvason \cite{Lieb1998,Lieb2010}. In their
exposition randomness was not taken into account. With the present framework it is
also possible to handle randomness so that one can make a bridge between thermodynamics and
statistical mechanics. A detailed exposition will be given in a future
paper.

According to Equation (\ref{eq:Kelvin}) any bit of information
can be converted into an amount of energy! One may ask how this is
related to the mixing paradox (a special case of Gibbs' paradox).
Consider a container divided by a wall with a blue and a yellow gas on
each side of the wall. The question is how much energy can be
extracted by mixing the blue and the yellow gas?
\begin{center}
\begin{tikzpicture}[line width=2.4pt, line cap=round,x=1.0cm,y=1.0cm]
\fill[fill=yellow,fill opacity=1.0] (8.,5.) -- (5.,5.) -- (5.,3.) -- (8.,3.) -- cycle;
\fill[fill=blue,fill opacity=1.0] (2.,5.) -- (5.,5.) -- (5.,3.) -- (2.,3.) -- cycle;
\fill[fill=green,fill opacity=1.0] (2.,2.) -- (8.,2.) -- (8.,0.) -- (2.,0.) -- cycle;
\draw  (2.,5.)-- (8.,5.);
\draw (8.,5.)-- (8.,3.);
\draw (8.,3.)-- (2.,3.);
\draw (2.,3.)-- (2.,5.);
\draw (5.,5.)-- (5.,3.);

\draw (2.,2.)-- (8.,2.);
\draw (8.,2.)-- (8.,0.);
\draw (8.,0.)-- (2.,0.);
\draw (2.,0.)-- (2.,2.);
\draw [dash pattern=on 5pt off 7.6pt] (5.,2.)-- (5.,0.);
\end{tikzpicture}
\par\end{center}

We loose one bit of information about each molecule by mixing the blue and the green gas, but if the color is the \emph{only difference} no energy
can be extracted. This seems to be in conflict with Equation
(\ref{eq:Kelvin}),
but in this case different states cannot be converted into each
other by reversible processes. For instance one cannot convert the
blue gas into the yellow gas. To get around this problem one can
restrict the set of preparations and one can restrict the set of
measurements.
For instance one may simply ignore measurements of the color of
the gas. What should be taken into account and what should be
ignored, can only be answered by an experienced physicist. Formally this
solves the mixing paradox, but from a practical point of view nothing
has been solved. If for instance the molecules in one of the gases
are much larger than the molecules in the other gas then a
semi-permeable membrane can be used to create an osmotic pressure that can be
used to extract some energy. It is still an open question which
differences in properties of the two gases that can be used to extract
energy.

\subsection{Monotone regret for portfolios}
We know that in general a local regret function on a state space with at least three orthogonal states is proportional to information divergence. In portfolio theory we get the stronger result that monotonicity implies that we are in the situation of gambling introduced by Kelly \cite{Kelly1956}.

\begin{thm}
\label{Theorem:proper}Assume that none of the assets
are dominated
by a portfolio of other assets. If the regret function $D_G (P,Q)$ given by (\ref{eq:Bregman})
is monotone then the regret function equals information divergence and
the measures $P$ and $Q$ are supported by $k$ distinct price
relative
vectors of the form $\left(o_{1},0,0,\dots0\right)$,
$\left(0,o_{2},0,\dots0\right),$
until $\left(0,0,\dots o_{k}\right).$ 
\end{thm}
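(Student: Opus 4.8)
The plan is to reduce the theorem to results already established: show that monotonicity of $D_G$ forces strictness, and then invoke Theorem~\ref{strict} (together with the Lemma above) to identify the gambling structure.

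For the reduction to strictness, note first that by Proposition~\ref{prop:monoBreg} a monotone regret is a Bregman divergence, and by Proposition~\ref{prop:sufficiency} and the subsequent proposition it is local. If there are at least three possible price relative vectors, the state space is a simplex with at least three orthogonal states, so the theorem characterising local regret functions gives $D_G=D_{-cH}$ for some constant $c$; here $c\neq 0$, for an affine $G$ would force an optimal portfolio at an interior state to attain the maximal return in every scenario simultaneously and hence to dominate every asset, against the hypothesis. Thus $c>0$ and $D_G$ is strict. If there are only two possible price relative vectors, locality is empty and one argues directly: were $D_G$ not strict, $G$ would be affine on a non-degenerate sub-interval of the state interval (a pure asset being optimal throughout it, by the optimality analysis in the proof of the Lemma) but not on the whole interval (otherwise a single asset would dominate all others); then composing with an affine contraction of the interval that carries this flat piece into a strictly convex part of $G$ contradicts monotonicity, since $D_G$ vanishes on the flat piece while it is positive on its image. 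Hence $D_G$ is strict (the one-price-relative-vector case being trivial).

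Now suppose $D_G$ is strict. Its restriction to each two-scenario subproblem is again strict once the assets that become dominated in that subproblem are deleted, so by the Lemma each such subproblem reduces to exactly two orthogonal gambling assets that dominate the rest; running over all pairs of scenarios produces, for every scenario $i$, a well-defined asset $a_i$ whose return is $o_i>0$ in scenario $i$ and $0$ in every other scenario, with the $a_i$ pairwise distinct. By Theorem~\ref{strict} we have $D_G(P,Q)=D(P\Vert Q)$, so $G$ equals $-H$ plus an affine function; evaluating at the vertex where scenario $i$ is certain gives $G(\delta_i)=\ln o_i$, which forces $G(P)=\sum_i p_i\ln(o_ip_i)$. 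It remains to exclude any further non-dominated asset $c$: comparing $G$ with the doubling rate of betting everything on $c$ yields $\sum_i p_i\ln\bigl(o_ip_i/X^{(i)}_c\bigr)\ge 0$ for all probability vectors $P$; evaluating at the distribution proportional to $X^{(i)}_c/o_i$ forces $\sum_i X^{(i)}_c/o_i\le 1$, whence the portfolio with weight $X^{(i)}_c/o_i$ on $a_i$ dominates $c$, a contradiction. Therefore the number of assets equals the number of scenarios, the price relative vectors are exactly $(o_1,0,\dots,0),(0,o_2,0,\dots,0),\dots,(0,\dots,0,o_k)$, and we are in the gambling situation of Kelly.

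The step I expect to be the main obstacle is the implication monotone $\Rightarrow$ strict: the propositions quoted above only exploit dilations and the previously proved equivalence theorem, whereas eliminating a flat piece of $G$ genuinely requires testing monotonicity against a \emph{contracting} (in general non-injective) affine self-map that sends the flat region onto a curved one, and one must check that such a self-map can be chosen with range inside the state space. A secondary difficulty is the bootstrapping in the last step from the pairwise gambling structure to the global one and the exclusion of spurious non-dominated assets, which relies on the explicit form of $G$ along the edges.
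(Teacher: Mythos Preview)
Your approach is correct and essentially matches the paper's: reduce monotonicity to strictness (via the locality characterisation for three or more scenarios and a contraction argument on a flat piece of $G$ for two), then invoke Theorem~\ref{strict} and the Lemma. Your two-scenario contraction (sending the flat piece into a curved one) is the contrapositive of the paper's version (which contracts around the endpoints to spread the flat piece over the whole interval until $G$ is globally affine), and your explicit bootstrapping from the pairwise gambling structure to the full diagonal form, together with the exclusion of spurious assets, fills in detail that the paper leaves implicit in its appeal to Theorem~\ref{strict}.
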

\begin{proof}
If there are more than three price relative vectors then a monotone regret function is always proportional to information divergence  which is a strict regret function. Therefore we may assume that there are only two price relative vectors. Assume that the regret function is not strict. Then the function $G$ defined by (\ref{eq:G})  is not strictly convex. Assume that $D_G (P,Q)=0$ so that $G$ is affine between $P$ and $Q$. Let $\Phi$ be a contraction around one of the end points of intersection between the state space and the line through $P$ and $Q$. Then monotonicity implies that $D_G (\Phi (P),\Phi (Q))=0$ so that $G$ is affine on the line between $\Phi (P)$ and $\Phi (Q)$. This holds for contractions around any point. Therefore $G$ is affine on the whole state space which implies that there is a single portfolio that dominates all assets. Such a portfolio must be supported on a single asset. Therefore monotonicity implies that if two assets are not dominated then the regret function is strict and according to Theorem \ref{strict} we have already proved that a strict regret function in portfolio theory is proportional to information divergence.
\end{proof}

\begin{Example}
If the regret function divergence is monotone and one of the assets is the
safe asset then there exists a portfolio $\vec{b}$ such that
$b_{i}\cdot o_{i}\geq1$
for all $i.$ Equivalently $b_{i}\geq o_{i}^{-1}$ which is
possible if and only if $\sum o_{i}^{-1}\leq1.$ One say that the gamble
is \emph{fair} if $\sum o_{i}^{-1}=1$. If the gamble is
\emph{super-fair}, i.e. $\sum o_{i}^{-1}<1$, then the portfolio
$b_{i}=o_{i}^{-1}/\sum o_{i}^{-1}$
gives a price relative equal to $\left(\sum
o_{i}^{-1}\right)^{-1}>1$
independently of what happens, which is a \emph{Dutch book}. 
\end{Example}
\begin{cor}
In portfolio theory the regret function $D_G (P,Q)$ given by (\ref{eq:Bregman}) is monotone if and only if
it is strict.
\end{cor}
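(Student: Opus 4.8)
The plan is to prove the two implications separately, in each case reducing to results already in hand. Note first that removing assets which are dominated by portfolios of the remaining assets changes neither the optimal doubling rate $G$ of (\ref{eq:G}) nor the regret function $D_G$ of (\ref{eq:Bregman}), whereas both ``monotone'' and ``strict'' are properties of $D_G$ alone; so throughout I may assume that no asset is dominated by a portfolio of the others, which is exactly the standing hypothesis of Theorem~\ref{Theorem:proper}.

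For the direction ``strict $\Rightarrow$ monotone'' I would invoke Theorem~\ref{strict}: strictness of $D_G$ gives $D_G(P,Q)=D(P\Vert Q)$. The state space in portfolio theory is the simplex of probability distributions on the finite set of price relative vectors, hence the state space of a commutative finite-dimensional $C^{*}$-algebra, and every affine self-map of a simplex is a stochastic (Markov) map. Thus monotonicity of $D_G$ under affine maps is precisely the classical data-processing inequality for information divergence, a special case of the monotonicity of information divergence under positive trace-preserving maps recorded earlier. This settles one direction immediately.

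For ``monotone $\Rightarrow$ strict'' I would read off the conclusion of Theorem~\ref{Theorem:proper}: under the no-domination hypothesis a monotone $D_G$ equals information divergence, which is generated by the strictly convex function $-H$, so $D_G$ is strict. A self-contained version of the same argument runs through the dichotomy in the proof of that theorem: if there are at least three price relative vectors, the equivalence theorem relating monotonicity, sufficiency and locality to information divergence forces $D_G$ to be proportional to information divergence, hence strict; if there are exactly two and $D_G$ is not strict, then $G$ is affine on some segment $[P,Q]$, and contracting toward either endpoint of the intersection of the line through $P$ and $Q$ with the state space --- which preserves the vanishing of $D_G$ by monotonicity --- spreads affineness of $G$ over the whole simplex, yielding a single portfolio that dominates every asset and is necessarily concentrated on one asset, contradicting no-domination.

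The only delicate points are the bookkeeping around dominated assets and the low-dimensional degenerate cases (one or two price relative vectors): one must confirm that passing to the reduced asset set is harmless and that the contraction argument genuinely exhausts every two-outcome configuration. Both are already handled inside Theorem~\ref{Theorem:proper} and the lemma preceding Theorem~\ref{strict}, so the corollary amounts to a repackaging of those results and I anticipate no serious obstacle.
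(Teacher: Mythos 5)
Your proposal is correct and follows essentially the same route as the paper: the paper's one-line proof likewise combines Theorem~\ref{Theorem:proper} (monotone $\Rightarrow$ equals information divergence, hence strict) with Theorem~\ref{strict} plus the monotonicity of information divergence for the converse. Your version merely makes explicit the bookkeeping about dominated assets and the reduction of affine self-maps of the simplex to stochastic maps, both of which are consistent with the paper's intent.
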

\begin{proof}
We use that in portfolio theory the regret function is monotone if and only it is proportional to information.
\end{proof}

\section{Concluding remarks}

In \cite{Pitrik2015} it was proved that if $f$ is a function such that the Bregman divergence based on $tr(f(\rho))$
is monotone on any (simple) C*-algebra then the Bregman divergence is jointly convex. As we have seen that monotonicity implies that the Bregman divergence must be proportional to inform divergence, which is jointly convex in both arguments. We also see that in general joint convexity is not a sufficient condition for monotonicity, but in the case where the state space has only two orthogonal states it is not known if joint convexity of a Bregman divergence is sufficient to conclude that the Bregman divergence is monotone.

One should note that the type of optimization presented in this paper is closely related to a game theoretic model developed by F. Tops\o e \cite{Topsoe2008,Topsoe2011}. In his game theoretic model he needed what he called the {\em perfect match principle}. Using the terminology presented in this paper the perfect match principle states that the regret function is a strict Bregman divergence. As we have seen the perfect match principle is only fulfilled in portfolio theory if all the assets are gambling assets. Therefore the theory of F. Tops\o e can only be used to describe gambling while our optimization model can describe general portfolio theory and our sufficient conditions can explain exactly when our general model equals gambling.

The original paper of Kullback and Leibler \cite{Kullback1951}
was called ``On Information and Sufficiency''. In the present paper
we have made the relation between information divergence and the
notion of sufficiency more explicit. The results presented in this
paper are closely related to the result that a divergence that
is both an $f$-divergence and
a Bregman divergence is proportional to information divergence
(see \cite{Harremoes2007a} or \cite{Amari2009} and references
therein). All $f$-divergences satisfy a sufficiency condition, which is the
reason why this class of divergences has played such a
prominent role in the study of
the relation between information theory and statistics. One
major question has been to find reasons for choosing between the
different $f$-divergences.
For instance $f$-divergences of power type (often called Tsallis
divergences or Cressie-Read divergences) are popular, but there
are surprisingly few papers that can point at a single value of the power $\alpha$
that is optimal for a certain problem except if this
value is 1. In this paper we have started with Bregman divergences because each optimization
problem comes with a specific Bregman divergence. Often it is
possible to specify a Bregman divergence for an optimization problem and only in some
of the cases this Bregman divergence is proportional to
information divergence.

The idea of sufficiency has different
relevance in different applications, but in all cases
information divergence prove to be the quantity that convert the general
notion of sufficiency into a number. In information theory information
divergence appear as a consequence of Kraft's inequality. For
code length functions of
integer length we get functions that are piecewise linear. Only
if we are interested in extend-able sequences we get a regret
function that satisfies a
data processing inequality. In this sense information theory is
a theory of extend-able sequences. For scoring functions in
statistics the notion of
locality is important. These applications do not refer to
sequences. Similarly the notion of sufficiency that plays a major
role in statistics, does not
refer to sequences. Both sufficiency and locality imply that
regret is proportional to information divergence, but these
reasons are different from the
reasons why information divergence is used in information
theory. Our description of statistical mechanics does not go
into technical details, but
the main point is that the many symmetries in terms of
reversible maps form a set of maps so
large that our result on invariance
of regret under sufficient maps applies. In this sense
statistical mechanics and statistics both apply information
divergence for reasons
related to sufficiency. For portfolio theory the story is
different. In most cases one has to apply the general theory of
Bregman divergences because we
deal with an optimization problem. The general Bregman
divergences only reduce to information divergence when the
assets are gambling assets.

Often one talk about applications of information theory in
statistics, statistical mechanics and portfolio theory. In this
paper we have argued that
information theory is mainly a theory of sequences, while some
problems in statistics and statistical mechanics are also
relevant without reference to
sequences. It would be more correct
to say that convex optimization has various application such as 
information theory, statistics,
statistical mechanics, and portfolio theory and that certain
conditions related to sufficiency lead to the same type of
quantities in all these applications.

\section*{Acknowledgment}

The author want to thank Prasad Santhanam for inviting me to the
Electrical Engineering Department, University of Hawai\textquoteleft i at
M\={a}noa, where many of the ideas presented in this paper were developed.
I also want to thank Alexander M{\"u}ller-Hermes, Frank Hansen, and Flemming Tops{\o}e
for stimulating discussions and correspondence.
Finally I want to thank the reviewers for their valuable
comments.


\begin{thebibliography}{-------}
\providecommand{\natexlab}[1]{#1}

\bibitem[Kullback and Leibler(1951)]{Kullback1951}
Kullback, S.; Leibler, R.
\newblock On Information and Sufficiency.
\newblock {\em Ann. Math. Statist.} {\bf 1951}, {\em 22},~79--86.

\bibitem[Jaynes(1957)]{Jaynes1957}
Jaynes, E.T.
\newblock Information Theory and Statistical Mechanics, {I} and {II}.
\newblock {\em Physical Reviews} {\bf 1957}, {\em 106 and 108},~620--630 and
  171--190.

\bibitem[Jaynes(1989)]{Jaynes1989}
Jaynes, E.T.
\newblock Clearing up mysteries -- The original goal. In {\em Maximum Entropy
  and {B}ayesian Methods}; Skilling, J., Ed.; Kluwer: Dordrecht,  1989.

\bibitem[Liese and Vajda(1987)]{Liese1987}
Liese, F.; Vajda, I.
\newblock {\em Convex Statistical Distances}; Teubner: Leipzig,  1987.

\bibitem[Barron \em{et~al.}(1998)Barron, Rissanen, and Yu]{Barron1998}
Barron, A.R.; Rissanen, J.; Yu, B.
\newblock The Minimum Description Length Principle in Coding and Modeling.
\newblock {\em IEEE Trans. Inform. Theory} {\bf 1998}, {\em 44},~2743--2760.
\newblock Commemorative issue.

\bibitem[Csisz{\'a}r and Shields(2004)]{Csiszar2004}
Csisz{\'a}r, I.; Shields, P.
\newblock {\em Information Theory and Statistics: A Tutorial}; Foundations and
  Trends in Communications and Information Theory, Now Publishers Inc.,  2004.

\bibitem[Gr{\"u}nwald and Dawid(2004)]{Grunwald2004a}
Gr{\"u}nwald, P.D.; Dawid, A.P.
\newblock Game Theory, Maximum Entropy, Minimum Discrepancy, and Robust
  {B}ayesian Decision Theory.
\newblock {\em {A}nnals of {M}athematical {S}tatistics} {\bf 2004}, {\em
  32},~1367--1433.

\bibitem[Gr{\"u}nwald(2007)]{Grunwald2007}
Gr{\"u}nwald, P.
\newblock {\em the Minimum Description Length principle}; MIT Press,  2007.

\bibitem[Holevo(1982)]{Holevo1982}
Holevo, A.S.
\newblock {\em Probabilistic and Statistical Aspects of Quantum Theory};
  Vol.~1, {\em North-Holland Series in Statistics and Probability},
  North-Holland: Amsterdam,  1982.

\bibitem[Krumm \em{et~al.}(2016)Krumm, Barnum, Barrett, and
  M{\"u}ller]{Krumm2016}
Krumm, M.; Barnum, H.; Barrett, J.; M{\"u}ller, M.
\newblock Thermodynamics and the structure of quantum theory.
\newblock arXiv:1608.04461.

\bibitem[Barnum \em{et~al.}(2014)Barnum, M{\"u}ller, and Ududec]{Barnum2014}
Barnum, H.; M{\"u}ller, M.P.; Ududec, C.
\newblock Higher-order interference and single-system postulates characterizing
  quantum theory.
\newblock {\em New Journal of Physics} {\bf 2014}, {\em 16},~123029.

\bibitem[Harremo{\"e}s(2016)]{Harremoes2016d}
Harremo{\"e}s, P.
\newblock Maximum Entropy and Sufficiency.
\newblock  Proceedings MaxEnt2016. American Institute of Physics (AIP),  2016,
  \href{http://xxx.lanl.gov/abs/arXiv:1607.02259}{{\normalfont
  [arXiv:1607.02259]}}.

\bibitem[Harremo{\"e}s(2017)]{Harremoes2017a}
Harremo{\"e}s, P.
\newblock Quantum information on Spectral Sets.
\newblock arXiv:1701.06688 Accepted for presentation at ISIT 2017.

\bibitem[Servage(1951)]{Servage1951}
Servage, L.J.
\newblock The Theory of Statistical Decision.
\newblock {\em Journal of the American Statistical Association} {\bf 1951},
  {\em 46},~55--67.

\bibitem[Kiwiel(1997{\natexlab{a}})]{Kiwiel1997}
Kiwiel, K.C.
\newblock Proximal Minimization Methods with Generalized Bregman Functions.
\newblock {\em SIAM Journal on Control and Optimization} {\bf 1997}, {\em
  35},~1142--1168,
  \href{http://xxx.lanl.gov/abs/http://dx.doi.org/10.1137/S0363012995281742}{{\normalfont
  [http://dx.doi.org/10.1137/S0363012995281742]}}.

\bibitem[Kiwiel(1997{\natexlab{b}})]{Kiwiel1997a}
Kiwiel, K.C.
\newblock Free-steering Relaxation Methods for Problems with Strictly Convex
  Costs and Linear Constraints.
\newblock {\em Math. Oper. Res.} {\bf 1997}, {\em 22},~326--349.

\bibitem[Rockafellar(1970)]{Rockafeller1970}
Rockafellar, R.T.
\newblock {\em Convex Analysis}; Princeton Univ. Press: New Jersey,  1970.

\bibitem[Hendrickson and Buehler(1971)]{Hendrickson1971}
Hendrickson, A.D.; Buehler, R.J.
\newblock Proper scores for probability forecasters.
\newblock {\em Ann. Math. Statist.} {\bf 1971}, {\em 42},~1916--1921.

\bibitem[Rao and Nayak(1985)]{Rao1985}
Rao, C.R.; Nayak, T.K.
\newblock Cross Entropy, Dissimilarity Measures, and Characterizations of
  Quadratic Entropy.
\newblock {\em IEEE Trans. Inform. Theory} {\bf 1985}, {\em 31},~589--593.

\bibitem[Banerjee \em{et~al.}(2005)Banerjee, Merugu, Dhillon, and
  Ghosh]{Banerjee2005}
Banerjee, A.; Merugu, S.; Dhillon, I.S.; Ghosh, J.
\newblock Clustering with {B}regman Divergences.
\newblock {\em Journal of Machine Learning Research} {\bf 2005}, {\em
  6},~1705--1749.

\bibitem[{McC}arthy(1956)]{McCarthy1956}
{McC}arthy, J.
\newblock Measures of the value of information.
\newblock {\em Proc. Nat. Acad. Sci.} {\bf 1956}, {\em 42},~654--655.

\bibitem[Gneiting and Raftery(2007)]{Gneiting2007}
Gneiting, T.; Raftery, A.E.
\newblock Strictly Proper Scoring Rules, Prediction, and Estimation.
\newblock {\em Journal of the American Statistical Association} {\bf 2007},
  {\em 102},~359--378,
  \href{http://xxx.lanl.gov/abs/http://dx.doi.org/10.1198/016214506000001437}{{\normalfont
  [http://dx.doi.org/10.1198/016214506000001437]}}.

\bibitem[Ovcharov(2015)]{Ovcharov2015}
Ovcharov, E.Y.
\newblock Proper Scoring Rules and Bregman Divergences.
\newblock Sept. 2015. arXiv:1502.01178.

\bibitem[Gundersen(2011)]{Gundersen2011a}
Gundersen, T.
\newblock An Introduction to the Concept of Exergy and Energy Quality.
\newblock Technical report, Department of Energy and Process Engineering,
  Norwegian University of Science and Technology, Trondheim, Norway,  2011.
\newblock
  http://www.ivt.ntnu.no/ept/fag/tep4120/innhold/Exergy

\bibitem[Harremo{\"e}s(1993)]{Harremoes1993}
Harremo{\"e}s, P.
\newblock {\em Time and Conditional Independence}; Vol. 255, {\em
  IMFUFA-tekst}, IMFUFA Roskilde University,  1993.
\newblock Original in Danish entitled Tid og Betinget Uafh{\ae}ngighed. English
  translation partially available.

\bibitem[Kelly(1956)]{Kelly1956}
Kelly, J.L.
\newblock A New Interpretation of Information Rate.
\newblock {\em Bell System Technical Journal} {\bf 1956}, {\em 35},~917--926.

\bibitem[Cover and Thomas(1991)]{Cover1991}
Cover, T.; Thomas, J.A.
\newblock {\em Elements of Information Theory}; Wiley,  1991.

\bibitem[Uhlmann(1970)]{Uhlmann1970}
Uhlmann, A.
\newblock On the {S}hannon Entropy and Related Functionals on Convex Sets.
\newblock {\em Reports on Mathematical Physics} {\bf 1970}, {\em 1},~147--159.

\bibitem[M{\"u}ller-Hermes and Reeb(2015)]{Mueller-Hermes2016}
M{\"u}ller-Hermes, A.; Reeb, D.
\newblock Monotonicity of the Quantum Relative Entropy Under Positive Maps.
\newblock {\em Annales Henri Poincare} {\bf 2015},
  \href{http://xxx.lanl.gov/abs/Sept. 2016. arXiv:1512.06117v2}{{\normalfont
  [Sept. 2016. arXiv:1512.06117v2]}}.

\bibitem[Christandl and M{\"u}ller-Hermes(2016)]{Christandl2016}
Christandl, M.; M{\"u}ller-Hermes, A.
\newblock Relative Entropy Bounds on Quantum, Private and Repeater Capacities.
\newblock April, 2016. arXiv:1604.03448.

\bibitem[Petz(2003)]{Petz2003}
Petz, D.
\newblock Monotonicity of Quantum Relative Entropy Revisited.
\newblock {\em Reviews in Mathematical Physics} {\bf 2003}, {\em 15},~79--91,
  \href{http://xxx.lanl.gov/abs/http://www.worldscientific.com/doi/pdf/10.1142/S0129055X03001576}{{\normalfont
  [http://www.worldscientific.com/doi/pdf/10.1142/S0129055X03001576]}}.

\bibitem[Petz(1988)]{Petz1988}
Petz, D.
\newblock Sufficiency of Channels over von {N}eumann algebras.
\newblock {\em Quart. J. Math. Oxford} {\bf 1988}, {\em 39},~97--108,.

\bibitem[Jen{\v c}ov{\'a} and Petz(2006)]{Jencova2006}
Jen{\v c}ov{\'a}, A.; Petz, D.
\newblock Sufficiency in quantum statistical inference.
\newblock {\em Communications in Mathematical Physics} {\bf 2006}, {\em
  263},~259--276.

\bibitem[Harremo{\"e}s and Tishby(2007)]{Harremoes2007a}
Harremo{\"e}s, P.; Tishby, N.
\newblock The Information Bottleneck Revisited or How to Choose a Good
  Distortion Measure.
\newblock  Proceedings ISIT 2007, Nice. IEEE Information Theory Society,  2007,
  pp. 566--571.

\bibitem[Jiao \em{et~al.}(2014)Jiao, amd Albert~No, Venkat, and
  Weissman]{Jiao2014}
Jiao, J.; amd Albert~No, T.C.; Venkat, K.; Weissman, T.
\newblock Information Measures: the Curious Case of the Binary Alphabet.
\newblock {\em Trans. Inform. Theory} {\bf 2014}, {\em 60},~7616--7626.

\bibitem[Jen{\v c}ov{\'a}(2017)]{Jencova2017}
Jen{\v c}ov{\'a}, A.
\newblock Preservation of a quantum {R}{\'e}nyi relative entropy implies
  existence of a recovery map.
\newblock {\em Journal of Physics A: Mathematical and Theoretical} {\bf 2017},
  {\em 50},~085303.

\bibitem[Tishby \em{et~al.}(1999)Tishby, Pereira, and Bialek]{Tishby1999}
Tishby, N.; Pereira, F.; Bialek, W.
\newblock The information bottleneck method.
\newblock  Proceedings of the 37-th Annual Allerton Conference on
  Communication, Controland Computing,  1999, pp. 368--377.

\bibitem[No and Weissman(2015)]{No2015}
No, A.; Weissman, T.
\newblock Universality of logarithmic loss in lossy compression.
\newblock  2015 IEEE International Symposium on Information Theory (ISIT),
  2015, pp. 2166--2170.

\bibitem[Dawid \em{et~al.}(2012)Dawid, Lauritzen, and Perry]{Dawid2012}
Dawid, A.P.; Lauritzen, S.; Perry, M.
\newblock Proper local scoring rules on discrete sample spaces.
\newblock {\em The Annals of Statistics} {\bf 2012}, {\em 40},~593--603.

\bibitem[Bernardo(1978)]{Bernardo1978}
Bernardo, J.M.
\newblock Expected Information as Expected Utility.
\newblock {\em The Annals of Statistics} {\bf 1978}, {\em 7},~686--690.
\newblock Institute of Mathematical Statistics.

\bibitem[Csisz{\'a}r(1991)]{Csiszar1991}
Csisz{\'a}r, I.
\newblock Why least squares and maximum entropy? An axiomatic approach to
  inference for linear inverse problems.
\newblock {\em Ann. Stat.} {\bf 1991}, {\em 19},~2032--2066.

\bibitem[Lieb and Yngvason(1998)]{Lieb1998}
Lieb, E.; Yngvason, J.
\newblock A Guide to Entropy and the Second Law of Thermodynamics.
\newblock {\em Notices of the AMS} {\bf 1998}, {\em 45},~571--581.

\bibitem[Lieb and Yngvason(2010)]{Lieb2010}
Lieb, E.; Yngvason, J., The Mathematics of the Second Law of Thermodynamics.
\newblock In {\em Visions in Mathematics}; Alon, N.; Bourgain, J.; Connes, A.;
  Gromov, M.; Milman, V., Eds.; Birkh{\"a}user Basel,  2010; pp. 334--358.

\bibitem[Pitrik and Virosztek(2015)]{Pitrik2015}
Pitrik, J.; Virosztek, D.
\newblock On the Joint Convexity of the Bregman Divergence of Matrices.
\newblock {\em Letters in Mathematical Physics} {\bf 2015}, {\em
  105},~675--692.

\bibitem[Tops{\o}e(2008)]{Topsoe2008}
Tops{\o}e, F.
\newblock Game theoretical optimization inspired by information theory.
\newblock {\em Journal of Global Optimization} {\bf 2008}, {\em 43},~553.

\bibitem[Tops{\o}e(2011)]{Topsoe2011}
Tops{\o}e, F.
\newblock Cognition and Inference in an Abstract Setting.
\newblock  Proceedings WITMSE 2011,  2011.

\bibitem[Amari(2009)]{Amari2009}
Amari, S.I.
\newblock $\alpha$-Divergence Is Unique, Belonging to Both $f$-Divergence and
  Bregman Divergence Classes.
\newblock {\em IEEE Transactions on Information Theory} {\bf 2009}, {\em
  55},~4925--4931.

\end{thebibliography}


\end{document}